\newcommand{\RN}[1]{%
	\textup{\expandafter{\romannumeral#1}}%
}
\tikzset{
	  block/.style = {draw, thick, rectangle, minimum width = 1em},
	 sblock/.style = {draw, thick, rectangle, minimum height = 6em,   minimum width = 6em},
	lgblock/.style = {draw, thick, rectangle, minimum height = 9em,   minimum width = 6em},
	 vblock/.style = {draw, thick, rectangle, minimum height = 3.7em, minimum width = 1.8em},
}
\tikzset{XOR/.style={draw,circle,append after command={
			[shorten >=\pgflinewidth, shorten <=\pgflinewidth,]
			(\tikzlastnode.north) edge (\tikzlastnode.south)
			(\tikzlastnode.east) edge (\tikzlastnode.west)
		}
	}
}
\newcommand\remove[1]{}
\newtheorem{proposition}{Proposition}
\newtheorem{lemma}{Lemma}
\newtheorem{remark}{Remark}
\newtheorem{cnstr}{Construction}
\def\mathbi#1{{\textbf{\textit #1}}}
\newcommand{\bP}{\mathbb{P}}
\newcommand{\cA}{\mathcal{A}}
\newcommand{\cI}{\mathcal{I}}
\newcommand{\cS}{\mathcal{S}}
\newcommand{\cY}{\mathcal{Y}}
\DeclareMathOperator{\polar}{polar}
\DeclareMathOperator{\RM}{RM}
\DeclareMathOperator{\ABS}{ABS}
\DeclareMathOperator{\dB}{dB}
\DeclareMathOperator{\argmax}{argmax}
\DeclareMathOperator{\first}{first}
\DeclareMathOperator{\second}{second}
\newcommand{\mG}{\mathbf{G}}
\newcommand{\mP}{\mathbf{P}}
\newcommand{\mI}{\mathbf{I}}
\newcommand{\tD}{\mathtt{D}}
\newcommand{\tB}{\mathtt{B}}
\newcommand{\numD}{\mathtt{N_D}}
\newcommand{\numB}{\mathtt{N_B}}
\newcommand{\tP}{\mathtt{P}}
\newcommand{\tR}{\mathtt{R}}
\newcommand{\tH}{\mathtt{H}}
\newcommand{\barP}{\mathtt{\bar{P}}}
\newcommand{\barR}{\mathtt{\bar{R}}}
\newcommand{\barH}{\mathtt{\bar{H}}}
\newcommand{\score}{\mathtt{score}}
\DeclareMathOperator{\prob}{prob}
\DeclareMathOperator{\temppointer}{temp}
\DeclareRobustCommand*\triangledowndot{%
	\ifmmode
	{\mathpalette\triangleda@t\triangleda@@t}%
	\else
	\triangleda@t\@empty\triangleda@@t
	\fi
}
\def\triangleda@t#1#2{#2{#1}}
\def\triangleda@@t#1{%
	\setbox0=\hbox{$\m@th#1\triangledown$}
	\ooalign{\hbox to \wd0{\hss$\m@th#1 {\cdot}$\hss}\crcr\box0\crcr}
}
\DeclareRobustCommand*\lozengedot{%
	\ifmmode
	{\mathpalette\triangledb@t\triangledb@@t}%
	\else
	\triangledb@t\@empty\triangledb@@t
	\fi
}
\def\triangledb@t#1#2{#2{#1}}
\def\triangledb@@t#1{%
	\setbox0=\hbox{$\m@th#1\lozenge$}
	\ooalign{\hbox to \wd0{\hss$\m@th#1 \cdot$\hss}\crcr\box0\crcr}
}
\DeclareRobustCommand*\vartriangledot{%
	\ifmmode
	{\mathpalette\triangledc@t\triangledc@@t}%
	\else
	\triangledc@t\@empty\triangledc@@t
	\fi
}
\def\triangledc@t#1#2{#2{#1}}
\def\triangledc@@t#1{%
	\setbox0=\hbox{$\m@th#1\vartriangle$}
	\ooalign{\hbox to \wd0{\hss$\m@th#1 \cdot$\hss}\crcr\box0\crcr}
}
\newcommand{\oria}{\triangledown}
\newcommand{\orib}{\lozenge}
\newcommand{\oric}{\vartriangle}
\newcommand{\swpa}{\blacktriangledown}
\newcommand{\swpb}{\blacklozenge}
\newcommand{\swpc}{\blacktriangle}
\begin{document}
\title{Adjacent-Bits-Swapped Polar codes: A new code construction to speed up polarization}

\author{Guodong Li \and  \hspace*{.5in} Min Ye  \and  \hspace*{.5in}  Sihuang Hu}

\maketitle
{\renewcommand{\thefootnote}{}
	\footnotetext{
		\vspace{-.2in}

		\noindent\rule{1.5in}{.4pt}

		Research partially funded by National Key R\&D Program of China under Grant No. 2021YFA1001000, National Natural Science Foundation of China under Grant No. 12001322, Shandong Provincial Natural Science Foundation under Grant No. ZR202010220025, and a Taishan scholar program of Shandong Province.
		This paper was presented in part at the 2022 IEEE International Symposium on Information Theory~\cite{Li2022ISIT}.

		Guodong Li is with School of Cyber Science and Technology, Shandong University, Qingdao, Shandong, 266237, China.
		Email: guodongli@mail.sdu.edu.cn

		Min Ye is with Tsinghua-Berkeley Shenzhen Institute, Tsinghua Shenzhen International Graduate School, Shenzhen 518055, China.
		Email: yeemmi@gmail.com

		Sihuang Hu is with  Key Laboratory of Cryptologic Technology and Information Security, Ministry of Education, Shandong University, Qingdao, Shandong, 266237, China and School of Cyber Science and Technology, Shandong University, Qingdao, Shandong, 266237, China.
		Email: husihuang@sdu.edu.cn
	}
}

\renewcommand{\thefootnote}{\arabic{footnote}}
\setcounter{footnote}{0}

\begin{abstract}
	The construction of polar codes with code length $n=2^m$ involves $m$ layers of polar transforms.
	In this paper, we observe that after each layer of polar transforms, one can swap certain pairs of adjacent bits to accelerate the polarization process.
	More precisely, if the previous bit is more reliable than its next bit under the successive decoder, then switching the decoding order of these two adjacent bits will make the reliable bit even more reliable and the noisy bit even noisier.

	Based on this observation, we propose a new family of codes called the Adjacent-Bits-Swapped (ABS) polar codes.
	We add a permutation layer after each polar transform layer in the construction of the ABS polar codes.
	In order to choose which pairs of adjacent bits to swap in the permutation layers, we rely on a new polar transform that combines two independent channels with $4$-ary inputs.
	This new polar transform allows us to track the evolution of every pair of adjacent bits through different layers of polar transforms, and it also plays an essential role in the successive cancellation list (SCL) decoder for the ABS polar codes.
	Extensive simulation results show that ABS polar codes consistently outperform standard polar codes by $0.15\dB$---$0.3\dB$ when we use CRC-aided SCL decoder with list size $32$ for both codes.
	The implementations of all the algorithms in this paper are available at  \texttt{https://github.com/PlumJelly/ABS-Polar}
\end{abstract}

\section{Introduction} \label{sect:introduction}

Polar codes and Reed-Muller (RM) codes are two closely related code families in the sense that their generator matrices are formed of rows from the same square matrix.
Although RM codes were discovered several decades earlier than polar codes, the capacity-achieving property of RM codes was established very recently.
Specifically, polar codes were proposed by Ar{\i}kan in \cite{Arikan09} and were shown to achieve capacity on all binary memoryless symmetric (BMS) channels in the same paper.
In contrast, RM codes were proposed back in the 1950s \cite{Reed54,Muller54}, but the question of whether RM codes achieve capacity remained open for more than 60 years until the recent breakthroughs.
It was shown in \cite{Kudekar17} that RM codes achieve capacity on binary erasure channels (BEC) under the block-MAP decoder.
More recently, Reeves and Pfister proved that RM codes achieve capacity on all BMS channels under the bit-MAP decoder \cite{Reeves21}.

While both code families achieve capacity of BMS channels, simulation results \cite{Mondelli14,Ye20} and theoretical analysis \cite{Hassani14,Hassani18} suggest that RM codes have better finite-length performance than polar codes.
It was conjectured in \cite{AY20} that this is because RM codes polarize even faster than polar codes.
More precisely, in polar coding framework, we multiply a message vector consisting of $n=2^m$ message bits with the matrix $\mathbf{G}_n^{\polar}=(\mathbf{G}_2^{\polar})^{\otimes m}$ and transmit the resulting codeword vector through a BMS channel.
Here $\mathbf{G}_2^{\polar}=\begin{bmatrix}
		1 & 0 \\
		1 & 1
	\end{bmatrix}$, and $\otimes$ is the Kronecker product.
The message bits are divided into information bits and frozen bits according to their reliability under the successive decoder.
This polar coding framework can also be used to analyze RM codes.
To that end, we replace the recursive relation $\mathbf{G}_{n}^{\polar}=\mathbf{G}_{n/2}^{\polar} \otimes \mathbf{G}_2^{\polar}$ in the standard polar code construction with $\mathbf{G}_{n}^{\RM}= \mathbf{P}_n^{\RM} (\mathbf{G}_{n/2}^{\RM} \otimes \mathbf{G}_2^{\polar}).$
Here $\mathbf{P}_n^{\RM}$ is an $n\times n$ permutation matrix which reorders the rows of $\mathbf{G}_{n/2}^{\RM} \otimes \mathbf{G}_2^{\polar}$ according to their Hamming weights.
It was conjectured in \cite{AY20} that for the matrix $\mathbf{G}_{n}^{\RM}$, the reliability of each message bit under the successive decoder becomes completely ordered, i.e., each message bit is always more reliable than its previous bit.
If this conjecture were true, then one could show that RM codes polarize faster than polar codes, which leads to a better finite-length performance.

Inspired by the recursive relation $\mathbf{G}_{n}^{\RM}= \mathbf{P}_n^{\RM} (\mathbf{G}_{n/2}^{\RM} \otimes \mathbf{G}_2^{\polar})$ of RM codes, we propose a new family of codes called the Adjacent-Bits-Swapped (ABS) polar codes.
In the construction of ABS polar codes, we use a similar recursive relation
$
	\mathbf{G}_{n}^{\ABS}= \mathbf{P}_n^{\ABS} (\mathbf{G}_{n/2}^{\ABS} \otimes \mathbf{G}_2^{\polar}).
$
The matrix $\mathbf{P}_n^{\ABS}$ is an $n\times n$ permutation matrix which swaps two adjacent rows if the two corresponding message bits are ``unordered", i.e., if the previous bit is more reliable than its next bit under the successive decoder.
Swapping such two adjacent rows always accelerates polarization because it makes the reliable bit even more reliable and the noisy bit even noisier.
While the permutation matrix $\mathbf{P}_n^{\RM}$ for RM codes involves a large number of swaps of adjacent rows, we limit the number of swaps in $\mathbf{P}_n^{\ABS}$ so that the overall structure of ABS polar codes is still close to standard polar codes.
In this way, we are able to devise a modified successive cancellation list (SCL) decoder to efficiently decode ABS polar codes.

Recall that both the code construction and the decoding algorithm of standard polar codes rely on a recursive relation between the bit-channels, which are the channels mapping from a message bit to the previous message bits and all the channel outputs.
Since we swap certain pairs of adjacent bits in the ABS polar code construction, there is no explicit recursive relation between bit-channels.
Instead, we introduce the notion of adjacent-bits-channels, which are $4$-ary-input channels mapping from two adjacent message bits to the previous message bits and all the channel outputs.
As the main technical contribution of this paper, we derive a recursive relation between the adjacent-bits-channels.
This recursive relation serves as the foundation of efficient code construction and decoding algorithms for ABS polar codes.

	{
		We provide two sets of simulation results to compare the performance of ABS polar codes and standard polar codes. First, we empirically calculate the scaling exponents of ABS polar codes and standard polar codes over a binary erasure channel with erasure probability $0.5$. Our calculations show that the scaling exponent of ABS polar codes is $3.37$ while the scaling exponent of standard polar codes is $3.65$, confirming that the polarization of ABS polar codes is indeed faster than standard polar codes. Second, we conduct extensive simulations over the binary-input AWGN channels for various choices of parameters.}
In particular, we have tested the performance for code length $256,512,1024,2048$.
For each choice of code length, we test $3$ code rates $0.3,0.5$ and $0.7$.
When we set the list size to be $32$ for the CRC-aided SCL decoders of both code families, ABS polar codes consistently outperform standard polar codes by $0.15\dB$---$0.3\dB$, but the decoding time of ABS polar decoder is longer than that of standard polar codes by roughly $60\%$.
If we use list size $20$ for ABS polar codes and keep the list size to be $32$ for standard polar codes, then the decoding time is more or less the same for these two codes, and ABS polar codes still outperform standard polar codes for most choices of parameters.
In this case, the improvement over standard polar codes is up to $0.15\dB$.

The organization of this paper is as follows: In Section~\ref{sect:main_idea}, we describe the main idea behind the ABS polar code construction and explain why ABS polar codes polarize faster than standard polar codes.
In Section~\ref{sect:construction}, we derive the new recursive relation between the adjacent-bits-channels and use this recursive relation to construct ABS polar codes.
In Section~\ref{sect:encoding}, we present an efficient encoding algorithm for ABS polar codes.
In Section~\ref{sect:SCL}, we present the new SCL decoder for ABS polar codes.
Finally, in Section~\ref{sect:simu}, we provide the simulation results.

\section{Main idea of the new code construction} \label{sect:main_idea}

\subsection{The polarization framework} \label{sect:polarization_framework}
Let $U_1,U_2,\dots,U_n$ be $n$ i.i.d.
Bernoulli-$1/2$ random variables.
Let $\mathbf{G}_n$ be an $n\times n$ invertible matrix over the binary field.
Define $(X_1,X_2,\dots,X_n)=(U_1,U_2,\dots,U_n) \mathbf{G}_n$.
We transmit each $X_i$ through a BMS channel $W$ and denote the channel output vector as $(Y_1,Y_2,\dots,Y_n)$.
In this framework, $(U_1,U_2,\dots,U_n)$ is the message vector, $\mathbf{G}_n$ is the encoding matrix, and $(X_1,X_2,\dots,X_n)$ is the codeword vector.
We use a successive decoder to recover the message vector from the channel output vector.
More precisely, we decode the coordinates in the message vector one by one from $U_1$ to $U_n$.
When decoding $U_i$, the successive decoder knows the values of all the previous message bits $U_1,\dots,U_{i-1}$ and all the channel outputs $Y_1,\dots,Y_n$.
Note that the codeword vector $(X_1,X_2,\dots,X_n)$ depends on the matrix $\mathbf{G}_n$, and the channel output vector $(Y_1,Y_2,\dots,Y_n)$ depends on both the matrix $\mathbf{G}_n$ and the BMS channel $W$, although we omit the dependence from their notation.
Next we define
\begin{equation} \label{eq:conditional_entropy}
	H_i(\mathbf{G}_n,W) := H(U_i | U_1,\dots,U_{i-1},Y_1,\dots,Y_n)
	\text{~for~} 1\le i\le n,
\end{equation}
where $H(\cdot | \cdot)$ is the conditional entropy.
$H_i(\mathbf{G}_n,W)$ measures the reliability of the $i$th message bit under the successive decoder when we use the encoding matrix $\mathbf{G}_n$ and transmit the codeword vector through the BMS channel $W$.
Since $\mathbf{G}_n$ is an invertible matrix, we have
\begin{equation} \label{eq:chain_rule}
	H_1(\mathbf{G}_n,W)+H_2(\mathbf{G}_n,W)+\dots+H_n(\mathbf{G}_n,W)=n(1-I(W)),
\end{equation}
where $I(W)$ is the channel capacity of $W$.
We say that a family of matrices $\{\mathbf{G}_n\}$ is polarizing over a BMS channel $W$ if $H_i(\mathbf{G}_n,W)$ is close to either $0$ or $1$ for almost all $i\in\{1,2,\dots,n\}$ as $n\to\infty$.
In order to quantify the polarization level of a given encoding matrix $\mathbf{G}_n$ over a BMS channel $W$, we define
$$
	\Gamma(\mathbf{G}_n, W)= \frac{1}{n}\sum_{i=1}^n H_i(\mathbf{G}_n,W) (1-H_i(\mathbf{G}_n,W) ).
$$
According to the definition above, a family of matrices $\{\mathbf{G}_n\}$ is polarizing over $W$ if and only if $\Gamma(\mathbf{G}_n, W)\to 0$ as $n\to\infty$.
A family of polarizing matrix $\{\mathbf{G}_n\}$ over a BMS channel $W$ allows us to construct capacity-achieving codes as follows: We include the $i$th row of $\mathbf{G}_{n}$ in the generator matrix if and only if $H_i(\mathbf{G}_n,W)$ is very close to $0$.
The condition $H_i(\mathbf{G}_n,W)\approx 0$ guarantees that the decoding error of the constructed codes approaches $0$ under the successive decoder.
We can further use \eqref{eq:chain_rule} to show that the code rate of the constructed codes approaches $I(W)$.
To see this, we first assume the extreme case where $\Gamma(\mathbf{G}_n, W) = 0$, i.e., $H_i(\mathbf{G}_n,W)$ is either $0$ or $1$ for all $1\le i\le n$.
Then by \eqref{eq:chain_rule} we know that the dimension of the constructed polar code is precisely $nI(W)$, i.e., the code rate is $R=I(W)$.
For the realistic case of $\Gamma(\mathbf{G}_n, W) \to 0$ as $n \to \infty$, one can show that the gap to capacity $I(W)-R$ also decreases to $0$ as $n\to\infty$.
Moreover, the smaller $\Gamma(\mathbf{G}_n, W)$ is, the smaller gap to capacity we have.

In the standard polar code construction \cite{Arikan09}, we construct the family of matrices $\{\mathbf{G}_{2^m}^{\polar}\}_{m=1}^\infty$ recursively using the following relation:
$$
	\mathbf{G}_2^{\polar} := \begin{bmatrix}
		1 & 0 \\
		1 & 1
	\end{bmatrix}
	\text{~and~~}
	\mathbf{G}_n^{\polar}= \mathbf{G}_{n/2}^{\polar} \otimes \mathbf{G}_2^{\polar}
	\text{~for~} n=2^m \ge 4,
$$
where $\otimes$ is the Kronecker product and $m>1$ is a positive integer.
It was shown in \cite{Arikan09} that $\{\mathbf{G}_{2^m}^{\polar}\}_{m=1}^\infty$ is polarizing over every BMS channel $W$, and the codes constructed from these matrices can be efficiently decoded.
In this paper, our objective is to construct another family of polarizing matrices $\{\mathbf{G}_{2^m}^{\ABS}\}_{m=1}^\infty$ satisfying the following two conditions: (1) $\Gamma(\mathbf{G}_{2^m}^{\ABS}, W) < \Gamma(\mathbf{G}_{2^m}^{\polar}, W)$, i.e., the matrices $\mathbf{G}_{2^m}^{\ABS}$ polarize even faster than $\mathbf{G}_{2^m}^{\polar}$; (2) the codes constructed from $\{\mathbf{G}_{2^m}^{\ABS}\}_{m=1}^\infty$ can also be efficiently decoded.
The first condition allows us to construct a new family of codes with smaller gap to capacity and better finite-length performance than standard polar codes.

\subsection{Swapping unordered adjacent bits accelerates polarization}\label{sect:reason}
The key observation in the standard polar code construction is that $\Gamma(\mathbf{G}_n, W)$ decreases as we perform the Kronecker product $\mathbf{G}_{2n} = \mathbf{G}_{n} \otimes \mathbf{G}_2^{\polar}$.
More precisely, we always have
$$
	\Gamma(\mathbf{G}_{n} \otimes \mathbf{G}_2^{\polar}, W) < \Gamma(\mathbf{G}_n, W)
$$
for every invertible matrix $\mathbf{G}_n$
as long as $I(W)$ is not equal to $0$ or $1$.
Therefore, the Kronecker product $\mathbf{G}_{2n}=\mathbf{G}_n \otimes \mathbf{G}_2^{\polar}$ deepens the polarization at the cost of increasing the code length by a factor of $2$.

In this paper, we observe that there is another method to deepen the polarization without increasing the code length, and this simple observation forms the foundation of our new code construction.
Given a matrix $\mathbf{G}_{n}$ and a BMS channel $W$, we say that two adjacent message bits $U_i$ and $U_{i+1}$ are unordered if $H_i(\mathbf{G}_n,W)\le H_{i+1}(\mathbf{G}_n,W)$.
This inequality means that $U_i$ is more reliable than $U_{i+1}$ under the successive decoder although $U_i$ is decoded before $U_{i+1}$.
Our key observation is that in this case, switching the decoding order of $U_i$ and $U_{i+1}$ deepens the polarization.
Intuitively, this is because switching the decoding order of these two bits makes the reliable bit even more reliable and the noisy bit even noisier.

Note that switching the decoding order of $U_i$ and $U_{i+1}$ is equivalent to swapping the $i$th row and the $(i+1)$th row of $\mathbf{G}_{n}$.
More specifically, let us define a new matrix $\overline{\mathbf{G}}_{n}$ as the matrix obtained from swapping the $i$th row and the $(i+1)$th row of $\mathbf{G}_{n}$ and keeping all the other rows the same as $\mathbf{G}_{n}$.
Following the framework in Section~\ref{sect:polarization_framework}, let $(\overline{U}_1,\overline{U}_2,\dots,\overline{U}_n)$ be the message vector associated with the new matrix $\overline{\mathbf{G}}_{n}$, where $\overline{U}_1,\dots,\overline{U}_n$ are $n$ i.i.d.
Bernoulli-$1/2$ random variables.
Let $(\overline{X}_1,\dots,\overline{X}_n)=(\overline{U}_1,\dots,\overline{U}_n) \overline{\mathbf{G}}_{n}$ be the codeword vector transmitted through the BMS channel $W$ and let $(\overline{Y}_1,\dots,\overline{Y}_n)$ be the corresponding channel output vector.
By definition \eqref{eq:conditional_entropy}, we have
$$
	H_j(\overline{\mathbf{G}}_n,W)= H(\overline{U}_j | \overline{U}_1,\dots,\overline{U}_{j-1},\overline{Y}_1,\dots,\overline{Y}_n)
	\text{~for~} 1\le j\le n.
$$
By the relation between the matrices $\overline{\mathbf{G}}_{n}$ and $\mathbf{G}_{n}$, we have
\begin{align}
	    & H_j(\mathbf{G}_n,W)=H_j(\overline{\mathbf{G}}_n,W) \text{~for all~} j\in\{1,2,\dots,n\}\setminus\{i,i+1\},
	\label{eq:d1}                                                                                                                                                  \\
	    & H_i(\mathbf{G}_n,W) = H(\overline{U}_{i+1} | \overline{U}_1,\dots,\overline{U}_{i-1},\overline{Y}_1,\dots,\overline{Y}_n)  \nonumber                     \\
	\ge & H(\overline{U}_{i+1} | \overline{U}_1,\dots,\overline{U}_{i-1},\overline{U}_i,\overline{Y}_1,\dots,\overline{Y}_n) = H_{i+1}(\overline{\mathbf{G}}_n,W),
	\label{eq:d2}                                                                                                                                                  \\
	    & H_{i+1}(\mathbf{G}_n,W) = H(\overline{U}_i | \overline{U}_1,\dots,\overline{U}_{i-1},\overline{U}_{i+1},\overline{Y}_1,\dots,\overline{Y}_n)  \nonumber  \\
	\le & H(\overline{U}_i | \overline{U}_1,\dots,\overline{U}_{i-1},\overline{Y}_1,\dots,\overline{Y}_n) = H_i(\overline{\mathbf{G}}_n,W). \label{eq:d3}
\end{align}
Now suppose that $U_i$ and $U_{i+1}$ are unordered, i.e., $H_i(\mathbf{G}_n,W)\le H_{i+1}(\mathbf{G}_n,W)$.
Combining this inequality with \eqref{eq:d2}--\eqref{eq:d3}, we obtain
\begin{equation} \label{eq:3inq}
	H_{i+1}(\overline{\mathbf{G}}_n,W) \le H_i(\mathbf{G}_n,W)\le H_{i+1}(\mathbf{G}_n,W) \le H_i(\overline{\mathbf{G}}_n,W).
\end{equation}
Moreover,
$$
	H_i(\mathbf{G}_n,W) + H_{i+1}(\mathbf{G}_n,W)
	= H_i(\overline{\mathbf{G}}_n,W) + H_{i+1}(\overline{\mathbf{G}}_n,W) = H(\overline{U}_i, \overline{U}_{i+1} | \overline{U}_1,\dots,\overline{U}_{i-1},\overline{Y}_1,\dots,\overline{Y}_n).
$$
This equality together with \eqref{eq:3inq} implies that
\begin{align*}
	    & (H_i(\mathbf{G}_n,W))^2 + (H_{i+1}(\mathbf{G}_n,W))^2                                                                                                                                       \\
	=   & \frac{1}{2} \Big( \big(H_i(\mathbf{G}_n,W) + H_{i+1}(\mathbf{G}_n,W) \big)^2 + \big(H_i(\mathbf{G}_n,W) - H_{i+1}(\mathbf{G}_n,W) \big)^2 \Big)                                             \\
	\le & \frac{1}{2} \Big( \big(H_i(\overline{\mathbf{G}}_n,W) + H_{i+1}(\overline{\mathbf{G}}_n,W) \big)^2 + \big(H_i(\overline{\mathbf{G}}_n,W) - H_{i+1}(\overline{\mathbf{G}}_n,W) \big)^2 \Big) \\
	=   & (H_i(\overline{\mathbf{G}}_n,W))^2 + (H_{i+1}(\overline{\mathbf{G}}_n,W))^2.
\end{align*}
Therefore,
\begin{align*}
	    & H_i(\overline{\mathbf{G}}_n,W) \big(1-H_i(\overline{\mathbf{G}}_n,W) \big) + H_{i+1}(\overline{\mathbf{G}}_n,W) \big(1- H_{i+1}(\overline{\mathbf{G}}_n,W) \big) \\
	\le & H_i(\mathbf{G}_n,W) \big(1- H_i(\mathbf{G}_n,W) \big) + H_{i+1}(\mathbf{G}_n,W) \big(1- H_{i+1}(\mathbf{G}_n,W) \big).
\end{align*}
Combining this with \eqref{eq:d1}, we conclude that $\Gamma(\overline{\mathbf{G}}_n, W) \le
	\Gamma(\mathbf{G}_n, W)$.
This formally justifies that switching the decoding order of two unordered adjacent bits deepens polarization.

\subsection{Our new code construction and its connection to RM codes}  \label{sect:connection_to_RM}

We view the operation of taking the Kronecker product $\mathbf{G}_{n}^{\polar}=\mathbf{G}_{n/2}^{\polar} \otimes \mathbf{G}_2^{\polar}$ in the standard polar code construction as one layer of polar transform.
Then the construction of a standard polar code with code length $n=2^m$ consists of $m$ consecutive layers of polar transforms.
In light of the discussion in Section~\ref{sect:reason}, we add a permutation layer after each polar transform layer in our ABS polar code construction.
More precisely, we replace the recursive relation $\mathbf{G}_{n}^{\polar}=\mathbf{G}_{n/2}^{\polar} \otimes \mathbf{G}_2^{\polar}$ in the standard polar code construction with
\begin{equation} \label{eq:GnABS}
	\mathbf{G}_{n}^{\ABS}= \mathbf{P}_n^{\ABS} (\mathbf{G}_{n/2}^{\ABS} \otimes \mathbf{G}_2^{\polar}),
\end{equation}
where the matrix $\mathbf{P}_n^{\ABS}$ is an $n\times n$ permutation matrix.
In this case, $\mathbf{G}_{n}^{\ABS}$ is a row permutation of the Kronecker product $\mathbf{G}_{n/2}^{\ABS} \otimes \mathbf{G}_2^{\polar}$.
The permutation associated with $\mathbf{P}_n^{\ABS}$ is a composition of multiple swaps of unordered adjacent bits.
The starting point of the recursive relation \eqref{eq:GnABS} is $\mathbf{G}_1^{\ABS}=[1]$, the identity matrix of size $1\times 1$.

Before we present how to choose $\mathbf{P}_n^{\ABS}$ in \eqref{eq:GnABS}, let us point out an interesting connection between our new code and RM codes.
In fact, RM codes can also be constructed using a similar recursive relation:
\begin{equation}  \label{eq:GnRM}
	\mathbf{G}_{n}^{\RM}= \mathbf{P}_n^{\RM} (\mathbf{G}_{n/2}^{\RM} \otimes \mathbf{G}_2^{\polar}).
\end{equation}
Here $\mathbf{P}_n^{\RM}$ is an $n\times n$ permutation matrix which reorders the rows of $\mathbf{G}_{n/2}^{\RM} \otimes \mathbf{G}_2^{\polar}$ according to their Hamming weights.
In other words, $\mathbf{G}_{n}^{\RM}$ is a row permutation of $\mathbf{G}_{n/2}^{\RM} \otimes \mathbf{G}_2^{\polar}$, and the Hamming weights of the rows of $\mathbf{G}_{n}^{\RM}$ are monotonically increasing from the first row to the last row.
It was shown in \cite{AY20} that the family of matrices $\{\mathbf{G}_{n}^{\RM}\}$ is polarizing over every BMS channel $W$, i.e., $H_i(\mathbf{G}_n^{\RM},W)$ is close to either $0$ or $1$ for almost all $i\in\{1,2,\dots,n\}$ as $n\to\infty$.
It was further conjectured\footnote{The authors of \cite{AY20} provided some theoretical analysis and simulation results to support this conjecture.} in \cite{AY20} that $\{H_i(\mathbf{G}_n^{\RM},W)\}_{i=1}^n$ is decreasing for every BMS channel $W$, i.e.,
\begin{equation} \label{eq:conjecture}
	H_1(\mathbf{G}_n^{\RM},W) \ge H_2(\mathbf{G}_n^{\RM},W) \ge \dots \ge H_n(\mathbf{G}_n^{\RM},W).
\end{equation}
If this conjecture were true, then we can immediately conclude that RM codes achieve capacity of BMS channels.
Indeed, RM codes choose rows with heaviest Hamming weight in $\mathbf{G}_{n}^{\RM}$ to form the generator matrices.
Since the rows of $\mathbf{G}_{n}^{\RM}$ are sorted according to their Hamming weights, RM codes simply pick the rows with large row indices.
By \eqref{eq:conjecture}, these rows correspond to the most reliable bits under the successive decoder.
Moreover, since almost all the conditional entropy in \eqref{eq:conjecture} are close to either $0$ or $1$, the conditional entropy of the most reliable bits must be close to $0$, and the number of such bits is close to $nI(W)$ as $n\to\infty$.

Moreover, the conjecture \eqref{eq:conjecture} indicates that RM codes do not have any unordered adjacent bits. According to the discussion in Section~\ref{sect:reason}, this  suggests that RM codes have fast polarization. In fact, it is widely believed that RM codes have a smaller gap to capacity than polar codes with the same parameters, which was suggested to be the case by both theoretical analysis \cite{Hassani14,Hassani18} and simulation results \cite{Mondelli14,Ye20}.

Although RM codes are believed to have better performance than polar codes under the Maximum Likelihood (ML) decoder, the problem of designing an efficient decoder whose performance is almost the same as the ML decoder still remains open for RM codes, except for a certain range of parameters.
In particular, the performance of currently known decoding algorithms for RM codes \cite{Dumer06,Ye20,Lian20,Geiselhart21} is close to the ML decoder only in the short code length or the low code rate regimes.
In contrast, the performance of the successive cancellation list (SCL) decoder with list size $32$ is almost the same as the ML decoder for polar codes.

Our new code construction is an intermediate point between RM codes and polar codes.
On the one hand, the recursive relation \eqref{eq:GnABS} of our new code is similar to the recursion \eqref{eq:GnRM} of RM codes in the sense that both codes add a permutation layer after each polar transform layer to accelerate polarization.
On the other hand, we only use a relatively small number of swaps in the permutation matrix $\mathbf{P}_n^{\ABS}$ while the permutation matrix $\mathbf{P}_n^{\RM}$ for RM codes involves a large number of swaps.
As a consequence, the overall structure of our new code is still close to the standard polar codes, and it allows a modified SCL decoder to efficiently decode.

In order to explain how to choose $\mathbf{P}_n^{\ABS}$ in \eqref{eq:GnABS}, we introduce a sequence of permutation matrices. For $1\le i\le n-1$, we use $\mathbf{S}_n^{(i)}$ to denote the $n\times n$ permutation matrix that swaps $i$ and $i+1$ while mapping all the other elements to themselves. More precisely, only $4$ entries of $\mathbf{S}_n^{(i)}$ are different from the identity matrix. These $4$ entries are $\mathbf{S}_n^{(i)}(i,i)=\mathbf{S}_n^{(i)}(i+1,i+1)=0$ and $\mathbf{S}_n^{(i)}(i,i+1)=\mathbf{S}_n^{(i)}(i+1,i)=1$, where $\mathbf{S}_n^{(i)}(a,b)$ is the entry of $\mathbf{S}_n^{(i)}$ located at the cross of the $a$th row and the $b$th column. The permutation matrix $\mathbf{P}_n^{\ABS}$ can be written as
\begin{equation} \label{eq:pi}
	\mathbf{P}_n^{\ABS} = \prod_{i\in\cI^{(n)}} \mathbf{S}_n^{(i)} ,
\end{equation}
where $\cI^{(n)}$ is a subset of $\{1,2,\dots,n-1\}$. Let us write $\cI^{(n)}=\{i_1,i_2,\dots,i_s\}$, where $s$ is the size of $\cI^{(n)}$. In the ABS polar code construction, we require that
\begin{equation} \label{eq:separated}
	i_2\ge i_1+4, \quad
	i_3\ge i_2+4, \quad
	i_4\ge i_3+4, \quad
	\dots, \quad
	i_s\ge i_{s-1}+4.
\end{equation}
This condition guarantees that the swapped elements are fully separated, and it is the foundation of efficient code construction and efficient decoding for ABS polar codes.
More specifically, the condition \eqref{eq:separated} allows us to efficiently track the evolution of every pair of adjacent bits through different layers of polar transforms in a recursive way.
We will explain the details about this in Section~\ref{sect:construction}.
As a final remark, we note that one needs to choose $m$ permutation matrices $\mathbf{P}_2^{\ABS},\mathbf{P}_4^{\ABS},\mathbf{P}_8^{\ABS},\dots,\mathbf{P}_n^{\ABS}$ in the construction of an ABS polar code with code length $n=2^m$.

\subsection{Comparison with the large kernel method} \label{sect:cmp_lkm}

The finite-length scaling of polar codes is an important research topic in the polar coding literature \cite{Hassani14, Guruswami15, Mondelli15, Mondelli16}. The ABS polar code construction proposed in this paper is one way to improve the scaling exponent of polar codes. Another extensively-studied method is to use large kernels instead of the Ar{\i}kan kernel $\mathbf{G}_2^{\polar}$ in the polar code construction \cite{Buz2017ISIT, Buz2017, Ye2015, Fazeli21, Wang21, Guruswami22}. In particular, it was shown in \cite{Fazeli21, Wang21, Guruswami22} that when the kernel size goes to infinity, the scaling exponent of polar codes approaches the optimal value $2$.

Compared to the ABS polar code construction, the large kernel method has the following three disadvantages: (i) The choice of code length is more restrictive. The code length of ABS polar codes can be any power of $2$, but the code length of polar codes with large kernels must be a power of the kernel size $\ell$, where $\ell$ is larger than $2$. Some typical choices of $\ell$ are $4,8,16$. (ii) The code construction is also more restrictive. In the original large kernel method, the same kernel is used repetitively throughout the whole code construction. In contrast, we use different permutation matrices $\mathbf{P}_2^{\ABS},\mathbf{P}_4^{\ABS},\mathbf{P}_8^{\ABS},\dots,\mathbf{P}_n^{\ABS}$ in different layers. (iii) The decoding complexity is much larger. For $\ell\times\ell$ kernels, the decoding time increases by a factor of $2^\ell$ compared to standard polar codes. In contrast, the decoding time of ABS polar codes only increases by $60\%$ compared to standard polar codes, as indicated by the simulation results in Section~\ref{sect:simu}.

Among the research on polar codes with large kernels, the permuted kernels and the permuted successive cancellation (PSC) decoder proposed in \cite{Buz2017ISIT, Buz2017} are particularly relevant to our paper. More specifically, \cite{Buz2017ISIT, Buz2017} proposed to use permuted kernels, whose size $\ell$ is a power of $2$. As suggested by its name, the permuted kernel is a row permutation of $\mathbf{G}_{\ell}^{\polar}$. This is similar in nature to the ABS polar code construction because the encoding matrix $\mathbf{G}_n^{\ABS}$ of ABS polar codes is also a row permutation of $\mathbf{G}_n^{\polar}$. Moreover, \cite{Buz2017ISIT, Buz2017} further proposed the PSC decoder to efficiently decode polar codes with permuted kernels. The PSC decoder together with the permuted kernels significantly reduces the decoding time compared to the standard SC decoder for polar codes with large kernels. In other words, the PSC decoder and permuted kernels mitigate the third disadvantage above. However, the first two disadvantages still remain, i.e., the choice of code length and the code construction are still more restrictive than ABS polar codes.

\section{Code construction of ABS polar codes} \label{sect:construction}

The construction of ABS polar codes with code length $n=2^m$ consists of two main steps.
The first step is to pick the permutation matrices $\mathbf{P}_2^{\ABS},\mathbf{P}_4^{\ABS},\mathbf{P}_8^{\ABS},\dots,\mathbf{P}_n^{\ABS}$ in the recursive relation \eqref{eq:GnABS}, as mentioned at the end of the previous section.
After picking these permutation matrices, the second step is to find which bits are information bits and which bits are frozen bits.
Although the second step is also needed in the construction of standard polar codes \cite{Arikan09,Tal13}, the techniques used in this paper are quite different.
In the standard polar code construction, we can directly track the evolution of bit-channels in a recursive way.
However, in the ABS polar code construction, it is not possible to identify a recursive relation between bit-channels directly because we swap certain pairs of adjacent bits in the code construction.
Instead, we find a recursive relation between pairs of adjacent bits from different layers of polar transforms.
After obtaining the joint distribution of every pair of adjacent bits, we are able to calculate the transition probability of the bit-channels and locate the information bits and the frozen bits.

The organization of this section is as follows: In Section~\ref{sect:2x2transform}, we first recall how to track the evolution of bit-channels in standard polar codes using the basic $2\times 2$ transform.
In Section~\ref{sect:DB}, we introduce a new transform and use it to establish a recursive relation between pairs of adjacent bits for standard polar codes.
The purpose of Section~\ref{sect:DB} is to illustrate the application of the new transform in a familiar setting.
In Section~\ref{sect:SDB}, we use the new transform to track the evolution of adjacent bits in the ABS polar codes.
The result in Section~\ref{sect:SDB} accomplishes the second step  of the ABS polar code construction, i.e., it allows us to locate the information bits and the frozen bits when the permutation matrices $\mathbf{P}_2^{\ABS},\mathbf{P}_4^{\ABS},\mathbf{P}_8^{\ABS},\dots,\mathbf{P}_n^{\ABS}$ in the recursive relation \eqref{eq:GnABS} are known.
Next, in Section~\ref{sect:cons_PnABS}, we explain how to pick these permutation matrices in the ABS polar code construction.
Recall that the quantization operation is needed in the standard polar code construction \cite{Tal13} because the output alphabet size of the bit-channels grows exponentially with $n$.
The same issue also arises in the ABS polar code construction, and we will discuss this in Section~\ref{sect:quantization}.
Finally, we put everything together and summarize the code construction algorithm for ABS polar codes in Section~\ref{sect:summary_cons}.

\subsection{Tracking the evolution of bit-channels in standard polar codes using the $2\times 2$ transform} \label{sect:2x2transform}

Let us first recall the $2\times 2$ transform in the standard polar code construction.

\begin{figure}[ht]
	\centering
	\begin{subfigure}{.45\linewidth}
		\centering
		\begin{tikzpicture}
			\draw
			node at (0,10.5) [] (u1)  {$U_1$}
			node at (0,9) [] (u2)  {$U_2$}
			node at (1.5,10.5) [XOR,scale=1.2] (x1) {}
			node at (2.5,10.5) [] (xx1)  {$X_1$}
			node at (2.5,9) [] (xx2)  {$X_2$}
			node at (3.8,10.5) [block] (v1)  {$W$}
			node at (3.8,9) [block] (v2)  {$W$}
			node at (5.5,10.5) [] (y1)  {$Y_1$}
			node at (5.5,9) [] (y2)  {$Y_2$};
			\draw[fill] (1.5, 9) circle (.6ex);
			\draw[very thick,->](u1) -- node {}(x1);
			\draw[very thick,->](u2) -| node {}(x1);
			\draw[very thick,->](x1) -- (xx1);
			\draw[very thick,->](u2) -- (xx2);
			\draw[very thick,->](xx1) -- (v1);
			\draw[very thick,->](xx2) -- (v2);
			\draw[very thick,->](v1) -- node {}(y1);
			\draw[very thick,->](v2) -- node {}(y2);
		\end{tikzpicture}
		\caption{Multiply i.i.d.
		Bernoulli-$1/2$ random variables $(U_1,U_2)$ with the matrix $\mathbf{G}_2^{\polar}$, and then transmit the results through two copies of $W$.
		Under the successive decoder, this transforms two copies of $W$ into $W^-:U_1\to Y_1,Y_2$ and $W^+:U_2\to U_1,Y_1,Y_2$.}
		\label{fig:polar_transform_a}
	\end{subfigure}
	\hspace*{0.2in}
	\begin{subfigure}{.45\linewidth}
		\centering
		\begin{tikzpicture}
			\draw
			node at (0,1.5) [] (u1)  {$W$}
			node at (0,0) [] (u2)  {$W$}
			node at (1.2,1.5) [] (v1)  {}
			node at (1.2,0) [] (v2)  {}
			node at (3.7,1.5) [] (x1)  {}
			node at (3.7,0) [] (x2)  {}
			node at (3.7,1.5) [] (x1)  {}
			node at (3.7,0) [] (x2)  {}
			node at (5.2,1.5) [] (y1)  {$W^-$}
			node at (5.2,0) [] (y2)  {$W^+$}
			node at (2.45,0.75) [text width=2cm,align=center] {$2\times 2$\\Transform};
			\draw[thick] (1.2,-0.5) rectangle (3.7,2);
			\draw[very thick,->](u1) -- node {}(v1);
			\draw[very thick,->](u2) -- node {}(v2);
			\draw[very thick,->](x1) -- node[above] {``$-$"}(y1);
			\draw[very thick,->](x2) -- node [above] {``$+$"} (y2);
		\end{tikzpicture}
		\caption{We take two independent copies of $W$ as inputs.
			After the transform, we obtain a ``worse" channel $W^-:U_1\to Y_1,Y_2$ and a ``better" channel $W^+:U_2\to U_1,Y_1,Y_2$.}
	\end{subfigure}
	\caption{The $2\times 2$ basic polar transform}
	\label{fig:polar_transform}
\end{figure}
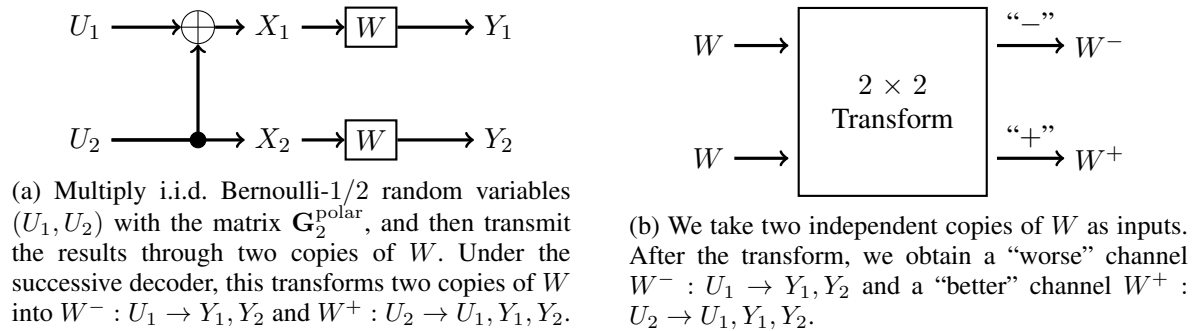
Given a BMS channel $W:\{0,1\}\to\cY$, the transition probabilities of $W^-:\{0,1\}\to\cY^2$ and $W^+:\{0,1\}\to\{0,1\}\times\cY^2$ in Fig.~\ref{fig:polar_transform} are given by
\begin{equation} \label{eq:polar_transform}
	\begin{aligned}
		 & W^-(y_1,y_2|u_1)  = \frac{1}{2} \sum_{u_2\in\{0,1\}} W(y_1|u_1+u_2) W(y_2|u_2) \quad \text{for~} u_1\in\{0,1\} \text{~and~} y_1,y_2\in\cY, \\
		 & W^+(u_1,y_1,y_2|u_2)  = \frac{1}{2} W(y_1|u_1+u_2) W(y_2|u_2) \quad \text{for~} u_1,u_2\in\{0,1\} \text{~and~} y_1,y_2\in\cY.
	\end{aligned}
\end{equation}
The basic $2\times 2$ transform plays a fundamental role in the standard polar code construction because it allows us to efficiently track the evolution of bit-channels in a recursive way.
More specifically, the bit-channels induced by the matrix $\mathbf{G}_n^{\polar}$ are defined in Fig.~\ref{fig:bit_channels_polar} below.
It is well known that the bit-channels associated with $\mathbf{G}_n^{\polar}$ and the bit-channels associated with $\mathbf{G}_{n/2}^{\polar}$ satisfy the following recursive relation:
\begin{equation} \label{eq:recur_bit_channels}
	W_{2i-1}^{(n)}=(W_i^{(n/2)})^-  \text{~~and~~}
	W_{2i}^{(n)}=(W_i^{(n/2)})^+   \text{~~for~} 1\le i\le n/2.
\end{equation}
Both the code construction and the decoding algorithm of standard polar codes rely on this recursive relation.

\begin{figure}[ht]
	\centering
	\begin{tikzpicture}
		\node [block, align=center] at (3,1.6)  (y1) { $U_1$ \\[0.5em]  $U_2$  \\[0.5em]  \vdots \\[0.5em]  $U_n$ };
		\node [sblock, align=center] at (5,1.6)  (y2) {$\mathbf{G}_n^{\polar}$};
		\node [block, align=center] at (7,1.6)  (y3) { $X_1$ \\[0.5em] $X_2$  \\[0.5em] \vdots  \\[0.5em]  $X_n$ };
		\node [block] at (8.5, 2.7) (w1) {$W$};
		\node [block] at (8.5, 2) (w2) {$W$};
		\node  at (8.5, 1.3)  {\vdots};
		\node [block] at (8.5, 0.6) (w3) {$W$};

		\node at (10, 2.7) (z1) {$Y_1$};
		\node at (10, 2) (z2) {$Y_2$};
		\node  at (10, 1.3) {\vdots};
		\node at (10, 0.6) (z3) {$Y_n$};

		\draw[->,thick] (y1)--(y2);
		\draw[->,thick] (y2)--(y3);

		\draw[->,thick] (w1)--(z1);
		\draw[->,thick] (w2)--(z2);
		\draw[->,thick] (w3)--(z3);

		\draw[->,thick] (7.4, 2.7)--(w1);
		\draw[->,thick] (7.4, 2)--(w2);
		\draw[->,thick] (7.4, 0.6)--(w3);

		\node at (5, -0.4) (p1) {$(X_1,\dots,X_n)=(U_1,\dots,U_n) \mathbf{G}_n^{\polar}$};

		\node at (15,1.6) [align=left] {$W_1^{(n)}:U_1\to Y_1,\dots,Y_n$ \\[3pt]
		$W_2^{(n)}:U_2\to U_1,Y_1,\dots,Y_n$ \\[3pt]
		$W_3^{(n)}:U_3\to U_1,U_2,Y_1,\dots,Y_n$\\
		\hspace*{1in} \vdots\\
		$W_n^{(n)}:U_n\to U_1,\dots,U_{n-1},Y_1,\dots,Y_n$};

		\node at (15, -0.4) {$n$ bit-channels induced by $\mathbf{G}_n^{\polar}$};
	\end{tikzpicture}
	\caption{$U_1,\dots,U_n$ are $n=2^m$ i.i.d.
	Bernoulli-$1/2$ random variables.
	$(X_1,\dots,X_n)=(U_1,\dots,U_n) \mathbf{G}_n^{\polar}$ is the codeword vector, and $(Y_1,\dots,Y_n)$ is the channel output vector.
	The $n$ bit-channels induced by $\mathbf{G}_n^{\polar}$ are listed on the right side of the figure.
	$W_i^{(n)}$ is the bit-channel mapping from $U_i$ to $U_1,\dots,U_{i-1},Y_1,\dots,Y_n$.}
	\label{fig:bit_channels_polar}
\end{figure}
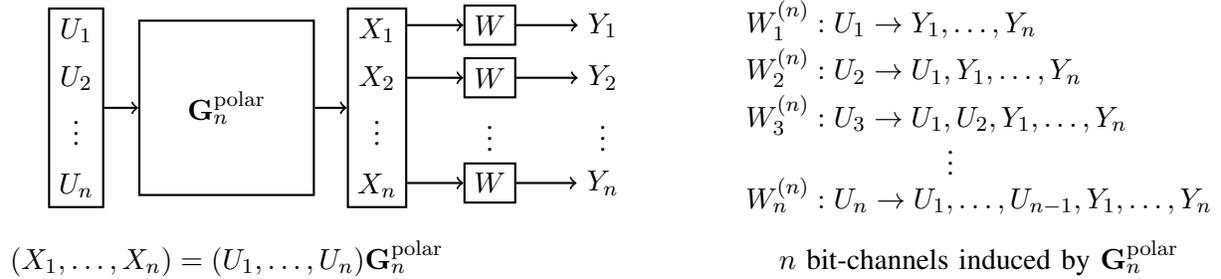

\subsection{Tracking the evolution of adjacent bits in standard polar codes using a new transform} \label{sect:DB}

In the construction of ABS polar codes, we need to track the joint distribution of every pair of adjacent bits, not just the distribution of every single bit given the previous bits and channel outputs.
To that end, we introduce a new transform, named as the Double-Bits (DB) polar transform.
All the channels involved in the DB polar transform have $4$-ary inputs.
To distinguish between binary-input channels and $4$-ary-input channels, we use $W$ to denote the former channels and use $V$ to denote latter channels\footnote{More precisely, $W$ and its variations such as $W^+,W^-,W_i^{(n)}$ are used for binary-input channels; $V$ and its variations such as $V^{\triangledown},V^{\lozenge},V^{\vartriangle},V_i^{(n)}$ are used for channels with $4$-ary inputs.}.
The details of the DB polar transform are illustrated in Fig.~\ref{fig:DBpt}.
Given a $4$-ary-input channel $V:\{0,1\}^2\to\cY$, the transition probabilities of $V^{\triangledown}:\{0,1\}^2\to\cY^2,V^{\lozenge}:\{0,1\}^2\to\{0,1\}\times\cY^2$, and $V^{\vartriangle}:\{0,1\}^2\to\{0,1\}^2\times\cY^2$ in Fig.~\ref{fig:DBpt} are given by
\begin{equation} \label{eq:DBpt}
	\begin{aligned}
		 & V^{\triangledown}(y_1,y_2|u_1,u_2)=\frac{1}{4}\sum_{u_3,u_4\in\{0,1\}} V(y_1|u_1+u_2,u_3+u_4) V(y_2|u_2,u_4) \\
		 & \hspace*{2.7in} \text{~for~} u_1,u_2\in\{0,1\} \text{~and~} y_1,y_2\in\cY,                                   \\
		 & V^{\lozenge}(u_1,y_1,y_2|u_2,u_3)=\frac{1}{4}\sum_{u_4\in\{0,1\}} V(y_1|u_1+u_2,u_3+u_4) V(y_2|u_2,u_4)      \\
		 & \hspace*{2.7in} \text{~for~} u_1,u_2,u_3\in\{0,1\} \text{~and~} y_1,y_2\in\cY,                               \\
		 & V^{\vartriangle}(u_1,u_2,y_1,y_2|u_3,u_4)=\frac{1}{4} V(y_1|u_1+u_2,u_3+u_4) V(y_2|u_2,u_4)                  \\
		 & \hspace*{2.7in} \text{~for~} u_1,u_2,u_3,u_4\in\{0,1\} \text{~and~} y_1,y_2\in\cY.
	\end{aligned}
\end{equation}

\begin{figure}[ht]
	\centering
	\begin{subfigure}{.53\linewidth}
		\centering
		\begin{tikzpicture}
			\draw
			node at (0,10.5) [] (u1)  {$U_1$}
			node at (0,9.5) [] (u2)  {$U_2$}
			node at (1.5,10.5) [XOR,scale=1.2] (x1) {}
			node at (4.1,10.5) (v1) {}
			node at (4.1,9.5) (v2) {}
			node at (4.5,10) (t1) {}
			node at (4.3,10) [vblock] (76)  {$V$}
			node at (6,10) [] (y1)  {$Y_1$};
			\draw[fill] (1.5, 9.5) circle (.6ex);

			\draw
			node at (0,8.5) [] (u3)  {$U_3$}
			node at (0,7.5) [] (u4)  {$U_4$}
			node at (1.5,8.5) [XOR,scale=1.2] (x3) {}
			node at (4.1,8.5) (v3) {}
			node at (4.1,7.5) (v4) {}
			node at (4.5,8) (t3) {}
			node at (4.3,8) [vblock] (83)  {$V$}
			node at (6,8) [] (y3)  {$Y_2$};
			\draw[fill] (1.5, 7.5) circle (.6ex);

			\draw[very thick,->](u1) -- node {}(x1);
			\draw[very thick,->](u2) -| node {}(x1);
			\draw[very thick,->](x1) -- (v1);
			\draw[very thick,->](u2) -- (2.2,9.5) -- (3.3,8.5) -- (v3);
			\draw[very thick,->](t1) -- node {}(y1);

			\draw[very thick,->](x3) -- (2.2,8.5) -- (3.3,9.5) -- (v2);
			\draw[very thick,->](u4) -| node {}(x3);
			\draw[very thick,->](u3) -- node {}(x3);
			\draw[very thick,->](u4) -- node {}(v4);
			\draw[very thick,->](t3) -- node {}(y3);

		\end{tikzpicture}
		\caption{$U_1,U_2,U_3,U_4$ are i.i.d.
			Bernoulli-$1/2$ random variables.
			The channel $V:\{0,1\}^2\to \cY$ takes two bits as its inputs, i.e., $V$ has $4$-ary inputs.
			Under the successive decoder, we have the following three channels: (1) $V^{\triangledown}:U_1,U_2\to Y_1,Y_2$; (2) $V^{\lozenge}:U_2,U_3\to U_1,Y_1,Y_2$; (3) $V^{\vartriangle}:U_3,U_4\to U_1,U_2,Y_1,Y_2$.}
		\label{fig:DBpt_a}
	\end{subfigure}
	\hfill
	\begin{subfigure}{.43\linewidth}
		\centering
		\begin{tikzpicture}
			\draw
			node at (0,1.5) [] (u1)  {$V$}
			node at (0,0) [] (u2)  {$V$}
			node at (1.4,1.5) [] (v1)  {}
			node at (1.4,0) [] (v2)  {}
			node at (3.5,2.2) [] (x1)  {}
			node at (3.5,0.75) [] (x2)  {}
			node at (3.5,-0.7) [] (x3)  {}
			node at (5.2,2.2) [] (y1)  {$V^{\triangledown}$}
			node at (5.2,0.75) [] (y2)  {$V^{\lozenge}$}
			node at (5.2,-0.7) [] (y3)  {$V^{\vartriangle}$}
			node at (2.45,0.75) [lgblock, align=center] {DB polar\\Transform};
			\draw[very thick,->](u1) -- node {}(v1);
			\draw[very thick,->](u2) -- node {}(v2);
			\draw[very thick,->](x1) -- node[above] {``$\triangledown$"}(y1);
			\draw[very thick,->](x2) -- node [above] {``$\lozenge$"} (y2);
			\draw[very thick,->](x3) -- node [above] {``$\vartriangle$"} (y3);
		\end{tikzpicture}
		\caption{Two independent copies of $V$ are transformed into three channels $V^{\triangledown},V^{\lozenge},V^{\vartriangle}$.
		These three channels also have $4$-ary inputs.
		Note that the inputs of $V^{\triangledown}$ and $V^{\lozenge}$ have one-bit overlap, and the inputs of $V^{\lozenge}$ and $V^{\vartriangle}$ also have one-bit overlap.}
	\end{subfigure}
	\caption{The Double-Bits (DB) polar transform}
	\label{fig:DBpt}
\end{figure}
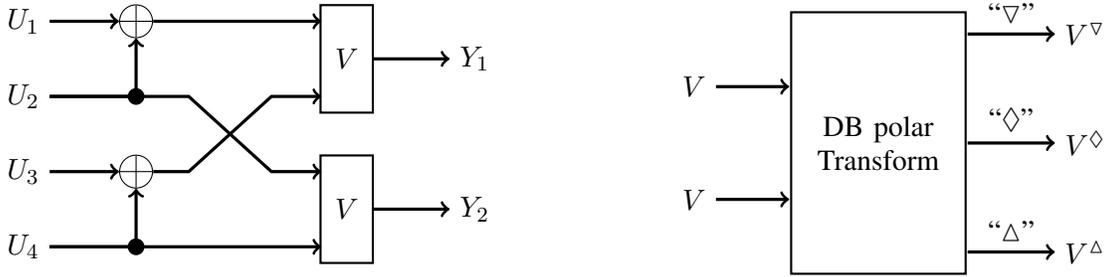

The role of the DB polar transform in the construction of ABS polar codes is the same as the role of the $2\times 2$ basic polar transform in the standard polar code construction.
Instead of jumping directly into the ABS polar code construction, let us first use standard polar codes to illustrate how to track the evolution of adjacent bits recursively using the DB polar transform.
In order to calculate the joint distribution of adjacent bits, we introduce the notion of adjacent-bits-channels, which is the counterpart of the bit-channels used for tracking the distribution of every single bit.
We still use the setting in Fig.~\ref{fig:bit_channels_polar}, where we defined the bit-channels.
For the matrix $\mathbf{G}_n^{\polar}$ and a BMS channel $W$, we define $n-1$ adjacent-bits-channels $V_1^{(n)},V_2^{(n)},\dots,V_{n-1}^{(n)}$ as follows:
\begin{equation}  \label{eq:st_adc}
	V_i^{(n)}: U_i,U_{i+1} \to U_1,\dots,U_{i-1},Y_1,\dots,Y_n
	\text{~~for~} 1\le i\le n-1,
\end{equation}
where $U_1,\dots,U_n,Y_1,\dots,Y_n$ are defined in Fig.~\ref{fig:bit_channels_polar}.
By definition, $V_1^{(n)},V_2^{(n)},\dots,V_{n-1}^{(n)}$ take two bits as their inputs, i.e., all of them have $4$-ary inputs.
Moreover, these adjacent-bits-channels depend on the BMS channel $W$, although we omit this dependence in the notation.

The following lemma allows us to calculate $V_1^{(n)},V_2^{(n)},\dots,V_{n-1}^{(n)}$ recursively from $V_1^{(n/2)},V_2^{(n/2)},\dots,
	\linebreak[4]
	V_{n/2-1}^{(n/2)}$.
\begin{lemma} \label{lemma:recur_ST_DB}
	Let $n\ge 4$.
	We have
	\begin{equation} \label{eq:abc}
		V_{2i-1}^{(n)} = (V_i^{(n/2)})^\triangledown,
		\quad
		V_{2i}^{(n)} = (V_i^{(n/2)})^\lozenge, \quad
		V_{2i+1}^{(n)} = (V_i^{(n/2)})^\vartriangle  \quad
		\text{for~} 1\le i\le n/2-1.
	\end{equation}
\end{lemma}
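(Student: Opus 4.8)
The plan is to imitate the classical derivation of the bit-channel recursion~\eqref{eq:recur_bit_channels}, but to carry along \emph{pairs} of adjacent bits rather than single bits. The starting point, familiar from standard polar codes, is that one layer of the length-$n$ transform $\mathbf{G}_n^{\polar}=\mathbf{G}_{n/2}^{\polar}\otimes\mathbf{G}_2^{\polar}$ factors into an inner layer of $n/2$ basic $2\times 2$ transforms followed by two parallel, independent copies of the length-$(n/2)$ polar transform. Concretely, I would set $A_j:=U_{2j-1}+U_{2j}$ and $B_j:=U_{2j}$ for $1\le j\le n/2$. Since $(U_{2j-1},U_{2j})\mapsto(A_j,B_j)$ is a linear bijection of $\{0,1\}^2$ acting on disjoint blocks of coordinates, the $n$ bits $A_1,\dots,A_{n/2},B_1,\dots,B_{n/2}$ are again i.i.d.\ Bernoulli-$1/2$; in particular $(A_1,\dots,A_{n/2})$ and $(B_1,\dots,B_{n/2})$ are independent. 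A direct computation using $\mathbf{G}_n^{\polar}=\mathbf{G}_{n/2}^{\polar}\otimes\mathbf{G}_2^{\polar}$, applying the inner $\mathbf{G}_2^{\polar}$ to each consecutive pair and then the outer $\mathbf{G}_{n/2}^{\polar}$ across pairs, shows that the odd-indexed codeword coordinates satisfy $(X_1,X_3,\dots,X_{n-1})=(A_1,\dots,A_{n/2})\mathbf{G}_{n/2}^{\polar}$ and the even-indexed ones satisfy $(X_2,X_4,\dots,X_n)=(B_1,\dots,B_{n/2})\mathbf{G}_{n/2}^{\polar}$. Writing $\tilde Y_k:=Y_{2k-1}$ and $\hat Y_k:=Y_{2k}$, the ``$A$-side'' $(A_1,\dots,A_{n/2})\mapsto(\tilde Y_1,\dots,\tilde Y_{n/2})$ and the ``$B$-side'' $(B_1,\dots,B_{n/2})\mapsto(\hat Y_1,\dots,\hat Y_{n/2})$ are therefore each an instance of the length-$(n/2)$ polar transform over the same channel $W$, and they are independent because they use disjoint channel copies and independent message tuples. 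By the definition~\eqref{eq:st_adc}, the adjacent-bits-channel of either side is precisely $V_i^{(n/2)}$: on the $A$-side $V_i^{(n/2)}: A_i,A_{i+1}\to A_1,\dots,A_{i-1},\tilde Y_1,\dots,\tilde Y_{n/2}$, and on the $B$-side $V_i^{(n/2)}: B_i,B_{i+1}\to B_1,\dots,B_{i-1},\hat Y_1,\dots,\hat Y_{n/2}$.

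With this decomposition in hand, I would line up the four message bits $(U_{2i-1},U_{2i},U_{2i+1},U_{2i+2})$ of $\mathbf{G}_n^{\polar}$ with the four input bits $(U_1,U_2,U_3,U_4)$ of the DB polar transform in Fig.~\ref{fig:DBpt_a} via $U_1\leftrightarrow U_{2i-1}$, $U_2\leftrightarrow U_{2i}$, $U_3\leftrightarrow U_{2i+1}$, $U_4\leftrightarrow U_{2i+2}$. Under this matching $U_1+U_2=A_i$, $U_2=B_i$, $U_3+U_4=A_{i+1}$ and $U_4=B_{i+1}$, so the two copies of $V:=V_i^{(n/2)}$ drawn in Fig.~\ref{fig:DBpt_a} receive exactly the inputs $(A_i,A_{i+1})$ and $(B_i,B_{i+1})$ and produce the observations $(A_1,\dots,A_{i-1},\tilde Y_1,\dots,\tilde Y_{n/2})$ and $(B_1,\dots,B_{i-1},\hat Y_1,\dots,\hat Y_{n/2})$ (the later message bits $U_{2i+3},\dots,U_n$ having already been absorbed into the definition of $V_i^{(n/2)}$). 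It then remains to compare, one case at a time, the three DB-transform channels of~\eqref{eq:DBpt} with $V_{2i-1}^{(n)},V_{2i}^{(n)},V_{2i+1}^{(n)}$. For ``$\triangledown$'': $(V_i^{(n/2)})^{\triangledown}$ has input $(U_{2i-1},U_{2i})$ and output $(A_1,\dots,A_{i-1},B_1,\dots,B_{i-1},\tilde Y_1,\dots,\tilde Y_{n/2},\hat Y_1,\dots,\hat Y_{n/2})$, and it averages over the fresh bits $U_3,U_4$; since $(U_3,U_4)\mapsto(U_3+U_4,U_4)$ is a bijection of $\{0,1\}^2$, averaging over a uniform fresh $(U_3,U_4)$ is the same as averaging over a uniform fresh $(A_{i+1},B_{i+1})$, the observation $(A_1,\dots,A_{i-1},B_1,\dots,B_{i-1})$ is in bijection with $(U_1,\dots,U_{2i-2})$, and $(\tilde Y_1,\dots,\tilde Y_{n/2},\hat Y_1,\dots,\hat Y_{n/2})$ is exactly $(Y_1,\dots,Y_n)$; this is the channel $V_{2i-1}^{(n)}$ of~\eqref{eq:st_adc}. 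For ``$\lozenge$'': $(V_i^{(n/2)})^{\lozenge}$ has input $(U_{2i},U_{2i+1})$, additionally reveals $U_1=U_{2i-1}$, and averages over $U_4=U_{2i+2}$; together with $(A_1,\dots,A_{i-1},B_1,\dots,B_{i-1})$ the revealed bits amount to $(U_1,\dots,U_{2i-1})$, so this is $V_{2i}^{(n)}$. For ``$\vartriangle$'': $(V_i^{(n/2)})^{\vartriangle}$ has input $(U_{2i+1},U_{2i+2})$ and reveals $U_1,U_2=U_{2i-1},U_{2i}$, so its observation is $(U_1,\dots,U_{2i},Y_1,\dots,Y_n)$, which is $V_{2i+1}^{(n)}$. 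The hypothesis $1\le i\le n/2-1$ is exactly what makes $A_{i+1},B_{i+1}$, equivalently $U_{2i+1},U_{2i+2}$, available, so the three identities hold on their stated range.

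I expect the main work to lie in the bookkeeping of the second paragraph rather than in any delicate estimate: for each of the three cases one has to check that the bits that are \emph{observed}, the pair that is \emph{fed in}, and the bits that are \emph{averaged out} on the length-$n$ side agree with what~\eqref{eq:DBpt} prescribes, while keeping straight the odd/even splitting of the channel outputs and the ``half-revealed pair'' in the ``$\lozenge$'' case (where $U_{2i-1}$ is observed but $U_{2i}$ is the input). The conditional independence of the $A$-side and the $B$-side, which underlies the product form in~\eqref{eq:DBpt}, also has to be argued, but it is immediate from the fact that the pairs $(A_j,B_j)$ are i.i.d.\ uniform on $\{0,1\}^2$, so $(A_1,\dots,A_{n/2})$ and $(B_1,\dots,B_{n/2})$ are independent while the two sides use disjoint channel copies. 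If one prefers to avoid the pictorial argument, the same three identities follow from a routine calculation: expand each of $V_{2i-1}^{(n)},V_{2i}^{(n)},V_{2i+1}^{(n)}$ as a sum of products of $W$-transition probabilities over the hidden message bits, substitute the factorization $P(Y_1,\dots,Y_n\mid U_1,\dots,U_n)=P(\tilde Y_1,\dots,\tilde Y_{n/2}\mid A_1,\dots,A_{n/2})\,P(\hat Y_1,\dots,\hat Y_{n/2}\mid B_1,\dots,B_{n/2})$, change variables from the $U$'s to the $A$'s and $B$'s, and recognize the resulting inner sums as the transition probabilities of $V_i^{(n/2)}$ on each side; the outcome matches~\eqref{eq:DBpt} term by term. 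I would present the structural argument as the main line and leave this computation as a verification.
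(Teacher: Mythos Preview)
Your proposal is correct and follows essentially the same approach as the paper's proof: both introduce the change of variables $A_j=U_{2j-1}+U_{2j},\ B_j=U_{2j}$ (the paper writes these as $\widetilde U_{2j-1},\widetilde U_{2j}$), use the Kronecker structure $\mathbf{G}_n^{\polar}=\mathbf{G}_{n/2}^{\polar}\otimes\mathbf{G}_2^{\polar}$ to split into two independent length-$(n/2)$ polar systems on the odd- and even-indexed outputs, and then match the three DB-transform formulas in~\eqref{eq:DBpt} against $V_{2i-1}^{(n)},V_{2i}^{(n)},V_{2i+1}^{(n)}$ up to the obvious output relabeling. The only cosmetic difference is that the paper writes out the three probability calculations explicitly whereas you frame the same verification structurally via Fig.~\ref{fig:DBpt_a}, mentioning the explicit computation as a backup.
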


The proof of Lemma~\ref{lemma:recur_ST_DB} is given in Appendix~\ref{ap:lm1}.
The relation \eqref{eq:abc} is similar in nature to the relation \eqref{eq:recur_bit_channels}, and the proof of \eqref{eq:abc} also uses the same method as the proof of \eqref{eq:recur_bit_channels}.
There is, however, one difference between these two recursive relations: The ``$+$" and ``$-$" transforms of different bit-channels are distinct while the ``$\triangledown$", ``$\lozenge$" and ``$\vartriangle$" transforms of different adjacent-bits-channels may overlap.
More precisely, the $n/2$ sets $\{(W_i^{(n/2)})^-,(W_i^{(n/2)})^+\}_{i=1}^{n/2}$ are disjoint while the two sets $\{(V_i^{(n/2)})^\triangledown,(V_i^{(n/2)})^\lozenge,(V_i^{(n/2)})^\vartriangle\}$ and $\{(V_{i+1}^{(n/2)})^\triangledown,(V_{i+1}^{(n/2)})^\lozenge,(V_{i+1}^{(n/2)})^\vartriangle\}$ have the following element in their intersection for every $1\le i\le n/2-2$:
\begin{equation} \label{eq:2w}
	V_{2i+1}^{(n)} = (V_i^{(n/2)})^\vartriangle
	= (V_{i+1}^{(n/2)})^\triangledown.
\end{equation}
This gives us two methods of calculating $V_{2i+1}^{(n)}$ recursively for $1\le i\le n/2-2$.

Lemma~\ref{lemma:recur_ST_DB} tells us how to calculate $\{V_i^{(n)}\}_{i=1}^{n-1}$ from $\{V_i^{(n/2)}\}_{i=1}^{n/2-1}$ recursively for $n\ge 4$.
The last question we need to answer is how to calculate the adjacent-bits-channel $V_1^{(2)}$ from the BMS channel $W$, because $V_1^{(2)}$ is the starting point of the recursive relation in Lemma~\ref{lemma:recur_ST_DB}.
Fortunately, this is an easy task.
Let us go back to the setting in Fig.~\ref{fig:polar_transform}.
Given a BMS channel $W$, the adjacent-bits-channel $V_1^{(2)}$ is simply the channel mapping from $U_1,U_2$ to $Y_1,Y_2$.
More precisely, we have
\begin{equation} \label{eq:v_init}
	V_1^{(2)}(y_1,y_2|u_1,u_2)=W(y_1|u_1+u_2) W(y_2|u_2).
\end{equation}
After obtaining the transition probabilities of the adjacent-bits-channels $\{V_i^{(n)}\}_{i=1}^{n-1}$, it is straightforward to calculate the transition probabilities of the bit-channels $\{W_i^{(n)}\}_{i=1}^n$.
More precisely, we have
\begin{equation}  \label{eq:v_to_w}
	\begin{aligned}
		 & W_i^{(n)}(y_1,y_2,\dots,y_n,u_1,u_2,\dots,u_{i-1}|u_i) = \frac{1}{2} \sum_{u_{i+1}\in\{0,1\}} V_i^{(n)}(y_1,y_2,\dots,y_n,u_1,u_2,\dots,u_{i-1}|u_i,u_{i+1}), \\
		 & W_{i+1}^{(n)}(y_1,y_2,\dots,y_n,u_1,u_2,\dots,u_i|u_{i+1}) = \frac{1}{2}  V_i^{(n)}(y_1,y_2,\dots,y_n,u_1,u_2,\dots,u_{i-1}|u_i,u_{i+1})
	\end{aligned}
\end{equation}
for $1\le i\le n-1$.

As a final remark, we note that the output alphabet size of the adjacent-bits-channels $\{V_i^{(n)}\}_{i=1}^{n-1}$ grows exponentially with $n$.
Therefore, accurate calculations of $\{V_i^{(n)}\}_{i=1}^{n-1}$ are intractable.
We need to quantize the output alphabets by merging output symbols with similar posterior distributions.
Recall that in the standard polar code construction \cite{Tal13}, we also need the quantization operation to calculate an approximation of the bit-channels $\{W_i^{(n)}\}_{i=1}^n$.
Our quantization method is different from the one used in \cite{Tal13} because the adjacent-bits-channels have $4$-ary inputs while the bit-channels have binary inputs.
We will present our quantization method later in Section~\ref{sect:quantization}.

\subsection{Tracking the evolution of adjacent bits in ABS polar codes} \label{sect:SDB}

As discussed at the beginning of this section, the construction of ABS polar codes consists of two main steps.
The first step is to pick the permutation matrices $\mathbf{P}_2^{\ABS},\mathbf{P}_4^{\ABS},\mathbf{P}_8^{\ABS},\dots,\mathbf{P}_n^{\ABS}$ in the recursive relation \eqref{eq:GnABS}, and the second step is to find which bits are information bits and which bits are frozen bits after picking these permutation matrices.
In this subsection, we explain how to accomplish the second step.
More precisely, we define the bit-channels and the adjacent-bits-channels for ABS polar codes in Fig.~\ref{fig:abc_in_ABS}.
The task of this subsection is to show how to calculate the capacity of the bit-channels $\{W_i^{(n),\ABS}\}_{i=1}^n$ when the permutation matrices $\mathbf{P}_2^{\ABS},\mathbf{P}_4^{\ABS},\mathbf{P}_8^{\ABS},\dots,\mathbf{P}_n^{\ABS}$ in \eqref{eq:GnABS} are known.
Then the information bits are simply the $U_i$'s satisfying that $I(W_i^{(n),\ABS})\approx 1$, where $I(\cdot)$ is the channel capacity.
Unlike the standard polar codes, there does not exist a recursive relation between the bit-channels $\{W_i^{(n),\ABS}\}_{i=1}^n$ and $\{W_i^{(n/2),\ABS}\}_{i=1}^{n/2}$ for ABS polar codes.
Instead, we derive a recursive relation between the adjacent-bits-channels $\{V_i^{(n),\ABS}\}_{i=1}^{n-1}$ and $\{V_i^{(n/2),\ABS}\}_{i=1}^{n/2-1}$.
After that, the transition probabilities of $\{W_i^{(n),\ABS}\}_{i=1}^n$ can be calculated from the transition probabilities of $\{V_i^{(n),\ABS}\}_{i=1}^{n-1}$.

\begin{figure}[ht]
	\centering
	\begin{tikzpicture}
		\node [block, align=center] at (-1,1.6)  (y-1) { $U_1$ \\[0.5em]  $U_2$  \\[0.5em]  \vdots \\[0.5em]  $U_n$ };
		\node [sblock, align=center] at (1,1.6)  (y0) {$\mathbf{P}_n^{\ABS}$};
		\node [block, align=center] at (3,1.6)  (y1) { $\widehat{U}_1$ \\[0.5em]  $\widehat{U}_2$  \\[0.5em]  \vdots \\[0.5em]  $\widehat{U}_n$ };
		\node [sblock, align=center] at (5,1.6)  (y2) {$\mathbf{G}_{n/2}^{\ABS} \otimes \mathbf{G}_2^{\polar}$};
		\node [block, align=center] at (7,1.6)  (y3) { $X_1$ \\[0.5em] $X_2$  \\[0.5em] \vdots  \\[0.5em]  $X_n$ };
		\node [block] at (8.5, 2.7) (w1) {$W$};
		\node [block] at (8.5, 2) (w2) {$W$};
		\node  at (8.5, 1.3)  {\vdots};
		\node [block] at (8.5, 0.6) (w3) {$W$};

		\node at (10, 2.7) (z1) {$Y_1$};
		\node at (10, 2) (z2) {$Y_2$};
		\node  at (10, 1.3) {\vdots};
		\node at (10, 0.6) (z3) {$Y_n$};

		\draw[->,thick] (y-1)--(y0);
		\draw[->,thick] (y0)--(y1);
		\draw[->,thick] (y1)--(y2);
		\draw[->,thick] (y2)--(y3);

		\draw[->,thick] (w1)--(z1);
		\draw[->,thick] (w2)--(z2);
		\draw[->,thick] (w3)--(z3);

		\draw[->,thick] (7.4, 2.7)--(w1);
		\draw[->,thick] (7.4, 2)--(w2);
		\draw[->,thick] (7.4, 0.6)--(w3);

		\node at (4.5, -0.7) [align=center] (p1) {$(\widehat{U}_1,\dots,\widehat{U}_n)=(U_1,\dots,U_n)\mathbf{P}_n^{\ABS}, \quad
			(X_1,\dots,X_n)=(\widehat{U}_1,\dots,\widehat{U}_n) (\mathbf{G}_{n/2}^{\ABS} \otimes \mathbf{G}_2^{\polar})$ \\[0.4em]
		$(X_1,\dots,X_n)=(U_1,\dots,U_n)\mathbf{P}_n^{\ABS}(\mathbf{G}_{n/2}^{\ABS} \otimes \mathbf{G}_2^{\polar})=(U_1,\dots,U_n)\mathbf{G}_n^{\ABS}$};

		\node at (4, -3.2) [align=center] (nbc1)
		{Two sets of bit-channels\\
		$\{W_i^{(n),\ABS}:U_i\to U_1,\dots,U_{i-1},Y_1,\dots,Y_n\}_{i=1}^n$\\[0.1em]
		$\{\widehat{W}_i^{(n),\ABS}:\widehat{U}_i\to \widehat{U}_1,\dots,\widehat{U}_{i-1},Y_1,\dots,Y_n\}_{i=1}^n$ \\[0.6em]
		Two sets of adjacent-bits-channels\\
		$\{V_i^{(n),\ABS}:U_i,U_{i+1}\to U_1,\dots,U_{i-1},Y_1,\dots,Y_n\}_{i=1}^{n-1}$\\[0.1em]
		$\{\widehat{V}_i^{(n),\ABS}:\widehat{U}_i,\widehat{U}_{i+1}\to \widehat{U}_1,\dots,\widehat{U}_{i-1},Y_1,\dots,Y_n\}_{i=1}^{n-1}$};
	\end{tikzpicture}
	\caption{$U_1,\dots,U_n$ are $n=2^m$ i.i.d.
	Bernoulli-$1/2$ random variables.
	$(X_1,\dots,X_n)=(U_1,\dots,U_n) \mathbf{G}_n^{\ABS}$ is the codeword vector, and $(Y_1,\dots,Y_n)$ is the channel output vector.
	We view each Kronecker product with $\mathbf{G}_2^{\polar}$ as one layer of polar transform and view each multiplication with a permutation matrix as one layer of permutation.
	Then $\mathbf{G}_n^{\ABS}$ is obtained from $m$ layers of polar transforms and $m$ layers of permutations while $\mathbf{G}_{n/2}^{\ABS} \otimes \mathbf{G}_2^{\polar}$ is obtained from $m$ layers of polar transforms and $m-1$ layers of permutations.
	Therefore, $\{W_i^{(n),\ABS}\}_{i=1}^n$ and $\{V_i^{(n),\ABS}\}_{i=1}^{n-1}$ are the bit-channels and adjacent-bits-channels seen by the successive decoder after $m$ layers of polar transforms and $m$ layers of permutations.
	Similarly, $\{\widehat{W}_i^{(n),\ABS}\}_{i=1}^n$ and $\{\widehat{V}_i^{(n),\ABS}\}_{i=1}^{n-1}$ are the bit-channels and adjacent-bits-channels seen by the successive decoder after $m$ layers of polar transforms and $m-1$ layers of permutations.}
	\label{fig:abc_in_ABS}
\end{figure}

In order to derive the recursive relation between the adjacent-bits-channels for ABS polar codes, we need another new transform named as the Swapped-Double-Bits (SDB) polar transform in addition to the DB polar transform defined in \eqref{eq:DBpt}.
The details of the SDB polar transform are illustrated in Fig.~\ref{fig:SDBpt}.
In fact, the SDB polar transform is very similar to the DB polar transform.
By comparing Fig.~\ref{fig:DBpt_a} and Fig.~\ref{fig:SDBpt_a}, we can see that the only difference between these two transforms is the order of $U_2$ and $U_3$.
Given a $4$-ary-input channel $V:\{0,1\}^2\to\cY$, the transition probabilities of $V^{\blacktriangledown}:\{0,1\}^2\to\cY^2,V^{\blacklozenge}:\{0,1\}^2\to\{0,1\}\times\cY^2$, and $V^{\blacktriangle}:\{0,1\}^2\to\{0,1\}^2\times\cY^2$ in Fig.~\ref{fig:SDBpt} are given by
\begin{equation}
	\begin{aligned}
		 & V^{\blacktriangledown}(y_1,y_2|u_1,u_2)=\frac{1}{4}\sum_{u_3,u_4\in\{0,1\}} V(y_1|u_1+u_3,u_2+u_4) V(y_2|u_3,u_4) \\
		 & \hspace*{2.7in} \text{~for~} u_1,u_2\in\{0,1\} \text{~and~} y_1,y_2\in\cY,                                        \\
		 & V^{\blacklozenge}(u_1,y_1,y_2|u_2,u_3)=\frac{1}{4}\sum_{u_4\in\{0,1\}} V(y_1|u_1+u_3,u_2+u_4) V(y_2|u_3,u_4)      \\
		 & \hspace*{2.7in} \text{~for~} u_1,u_2,u_3\in\{0,1\} \text{~and~} y_1,y_2\in\cY,                                    \\
		 & V^{\blacktriangle}(u_1,u_2,y_1,y_2|u_3,u_4)=\frac{1}{4} V(y_1|u_1+u_3,u_2+u_4) V(y_2|u_3,u_4)                     \\
		 & \hspace*{2.7in} \text{~for~} u_1,u_2,u_3,u_4\in\{0,1\} \text{~and~} y_1,y_2\in\cY.
	\end{aligned}
\end{equation}

\begin{figure}[ht]
	\centering
	\begin{subfigure}{.53\linewidth}
		\centering
		\begin{tikzpicture}
			\draw
			node at (-1.5,10.5) [] (u1)  {$U_1$}
			node at (-1.5,9.5) [] (u2)  {$U_2$}
			node at (1.5,10.5) [XOR,scale=1.2] (x1) {}
			node at (4.1,10.5) (v1) {}
			node at (4.1,9.5) (v2) {}
			node at (4.5,10) (t1) {}
			node at (4.3,10) [vblock] (76)  {$V$}
			node at (6,10) [] (y1)  {$Y_1$};
			\draw[fill] (1.5, 9.5) circle (.6ex);

			\draw
			node at (-1.5,8.5) [] (u3)  {$U_3$}
			node at (-1.5,7.5) [] (u4)  {$U_4$}
			node at (1.5,8.5) [XOR,scale=1.2] (x3) {}
			node at (4.1,8.5) (v3) {}
			node at (4.1,7.5) (v4) {}
			node at (4.5,8) (t3) {}
			node at (4.3,8) [vblock] (83)  {$V$}
			node at (6,8) [] (y3)  {$Y_2$};
			\draw[fill] (1.5, 7.5) circle (.6ex);

			\draw[very thick,->](u1) -- node {}(x1);
			\draw[very thick,->](0.6, 9.5) -| node {}(x1);
			\draw[very thick,->](x1) -- (v1);
			\draw[very thick,->](u3) -- (-0.4, 8.5) -- (0.6, 9.5) -- (2.2,9.5) -- (3.3,8.5) -- (v3);
			\draw[very thick,->](t1) -- node {}(y1);

			\draw[very thick,->](x3) -- (2.2,8.5) -- (3.3,9.5) -- (v2);
			\draw[very thick,->](u4) -| node {}(x3);
			\draw[very thick,->](u2) -- (-0.4, 9.5) -- (0.6, 8.5) -- (x3);
			\draw[very thick,->](u4) -- (v4);
			\draw[very thick,->](t3) -- (y3);

		\end{tikzpicture}
		\caption{$U_1,U_2,U_3,U_4$ are i.i.d.
			Bernoulli-$1/2$ random variables.
			The channel $V:\{0,1\}^2\to \cY$ takes two bits as its inputs, i.e., $V$ has $4$-ary inputs.
			Under the successive decoder, we have the following three channels: (1) $V^{\blacktriangledown}:U_1,U_2\to Y_1,Y_2$; (2) $V^{\blacklozenge}:U_2,U_3\to U_1,Y_1,Y_2$; (3) $V^{\blacktriangle}:U_3,U_4\to U_1,U_2,Y_1,Y_2$.}
		\label{fig:SDBpt_a}
	\end{subfigure}
	\hfill
	\begin{subfigure}{.43\linewidth}
		\centering
		\begin{tikzpicture}
			\draw
			node at (0,1.5) [] (u1)  {$V$}
			node at (0,0) [] (u2)  {$V$}
			node at (1.4,1.5) [] (v1)  {}
			node at (1.4,0) [] (v2)  {}
			node at (3.5,2.2) [] (x1)  {}
			node at (3.5,0.75) [] (x2)  {}
			node at (3.5,-0.7) [] (x3)  {}
			node at (5.2,2.2) [] (y1)  {$V^{\blacktriangledown}$}
			node at (5.2,0.75) [] (y2)  {$V^{\blacklozenge}$}
			node at (5.2,-0.7) [] (y3)  {$V^{\blacktriangle}$}
			node at (2.45,0.75) [lgblock, align=center] {SDB polar\\Transform};
			\draw[very thick,->](u1) -- node {}(v1);
			\draw[very thick,->](u2) -- node {}(v2);
			\draw[very thick,->](x1) -- node[above] {``$\blacktriangledown$"}(y1);
			\draw[very thick,->](x2) -- node [above] {``$\blacklozenge$"} (y2);
			\draw[very thick,->](x3) -- node [above] {``$\blacktriangle$"} (y3);
		\end{tikzpicture}
		\caption{Two independent copies of $V$ are transformed into three channels $V^{\blacktriangledown},V^{\blacklozenge},V^{\blacktriangle}$.
		These three channels also have $4$-ary inputs.
		Note that the inputs of $V^{\blacktriangledown}$ and $V^{\blacklozenge}$ have one-bit overlap, and the inputs of $V^{\blacklozenge}$ and $V^{\blacktriangle}$ also have one-bit overlap.}
	\end{subfigure}
	\caption{The Swapped-Double-Bits (SDB) polar transform}
	\label{fig:SDBpt}
\end{figure}
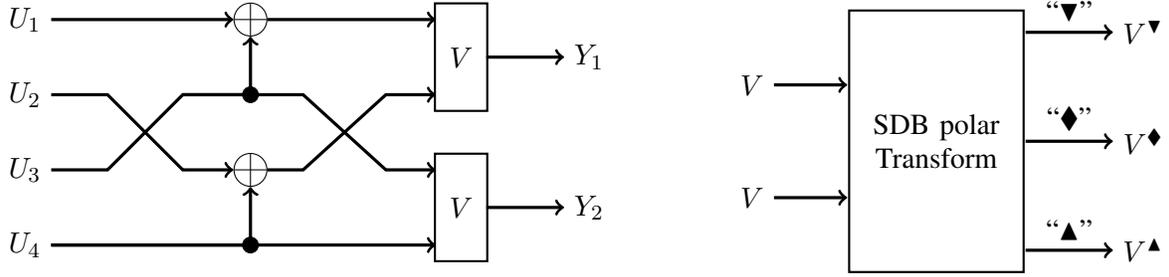

Recall that we use the set $\cI^{(n)}=\{i_1,i_2,\dots,i_s\}$ to represent the permutation matrix $\mathbf{P}_n^{\ABS}$ in \eqref{eq:pi}.
Moreover, we require that $i_1,i_2,\dots,i_s$ in the set $\cI^{(n)}$ satisfy the condition \eqref{eq:separated} because otherwise there does not exist a recursive relation between the adjacent-bits-channels $\{V_i^{(n),\ABS}\}_{i=1}^{n-1}$ and $\{V_i^{(n/2),\ABS}\}_{i=1}^{n/2-1}$.
We will give a detailed explanation about this later in Section~\ref{sect:necess}.
Here we point out another property of the elements $i_1,i_2,\dots,i_s$ in $\cI^{(n)}$: they must all be even numbers.
To see this, let us go back to the setting in Fig.~\ref{fig:abc_in_ABS}.
The role of $\cI^{(n)}$ is to decide which pairs of adjacent bits to swap in the vector $(\widehat{U}_1,\widehat{U}_2,\dots,\widehat{U}_n)$ defined in Fig.~\ref{fig:abc_in_ABS}.
According to the discussion in Section~\ref{sect:reason}, we swap the adjacent bits $\widehat{U}_i$ and $\widehat{U}_{i+1}$ only if they are unordered, i.e., if $\widehat{U}_i$ is more reliable than $\widehat{U}_{i+1}$ under the successive decoder.
In other words, we swap the adjacent bits $\widehat{U}_i$ and $\widehat{U}_{i+1}$ only if $I(\widehat{W}_i^{(n),\ABS})\ge I(\widehat{W}_{i+1}^{(n),\ABS})$, where the bit-channels $\widehat{W}_i^{(n),\ABS}$ and $\widehat{W}_{i+1}^{(n),\ABS}$ are also defined in Fig.~\ref{fig:abc_in_ABS}.
Since $\{\widehat{W}_i^{(n),\ABS}\}_{i=1}^n$ are obtained from the $2\times 2$ basic polar transform of $\{W_i^{(n/2),\ABS}\}_{i=1}^{n/2}$, they satisfy the following relation:
$$
	\widehat{W}_{2i-1}^{(n),\ABS} = (W_i^{(n/2),\ABS})^- \text{~~and~~} \widehat{W}_{2i}^{(n),\ABS} = (W_i^{(n/2),\ABS})^+  \text{~~for~} 1\le i\le n/2.
$$
Therefore,
$$
	I(\widehat{W}_{2i-1}^{(n),\ABS}) \le I(W_i^{(n/2),\ABS}) \le I(\widehat{W}_{2i}^{(n),\ABS}),
$$
so we should not swap $\widehat{U}_{2i-1}$ and $\widehat{U}_{2i}$ for any $1\le i\le n/2$.
Thus we conclude that the set $\cI^{(n)}$ in \eqref{eq:pi} only contains even numbers. Therefore, the elements of $\cI^{(n)}$ can be written as $\cI^{(n)}=\{2j_1,2j_2,\dots,2j_s\}$, and the condition \eqref{eq:separated} becomes
\begin{equation} \label{eq:separated_even}
	j_2\ge j_1+2, \quad
	j_3\ge j_2+2, \quad
	j_4\ge j_3+2, \quad
	\dots, \quad
	j_s\ge j_{s-1}+2.
\end{equation}
Now we are ready to state the recursive relation between $\{V_i^{(n),\ABS}\}_{i=1}^{n-1}$ and $\{V_i^{(n/2),\ABS}\}_{i=1}^{n/2-1}$.

\begin{lemma} \label{lemma:recur_ABS}
	Let $n\ge 4$.
	We write $\mathbf{P}_n^{\ABS}$ in the form of \eqref{eq:pi} and require that $\cI^{(n)}=\{2j_1,2j_2,\dots,2j_s\}$ satisfies \eqref{eq:separated_even}.
	For $1\le i\le n/2 -1$, we have the following results:

	\noindent
	{\em \bf Case i)} If $2i\in\cI^{(n)}$, then
	$$
		V_{2i-1}^{(n),\ABS} = (V_i^{(n/2),\ABS})^\blacktriangledown,
		\quad
		V_{2i}^{(n),\ABS} = (V_i^{(n/2),\ABS})^\blacklozenge, \quad
		V_{2i+1}^{(n),\ABS} = (V_i^{(n/2),\ABS})^\blacktriangle.
	$$

	\noindent
	{\em \bf Case ii)} If $2(i-1)\in\cI^{(n)}$ and $2(i+1)\in\cI^{(n)}$, then
	$$
		V_{2i}^{(n),\ABS} = (V_i^{(n/2),\ABS})^\lozenge.
	$$

	\noindent
	{\em \bf Case iii)} If $2(i-1)\in\cI^{(n)}$ and $2(i+1)\notin\cI^{(n)}$, then
	$$
		V_{2i}^{(n),\ABS} = (V_i^{(n/2),\ABS})^\lozenge, \quad
		V_{2i+1}^{(n),\ABS} = (V_i^{(n/2),\ABS})^\vartriangle.
	$$

	\noindent
	{\em \bf Case iv)} If $2(i-1)\notin\cI^{(n)}$ and $2(i+1)\in\cI^{(n)}$, then
	$$
		V_{2i-1}^{(n),\ABS} = (V_i^{(n/2),\ABS})^\triangledown,
		\quad
		V_{2i}^{(n),\ABS} = (V_i^{(n/2),\ABS})^\lozenge.
	$$

	\noindent
	{\em \bf Case v)} If $2(i-1)\notin\cI^{(n)}$, $2i\notin\cI^{(n)}$ and $2(i+1)\notin\cI^{(n)}$, then
	$$
		V_{2i-1}^{(n),\ABS} = (V_i^{(n/2),\ABS})^\triangledown,
		\quad
		V_{2i}^{(n),\ABS} = (V_i^{(n/2),\ABS})^\lozenge, \quad
		V_{2i+1}^{(n),\ABS} = (V_i^{(n/2),\ABS})^\vartriangle.
	$$
\end{lemma}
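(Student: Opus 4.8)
The plan is to prove all five cases at once by peeling apart the two layers in Fig.~\ref{fig:abc_in_ABS}: the polar-transform layer $\mathbf{G}_{n/2}^{\ABS}\otimes\mathbf{G}_2^{\polar}$ and the permutation layer $\mathbf{P}_n^{\ABS}$, handled in that order. Since $\mathbf{P}_n^{\ABS}$ is a permutation, $(\widehat{U}_1,\dots,\widehat{U}_n)=(U_1,\dots,U_n)\mathbf{P}_n^{\ABS}$ is again i.i.d.\ Bernoulli-$1/2$, and from $\widehat{U}$ the codeword is produced exactly as in the standard construction with $\mathbf{G}_{n/2}^{\ABS}$ in place of $\mathbf{G}_{n/2}^{\polar}$. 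Hence the \emph{first step} is a verbatim repetition of the argument proving Lemma~\ref{lemma:recur_ST_DB} (which uses nothing about the inner matrix beyond the recursion $\mathbf{G}_n=\mathbf{G}_{n/2}\otimes\mathbf{G}_2^{\polar}$): writing $\widehat{a}_i=\widehat{U}_{2i-1}\oplus\widehat{U}_{2i}$ and $\widehat{b}_i=\widehat{U}_{2i}$, the Kronecker product splits the picture into two independent copies of the length-$n/2$ ABS code, one carrying the message $(\widehat{a}_1,\dots,\widehat{a}_{n/2})$ with outputs $(Y_1,Y_3,\dots,Y_{n-1})$ and one carrying $(\widehat{b}_1,\dots,\widehat{b}_{n/2})$ with outputs $(Y_2,Y_4,\dots,Y_n)$; both copies have adjacent-bits-channels $V_i^{(n/2),\ABS}$. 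Expressing $\widehat{U}_{2i-1},\widehat{U}_{2i},\widehat{U}_{2i+1},\widehat{U}_{2i+2}$ through $\widehat{a}_i,\widehat{a}_{i+1},\widehat{b}_i,\widehat{b}_{i+1}$ and comparing with the transition probabilities \eqref{eq:DBpt} of the DB polar transform gives, for $1\le i\le n/2-1$,
\[
\widehat{V}_{2i-1}^{(n),\ABS}=(V_i^{(n/2),\ABS})^{\triangledown},\qquad
\widehat{V}_{2i}^{(n),\ABS}=(V_i^{(n/2),\ABS})^{\lozenge},\qquad
\widehat{V}_{2i+1}^{(n),\ABS}=(V_i^{(n/2),\ABS})^{\vartriangle}.
\]

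The \emph{second step} pushes these relations through $\mathbf{P}_n^{\ABS}=\prod_{2j\in\cI^{(n)}}\mathbf{S}_n^{(2j)}$. Let $\sigma$ be the permutation with $U_\ell=\widehat{U}_{\sigma(\ell)}$; by the separation condition \eqref{eq:separated_even} the transpositions $(2j_k,2j_k+1)$ act on pairwise disjoint and non-adjacent pairs of positions, so $\sigma$ merely swaps $2j_k\leftrightarrow 2j_k+1$ for each $k$ and fixes everything else. I would then record two elementary facts: (a) for every $p\notin\cI^{(n)}$ — in particular for every odd $p$ — one has $\{U_1,\dots,U_p\}=\{\widehat{U}_1,\dots,\widehat{U}_p\}$ (as deterministic bijective images of one another), since no transposition straddles position $p$; and (b) if neither $j$ nor $j+1$ lies in a swapped pair then $(U_j,U_{j+1})=(\widehat{U}_j,\widehat{U}_{j+1})$. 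Using \eqref{eq:separated_even} one checks that every index $i\in\{1,\dots,n/2-1\}$ falls into exactly one of Cases i--v, and that each channel $V_{2i-1}^{(n),\ABS},V_{2i}^{(n),\ABS},V_{2i+1}^{(n),\ABS}$ that is listed in Cases ii--v has an input pair avoiding every swapped pair and a prefix that is a clean cut; by (a) and (b) it then equals the corresponding $\widehat{V}_j^{(n),\ABS}$, and the first-step relations supply the stated $\triangledown$, $\lozenge$, $\vartriangle$ expressions. (The cases overlap on odd indices — the same channel can be reached from two neighbouring $i$ — but the two expressions agree by the ABS analogue of the overlap identity \eqref{eq:2w}, which is itself part of the first step and is proved just as \eqref{eq:2w}.)

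Case i is the only place where the permutation genuinely interacts with the transform, and it is the reason the SDB transform was introduced. When $2i\in\cI^{(n)}$ — so that $2(i-1),2(i+1)\notin\cI^{(n)}$ by \eqref{eq:separated_even}, making the cases mutually exclusive — the four fresh bits $(U_{2i-1},U_{2i},U_{2i+1},U_{2i+2})$ equal $(\widehat{U}_{2i-1},\widehat{U}_{2i+1},\widehat{U}_{2i},\widehat{U}_{2i+2})$, i.e.\ the DB fresh bits with their two middle entries interchanged, while the prefix $\{U_1,\dots,U_{2i-2}\}=\{\widehat{U}_1,\dots,\widehat{U}_{2i-2}\}\equiv(\widehat{a}_1,\dots,\widehat{a}_{i-1},\widehat{b}_1,\dots,\widehat{b}_{i-1})$ is still a clean cut. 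Substituting $\widehat{a}_i=\widehat{U}_{2i-1}\oplus\widehat{U}_{2i}$ etc.\ and relabeling, the two copies of $V_i^{(n/2),\ABS}$ now receive inputs $(U_{2i-1}\oplus U_{2i+1},\,U_{2i}\oplus U_{2i+2})$ and $(U_{2i+1},U_{2i+2})$; comparing with the transition probabilities of $V^{\blacktriangledown},V^{\blacklozenge},V^{\blacktriangle}$ yields $V_{2i-1}^{(n),\ABS}=(V_i^{(n/2),\ABS})^{\blacktriangledown}$, $V_{2i}^{(n),\ABS}=(V_i^{(n/2),\ABS})^{\blacklozenge}$, $V_{2i+1}^{(n),\ABS}=(V_i^{(n/2),\ABS})^{\blacktriangle}$. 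In other words the SDB transform is exactly ``the DB transform with its two middle fresh bits swapped,'' which is precisely what the permutation layer does at a swapped position.

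The channel-combining identities in the first step and in Case i are routine once the bijection between $(\widehat{U}_{2i-1},\dots,\widehat{U}_{2i+2})$ and $(\widehat{a}_i,\widehat{a}_{i+1},\widehat{b}_i,\widehat{b}_{i+1})$ is written out. I expect the main obstacle to be the combinatorial bookkeeping of the second step: determining, for each of the five cases and using only \eqref{eq:separated_even}, which of the positions $2i-1,2i,2i+1,2i+2$ and which prefix boundaries are moved by a transposition, and verifying that Cases i--v collectively determine every one of $V_1^{(n),\ABS},\dots,V_{n-1}^{(n),\ABS}$ — none omitted, and with overlaps only on odd indices where the two expressions provably coincide.
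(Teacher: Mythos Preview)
Your proposal is correct and follows essentially the same approach the paper has in mind: the paper omits the proof of Lemma~\ref{lemma:recur_ABS} entirely, saying it is ``essentially the same as the proof of Lemma~\ref{lemma:recur_ST_DB},'' and your two-step decomposition (first obtain $\widehat{V}_{2i-1}^{(n),\ABS},\widehat{V}_{2i}^{(n),\ABS},\widehat{V}_{2i+1}^{(n),\ABS}$ from $V_i^{(n/2),\ABS}$ via the DB transform exactly as in that proof, then push through $\mathbf{P}_n^{\ABS}$ using the separation condition~\eqref{eq:separated_even}) is precisely the natural way to make that remark rigorous. Your bookkeeping for the five cases and the identification of Case~i with the SDB transform are accurate.
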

Note that in a previous arXiv version and the ISIT version \cite{Li2022ISIT} of this paper, the statement of this lemma was not complete. In the previous versions, {\em \bf  Case ii)} was missing, and the conditions in {\em \bf Case iii)} and {\em \bf Case iv)} were incomplete.

The proof of Lemma~\ref{lemma:recur_ABS} is omitted because it is essentially the same as the proof of Lemma~\ref{lemma:recur_ST_DB}.
Here we point out one difference between Lemma~\ref{lemma:recur_ST_DB} and Lemma~\ref{lemma:recur_ABS}.
Lemma~\ref{lemma:recur_ST_DB} tells us that $V_{2i+1}^{(n)}$ can be recursively calculated in two different ways for every $1\le i\le n/2-2$; see \eqref{eq:2w}.
However, for $i\in\{j_1-1,j_2-1,\dots,j_s-1\}\cup\{j_1,j_2,\dots,j_s\}$, there is only one way to calculate $V_{2i+1}^{(n),\ABS}$ recursively.
More precisely, if $i\in\{j_1-1,j_2-1,\dots,j_s-1\}$, then $V_{2i+1}^{(n),\ABS}$ can only be calculated from $V_{2i+1}^{(n),\ABS}=(V_{i+1}^{(n/2),\ABS})^\blacktriangledown$, and the relation $V_{2i+1}^{(n),\ABS} = (V_i^{(n/2),\ABS})^\vartriangle$ does {\em not} hold.
Similarly, if $i\in\{j_1,j_2,\dots,j_s\}$, then $V_{2i+1}^{(n),\ABS}$ can only be calculated from $V_{2i+1}^{(n),\ABS} = (V_i^{(n/2),\ABS})^\blacktriangle$, and the relation $V_{2i+1}^{(n),\ABS}=(V_{i+1}^{(n/2),\ABS})^\triangledown$ does {\em not} hold.

Since we require $n\ge 4$ in Lemma~\ref{lemma:recur_ABS}, the starting point of the recursive relation in Lemma~\ref{lemma:recur_ABS} is $V_1^{(2),\ABS}$.
It is easy to see that the permutation matrix $\mathbf{P}_2^{\ABS}$ is the identity matrix.
Therefore, given a BMS channel $W$, the transition probability of $V_1^{(2),\ABS}$ is given by
\begin{equation} \label{eq:v_abs_init}
	V_1^{(2),\ABS}(y_1,y_2|u_1,u_2)=W(y_1|u_1+u_2) W(y_2|u_2).
\end{equation}
Note that this is the same as \eqref{eq:v_init} for standard polar codes.

After obtaining the transition probabilities of the adjacent-bits-channels $\{V_i^{(n),\ABS}\}_{i=1}^{n-1}$, we can use \eqref{eq:v_to_w} to calculate the transition probabilities of the bit-channels $\{W_i^{(n),\ABS}\}_{i=1}^n$.
We only need to replace $W_i^{(n)},W_{i+1}^{(n)},V_i^{(n)}$ in \eqref{eq:v_to_w} with $W_i^{(n),\ABS},W_{i+1}^{(n),\ABS},V_i^{(n),\ABS}$.
Once the transition probabilities of $\{W_i^{(n),\ABS}\}_{i=1}^n$ are known, we are able to determine which bits are information bits and which bits are frozen bits.

\subsection{Constructing the permutation matrices $\mathbf{P}_2^{\ABS},\mathbf{P}_4^{\ABS},\mathbf{P}_8^{\ABS},\dots,\mathbf{P}_n^{\ABS}$ in \eqref{eq:GnABS}}  \label{sect:cons_PnABS}

We construct the permutation matrices in \eqref{eq:GnABS} one by one, starting from $\mathbf{P}_2^{\ABS}$.
Therefore, the matrices $\mathbf{P}_2^{\ABS},\mathbf{P}_4^{\ABS},\mathbf{P}_8^{\ABS},\dots,\mathbf{P}_{n/2}^{\ABS}$ are already known when we construct $\mathbf{P}_n^{\ABS}$.
The method described in Section~\ref{sect:SDB} allows us to calculate the transition probabilities of the adjacent-bits-channels $\{V_i^{(n/2),\ABS}\}_{i=1}^{n/2-1}$ from $\mathbf{P}_2^{\ABS},\mathbf{P}_4^{\ABS},\mathbf{P}_8^{\ABS},\dots,\mathbf{P}_{n/2}^{\ABS}$.
As a consequence, we know the transition probabilities of $\{V_i^{(n/2),\ABS}\}_{i=1}^{n/2-1}$ when constructing $\mathbf{P}_n^{\ABS}$.
Since the set $\cI^{(n)}=\{2j_1,2j_2,\dots,2j_s\}$ in \eqref{eq:pi} uniquely determines $\mathbf{P}_n^{\ABS}$, constructing $\mathbf{P}_n^{\ABS}$ is further equivalent to constructing the set $\cS^*=\{j_1,j_2,\dots,j_s\}$, where the elements $j_1,j_2,\dots,j_s$ satisfy the condition \eqref{eq:separated_even}.

Before presenting how to construct the set $\cS^*$, let us introduce some notation.
Suppose that $V:\{0,1\}^2\to \cY$ is an adjacent-bits-channel with $4$-ary inputs.
Define two bit-channels $V_{\first}:\{0,1\}\to\cY$ and $V_{\second}:\{0,1\}\to \{0,1\}\times\cY$ as
\begin{align*}
	V_{\first}(y|u_1)=\frac{1}{2}\sum_{u_2\in\{0,1\}} V(y|u_1,u_2) \text{~~and~~}
	V_{\second}(y,u_1|u_2)= \frac{1}{2} V(y|u_1,u_2).
\end{align*}
Comparing this with \eqref{eq:v_to_w}, we can see that if $V$ is $V_i^{(n)}$, then $V_{\first}$ is simply $W_i^{(n)}$, and $V_{\second}$ is $W_{i+1}^{(n)}$.
Similarly, if $V$ is $V_i^{(n),\ABS}$, then $V_{\first}$ is simply $W_i^{(n),\ABS}$, and $V_{\second}$ is $W_{i+1}^{(n),\ABS}$.
Next we define
\begin{align*}
	 & I_{\first}(V):= I(V_{\first}) \text{~~and~~} I_{\second}(V) := I(V_{\second}), \\
	 & g(V):= I_{\first}(V) (1-I_{\first}(V)) + I_{\second}(V) (1-I_{\second}(V)).
\end{align*}
The function $g(V)$ measures the polarization level of the two bit-channels induced by $V$.
In particular, $g(V)\approx 0$ means that the capacity of both bit-channels is very close to either $0$ or $1$.
Finally, for $1\le i\le n/2-1$, we define
$$
	\texttt{Score}(i) := g\big( (V_i^{(n/2),\ABS})^\lozenge \big) - g\big( (V_i^{(n/2),\ABS})^\blacklozenge \big).
$$
The interpretation of $\texttt{Score}(i)$ is as follows:
According to Lemma~\ref{lemma:recur_ABS}, if $i\in\cS^*$, then $V_{2i}^{(n),\ABS} = (V_i^{(n/2),\ABS})^\blacklozenge$; if $i\notin\cS^*$, then $V_{2i}^{(n),\ABS} = (V_i^{(n/2),\ABS})^\lozenge$.
Therefore, $g\big( (V_i^{(n/2),\ABS})^\blacklozenge \big)$ measures the polarization level of the two bit-channels $W_{2i}^{(n),\ABS}$ and $W_{2i+1}^{(n),\ABS}$ when we include $i$ in the set $\cS^*$.
Similarly, $g\big( (V_i^{(n/2),\ABS})^\lozenge \big)$ measures the polarization level of the two bit-channels $W_{2i}^{(n),\ABS}$ and $W_{2i+1}^{(n),\ABS}$ when we do not include $i$ in the set $\cS^*$.
If $\texttt{Score}(i)>0$, then including $i$ in the set $\cS^*$ accelerates polarization.
If $\texttt{Score}(i)<0$, then including $i$ in the set $\cS^*$ slows down polarization, and in this case we should not include $i$ in $\cS^*$.

If we ignore the condition \eqref{eq:separated_even}, then we can simply choose the set $\cS^*$ to be $\cS^*=\{i:\texttt{Score}(i)>0\}$.
However, as we will see in Section~\ref{sect:necess}, the condition \eqref{eq:separated_even} is crucial for us to calculate the transition probabilities of the adjacent-bits-channels, so it must be satisfied.
As a consequence, we need to find a set $\cS^*\subseteq\{1,2,\dots,n/2-1\}$ to maximize $\sum_{i\in\cS^*}\texttt{Score}(i)$ under the constraint that the distance between any two distinct elements of $\cS^*$ must be at least $2$.
In other words, we need to solve the following optimization problem:
\begin{equation} \label{eq:argmaxS}
	\begin{aligned}
		\cS^*= & \argmax_{\cS\subseteq\{1,2,\dots,n/2-1\}} \sum_{i\in\cS}\texttt{Score}(i)                         \\
		       & \text{subject to: } |i_1-i_2|\ge 2 \text{~for all~} i_1,i_2\in\cS \text{~such that~} i_1\neq i_2.
	\end{aligned}
\end{equation}
This problem can be solved using a dynamic programming method.
For $1\le j\le n/2-1$, define
\begin{align*}
	\cS_j^*= & \argmax_{\cS\subseteq\{1,2,\dots,j\}} \sum_{i\in\cS}\texttt{Score}(i)                             \\
	         & \text{subject to: } |i_1-i_2|\ge 2 \text{~for all~} i_1,i_2\in\cS \text{~such that~} i_1\neq i_2, \\
	M_j=     & \max_{\cS\subseteq\{1,2,\dots,j\}} \sum_{i\in\cS}\texttt{Score}(i)                                \\
	         & \text{subject to: } |i_1-i_2|\ge 2 \text{~for all~} i_1,i_2\in\cS \text{~such that~} i_1\neq i_2.
\end{align*}
By definition, we can see that $M_1\le M_2\le M_3\le \dots \le M_{n/2-1}$.
The sets $\cS_1^*,\cS_2^*$ and the maximum values $M_1,M_2$ can be calculated as follows: If $\texttt{Score}(1)>0$, then $\cS_1^*=\{1\}$ and $M_1=\texttt{Score}(1)$.
If $\texttt{Score}(1)\le 0$, then $\cS_1^*=\emptyset$ and $M_1=0$.
If $\texttt{Score}(2)>M_1$, then $\cS_2^*=\{2\}$ and $M_2=\texttt{Score}(2)$.
If $\texttt{Score}(2)\le M_1$, then $\cS_2^*=\cS_1^*$ and $M_2=M_1$.
For $j\ge 3$, the set $\cS_j^*$ and the maximum value $M_j$ can be calculated recursively as follows: If $\texttt{Score}(j)+M_{j-2}>M_{j-1}$, then $\cS_j^*=\cS_{j-2}^*\cup\{j\}$ and $M_j=\texttt{Score}(j)+M_{j-2}$.
If $\texttt{Score}(j)+M_{j-2}\le M_{j-1}$, then $\cS_j^*=\cS_{j-1}^*$ and $M_j=M_{j-1}$.
This dynamic programming algorithm allows us to calculate $\cS_j^*$ for every $1\le j\le n/2-1$.
In particular, we are able to calculate $\cS_{n/2-1}^*=\cS^*$, which is the set we want to construct.
Once we know the set $\cS^*=\{j_1,j_2,\dots,j_s\}$, we can immediately write out the set $\cI^{(n)}=\{2j_1,2j_2,\dots,2j_s\}$ and obtain the corresponding permutation matrix $\mathbf{P}_n^{\ABS}$ according to \eqref{eq:pi}.

As a final remark, we note that $\mathbf{P}_2^{\ABS}$ is always the identity matrix.
However, for $n\ge 4$, the permutation matrix $\mathbf{P}_n^{\ABS}$ depends on the underlying BMS channel $W$.

\subsection{Quantization of the output alphabet} \label{sect:quantization}

\begin{algorithm}[H]
	\DontPrintSemicolon
	\caption{\texttt{QuantizeChannel}$(\mu,V)$}
	\label{algo:quantization}
	\KwIn{an upper bound $\mu$ on the output alphabet size after quantization; an adjacent-bits-channel $V$ with outputs $y_1,y_2,\dots,y_M$}

	\vspace*{0.05in}
	\KwOut{quantized channel $\widetilde{V}$ with outputs $\{\tilde{y}_{i_1,i_2,i_3}: 0\le i_1,i_2,i_3\le b\}$}

	\vspace*{0.05in}
	\If{$M\le \mu$}
	{
		\vspace*{0.05in}
		Set $\widetilde{V}$ to be the same as $V$
	}
	\Else
	{
	$b\gets \lfloor\mu^{1/3}\rfloor - 1$

	\vspace*{0.05in}
	Set $\widetilde{V}(\tilde{y}_{i_1,i_2,i_3}|(u_1,u_2))=0$ for all $0\le i_1,i_2,i_3\le b$ and all $u_1,u_2\in\{0,1\}$

	\Comment{Initialize all the transition probabilities of $\widetilde{V}$ as $0$}

	\For{$j=1,2,\dots,M$}
	{
	$sum\gets V(y_j|(0,0))+V(y_j|(0,1))+V(y_j|(1,0))+V(y_j|(1,1))$

	\vspace*{0.05in}
	$p_1\gets \frac{V(y_j|(0,0))}{sum}$, \quad
	$p_2\gets \frac{V(y_j|(0,1))}{sum}$, \quad
	$p_3\gets \frac{V(y_j|(1,0))}{sum}$

	\Comment{Calculate the posterior probability of $y_j$}

	$i_1\gets \lfloor b p_1 \rfloor$, \quad
	$i_2\gets \lfloor b p_2 \rfloor$, \quad
	$i_3\gets \lfloor b p_3 \rfloor$

	\vspace*{0.05in}
	$\widetilde{V}(\tilde{y}_{i_1,i_2,i_3}|(u_1,u_2))\gets \widetilde{V}(\tilde{y}_{i_1,i_2,i_3}|(u_1,u_2)) + V(y_j|(u_1,u_2))$ for all $u_1,u_2\in\{0,1\}$

		\Comment{Merge $y_j$ into $\tilde{y}_{i_1,i_2,i_3}$}
		}
		}

		\Return $\widetilde{V}$

\end{algorithm}

An important step in the construction of standard polar codes is to quantize the output alphabets of the bit-channels $\{W_i^{(n)}\}_{i=1}^n$ because the output alphabet size grows exponentially with the code length $n$.
The most widely used quantization method for binary-input standard polar codes was given in \cite{Tal13}, where the main idea is to merge output symbols with similar posterior distributions using a greedy algorithm.
This greedy algorithm was later generalized to construct polar codes with non-binary input alphabets \cite{TSV12,Pereg17,Gulcu18}.
The time complexity of the greedy quantization algorithm is $O(\mu^2\log \mu)$, where $\mu$ is the maximum size of the output alphabet after quantization.
Since there are $2n-1$ bit-channels we need to quantize in the code construction procedure, the overall time complexity of standard polar code construction is $O(n\mu^2\log \mu)$.

In the ABS polar code construction, the output alphabet size of the adjacent-bits-channels $\{V_i^{(n)}\}_{i=1}^{n-1}$ also grows exponentially with $n$, and the quantization operations are also needed.
Since the adjacent-bits-channels have $4$-ary inputs, we can simply use the greedy quantization algorithms proposed in \cite{TSV12,Pereg17,Gulcu18} for polar codes with non-binary inputs.
However, in practical implementations, we found that these greedy algorithms for non-binary inputs usually involve implicit large constants in their time complexity.
Therefore, we propose a new quantization algorithm to merge the output symbols of the adjacent-bits-channels $\{V_i^{(n)}\}_{i=1}^{n-1}$.
	The time complexity of our new quantization algorithm is $O(\mu^2)$.
Since there are $\Theta(n)$ adjacent-bits-channels we need to quantize in the ABS polar code construction, its overall time complexity is $O(n\mu^2)$.

Our new quantization algorithm works as follows.
Given an upper bound $\mu$ on the output alphabet size after quantization, we define $b=\lfloor\mu^{1/3}\rfloor - 1$.
For an adjacent-bits-channel $V$, we write its $4$ inputs as $(0,0),(0,1),(1,0),(1,1)$, and we write its outputs as $y_1,y_2,\dots,y_M$, where $M$ is the output alphabet size of $V$.
We use $\widetilde{V}$ to denote the channel after output quantization.
The $4$ inputs of $\widetilde{V}$ are the same as the original channel $V$, and the outputs of $\widetilde{V}$ are written as $\{\tilde{y}_{i_1,i_2,i_3}: 0\le i_1,i_2,i_3\le b\}$.
Clearly, the output alphabet size of $\widetilde{V}$ is no larger than $\mu$.
With the above notation in mind, we present our quantization algorithm in Algorithm~\ref{algo:quantization}.
In our implementation, we pick $\mu=250000$.

\subsection{Summary of the ABS polar code construction} \label{sect:summary_cons}
In Section~\ref{sect:SDB}, we showed how to calculate the transition probabilities of the adjacent-bits-channels $\{V_i^{(n),\ABS}\}_{i=1}^{n-1}$ when the permutation matrices $\mathbf{P}_2^{\ABS},\mathbf{P}_4^{\ABS},\mathbf{P}_8^{\ABS},\dots,\mathbf{P}_n^{\ABS}$ in \eqref{eq:GnABS} are known.
In Section~\ref{sect:cons_PnABS}, we showed how to construct the permutation matrix $\mathbf{P}_n^{\ABS}$ when the transition probabilities of $\{V_i^{(n/2),\ABS}\}_{i=1}^{n/2-1}$ are available.
In Section~\ref{sect:quantization}, we proposed Algorithm~\ref{algo:quantization} to quantize the output alphabets of the adjacent-bits-channels.
Now we are in a position to put everything together and present the code construction algorithm for ABS polar codes in Algorithm~\ref{algo:ABS_Construction}.

\begin{algorithm}[H]
	\DontPrintSemicolon
	\caption{\texttt{ABSConstruct}$(n,k,W)$}
	\label{algo:ABS_Construction}
	\KwIn{code length $n=2^m\ge 4$, code dimension $k$, and the BMS channel $W$}
	\KwOut{the permutation matrices $\mathbf{P}_2^{\ABS},\mathbf{P}_4^{\ABS},\mathbf{P}_8^{\ABS},\dots,\mathbf{P}_n^{\ABS}$, and the index set $\cA$ of the information bits}

	Quantize the output alphabet of $W$ using the method in \cite{Tal13}
	\Comment{This step is needed when the output alphabet size of $W$ is very large, e.g., when $W$ has a continuous output alphabet.}

	Set $\mathbf{P}_2^{\ABS}$ to be the identity matrix

	Calculate the transition probability of $V_1^{(2),\ABS}$ from $W$ using \eqref{eq:v_abs_init}

	Quantize the output alphabet of $V_1^{(2),\ABS}$ using Algorithm~\ref{algo:quantization}

	\For{$n_0=4,8,16,\dots,n$}
	{
	Construct $\mathbf{P}_{n_0}^{\ABS}$ from $\{V_i^{(n_0/2),\ABS}\}_{i=1}^{n_0/2-1}$ using the method in Section~\ref{sect:cons_PnABS}

	Calculate the transition probabilities of $\{V_i^{(n_0),\ABS}\}_{i=1}^{n_0-1}$ from $\mathbf{P}_{n_0}^{\ABS}$ and $\{V_i^{(n_0/2),\ABS}\}_{i=1}^{n_0/2-1}$ using Lemma~\ref{lemma:recur_ABS}

	Quantize the output alphabets of $\{V_i^{(n_0),\ABS}\}_{i=1}^{n_0-1}$ using Algorithm~\ref{algo:quantization}
	}

	Calculate the transition probabilities of $\{W_i^{(n),\ABS}\}_{i=1}^n$ from the transition probabilities of $\{V_i^{(n),\ABS}\}_{i=1}^{n-1}$.

	Sort the capacity of the bit-channels $\{W_i^{(n),\ABS}\}_{i=1}^n$ to obtain $I(W_{i_1}^{(n),\ABS})\ge I(W_{i_2}^{(n),\ABS})\ge \dots\ge I(W_{i_n}^{(n),\ABS})$, where $\{i_1,i_2,\dots,i_n\}$ is a permutation of $\{1,2,\dots,n\}$

	$\cA\gets\{i_1,i_2,\dots,i_k\}$

	\Return $\mathbf{P}_2^{\ABS},\mathbf{P}_4^{\ABS},\mathbf{P}_8^{\ABS},\dots,\mathbf{P}_n^{\ABS},\cA$

\end{algorithm}

\subsection{Necessity of the condition \eqref{eq:separated_even}} \label{sect:necess}

The condition \eqref{eq:separated_even} is necessary for us to derive a recursive relation between $\{V_i^{(n),\ABS}\}_{i=1}^{n-1}$ and $\{V_i^{(n/2),\ABS}\}_{i=1}^{n/2-1}$. In order to prove this claim, we introduce some notation. Instead of $(U_1,U_2,\dots,U_n)$, now we use $(U_1^{(n)},U_2^{(n)},\dots,U_n^{(n)})$ to denote the message vector. We add the superscript $(n)$ in the notation to distinguish between random variables in different layers. Define
$$
	(\widehat{U}_1^{(n)},\widehat{U}_2^{(n)},\dots,\widehat{U}_n^{(n)}) = (U_1^{(n)},U_2^{(n)},\dots,U_n^{(n)}) \mathbf{P}_n^{\ABS} .
$$
We further define random vectors $(U_{1,1}^{(n/2)},U_{2,1}^{(n/2)},\dots,U_{n/2,1}^{(n/2)})$ and $(U_{1,2}^{(n/2)},U_{2,2}^{(n/2)},\dots,U_{n/2,2}^{(n/2)})$ as follows:
$$
	U_{i,1}^{(n/2)}=\widehat{U}_{2i-1}^{(n)} + \widehat{U}_{2i}^{(n)} , \quad
	U_{i,2}^{(n/2)}=\widehat{U}_{2i}^{(n)} ,
$$
i.e., the vectors $(U_{1,1}^{(n/2)},U_{2,1}^{(n/2)},\dots,U_{n/2,1}^{(n/2)})$ and $(U_{1,2}^{(n/2)},U_{2,2}^{(n/2)},\dots,U_{n/2,2}^{(n/2)})$ are obtained from applying one layer of polar transform to $(\widehat{U}_1^{(n)},\widehat{U}_2^{(n)},\dots,\widehat{U}_n^{(n)})$. By definition, $V_i^{(n),\ABS}$ gives us the conditional distribution of $(U_i^{(n)},U_{i+1}^{(n)})$ given the channel outputs and the previous message bits; $V_i^{(n/2),\ABS}$ gives us the conditional distribution of $(U_{i,1}^{(n/2)},U_{i+1,1}^{(n/2)})$ and the conditional distribution of $(U_{i,2}^{(n/2)},U_{i+1,2}^{(n/2)})$ given the channel outputs and the previous message bits. Therefore, deriving a recursive relation between $\{V_i^{(n),\ABS}\}_{i=1}^{n-1}$ and $\{V_i^{(n/2),\ABS}\}_{i=1}^{n/2-1}$ is equivalent to the following task: Suppose that we know the joint distribution\footnote{More precisely, this should be the conditional distribution of $(U_{i,j}^{(n/2)},U_{i+1,j}^{(n/2)})$ given the channel outputs and the previous message bits. Similarly, the joint distribution of $(U_i^{(n)},U_{i+1}^{(n)})$ in the next sentence also refers to the conditional distribution.} of $(U_{i,j}^{(n/2)},U_{i+1,j}^{(n/2)})$ for all $1\le i\le n/2-1$ and $j\in\{1,2\}$. The task is to calculate the joint distribution of $(U_i^{(n)},U_{i+1}^{(n)})$ for all $1\le i\le n-1$. We will show that it is not possible to accomplish this task without the condition \eqref{eq:separated_even}.

Suppose that the condition \eqref{eq:separated_even} does not hold. Then there exists an integer $i$ such that we swap the adjacent bits $\widehat{U}_{2i}^{(n)}$ and $\widehat{U}_{2i+1}^{(n)}$, and we also swap $\widehat{U}_{2i+2}^{(n)}$ and $\widehat{U}_{2i+3}^{(n)}$; see Fig.~\ref{fig:nosepar} for an illustration. According to our assumption, we know the joint distribution of $(U_{i,1}^{(n/2)},U_{i+1,1}^{(n/2)})$ and the joint distribution of $(U_{i,2}^{(n/2)},U_{i+1,2}^{(n/2)})$. Moreover, $(U_{i,1}^{(n/2)},U_{i+1,1}^{(n/2)})$ and $(U_{i,2}^{(n/2)},U_{i+1,2}^{(n/2)})$ are independent. Therefore, we know the joint distribution of $(U_{i,1}^{(n/2)},U_{i,2}^{(n/2)},U_{i+1,1}^{(n/2)},U_{i+1,2}^{(n/2)})$. Since there is a one-to-one mapping between $(\widehat{U}_{2i-1}^{(n)},\widehat{U}_{2i}^{(n)},\widehat{U}_{2i+1}^{(n)},\widehat{U}_{2i+2}^{(n)})$ and $(U_{i,1}^{(n/2)},U_{i,2}^{(n/2)},U_{i+1,1}^{(n/2)},U_{i+1,2}^{(n/2)})$, we also know the distribution of $(\widehat{U}_{2i-1}^{(n)},\widehat{U}_{2i}^{(n)},\widehat{U}_{2i+1}^{(n)},\widehat{U}_{2i+2}^{(n)})$. Since $(U_{2i-1}^{(n)},U_{2i}^{(n)},U_{2i+1}^{(n)})$ is a function of $(\widehat{U}_{2i-1}^{(n)},\widehat{U}_{2i}^{(n)},\widehat{U}_{2i+1}^{(n)})$, we are able to calculate the joint distribution of $(U_{2i-1}^{(n)},U_{2i}^{(n)})$ and the joint distribution of $(U_{2i}^{(n)},U_{2i+1}^{(n)})$. Using a similar argument, we can show that we are able to calculate the joint distribution of $(U_{2i+2}^{(n)},U_{2i+3}^{(n)})$ and the joint distribution of $(U_{2i+3}^{(n)},U_{2i+4}^{(n)})$. The only problem is that we are not able to calculate the joint distribution of $(U_{2i+1}^{(n)},U_{2i+2}^{(n)})$. By definition,
$$
	U_{2i+1}^{(n)}=\widehat{U}_{2i}^{(n)}=U_{i,2}^{(n/2)}, \qquad
	U_{2i+2}^{(n)}=\widehat{U}_{2i+3}^{(n)}=U_{i+2,1}^{(n/2)}+U_{i+2,2}^{(n/2)} .
$$
Therefore, our task is to calculate the joint distribution of $(U_{i,2}^{(n/2)}, U_{i+2,1}^{(n/2)}+U_{i+2,2}^{(n/2)})$. Since the two random vectors $(U_{1,1}^{(n/2)},U_{2,1}^{(n/2)},\dots,U_{n/2,1}^{(n/2)})$ and $(U_{1,2}^{(n/2)},U_{2,2}^{(n/2)},\dots,U_{n/2,2}^{(n/2)})$ are independent, this further requires us to know the joint distribution of $(U_{i,2}^{(n/2)}, U_{i+2,2}^{(n/2)})$, which is not available. Therefore, we are not able to calculate the joint distribution of $(U_{2i+1}^{(n)},U_{2i+2}^{(n)})$. This proves the necessity of \eqref{eq:separated_even}.

\begin{figure}[ht]
	\centering
	\includegraphics[scale = 1.0]{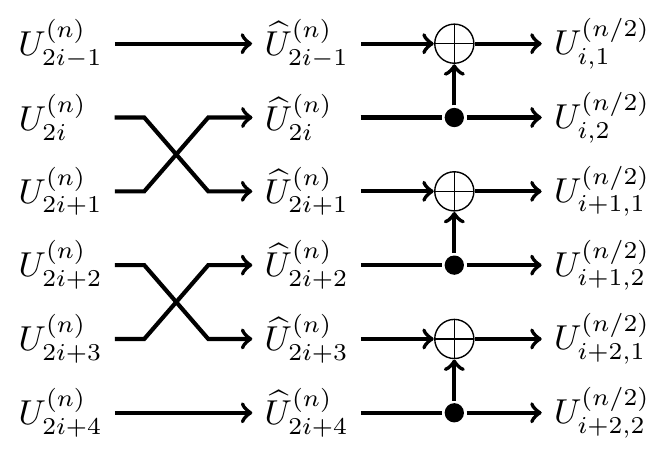}
	\caption{Swap the adjacent bits $\widehat{U}_{2i}$ and $\widehat{U}_{2i+1}$. Also swap $\widehat{U}_{2i+2}$ and $\widehat{U}_{2i+3}$.}
	\label{fig:nosepar}
\end{figure}

\section{The encoding algorithm for ABS polar codes} \label{sect:encoding}

In this section, we present the encoding algorithm of ABS polar codes.
Suppose that we have constructed an $(n,k)$ ABS polar code with permutation matrices $\mathbf{P}_2^{\ABS},\mathbf{P}_4^{\ABS},\mathbf{P}_8^{\ABS},\dots,\mathbf{P}_n^{\ABS}$ and the index set $\cA=\{i_1,i_2,\dots,i_k\}$ of the information bits.
We present the encoding algorithm of this code in Algorithm~\ref{algo:ABS_encoding} below.

\begin{algorithm}[H]
	\DontPrintSemicolon
	\caption{\texttt{Encode}$((m_1,m_2,\dots,m_k))$}
	\label{algo:ABS_encoding}
	\KwIn{the message vector $(m_1,m_2,\dots,m_k)\in\{0,1\}^k$}
	\KwOut{the codeword $(c_1,c_2,\dots,c_n)\in\{0,1\}^n$, where $n=2^m$ is the code length}

	Initialize $(c_1,c_2,\dots,c_n)$ as the all-zero vector

	$(c_{i_1},c_{i_2},\dots,c_{i_k})\gets (m_1,m_2,\dots,m_k)$

	\Comment{Recall that $i_1,i_2,\dots,i_k$ are the indices of the information bits.}

	\For{$i=0,1,2,3,\dots,m-1$}
	{
	$t\gets 2^i$

	$n_0\gets 2^{m-i}$

	\For{$h=1,2,3,\dots,t$}
	{
	$(c_h,c_{h+t},c_{h+2t},c_{h+3t},\dots,c_{h+(n_0-1)t})\gets (c_h,c_{h+t},c_{h+2t},c_{h+3t},\dots,c_{h+(n_0-1)t}) \mathbf{P}_{n_0}^{\ABS}$

	\Comment{Line 8 is the only difference between the encoding algorithms for ABS polar codes and standard polar codes}

	\For{$j=0,1,2,3,\dots,n_0/2-1$}
	{
	$c_{h+2jt} \gets c_{h+2jt}+c_{h+2jt+t}$

	\Comment{The addition between $c_{h+2jt}$ and $c_{h+2jt+t}$ is over the binary field}
	}
	}
	}

	\Return  $(c_1,c_2,\dots,c_n)$
\end{algorithm}

\begin{figure*}
	\centering
	\includegraphics[scale = 0.75]{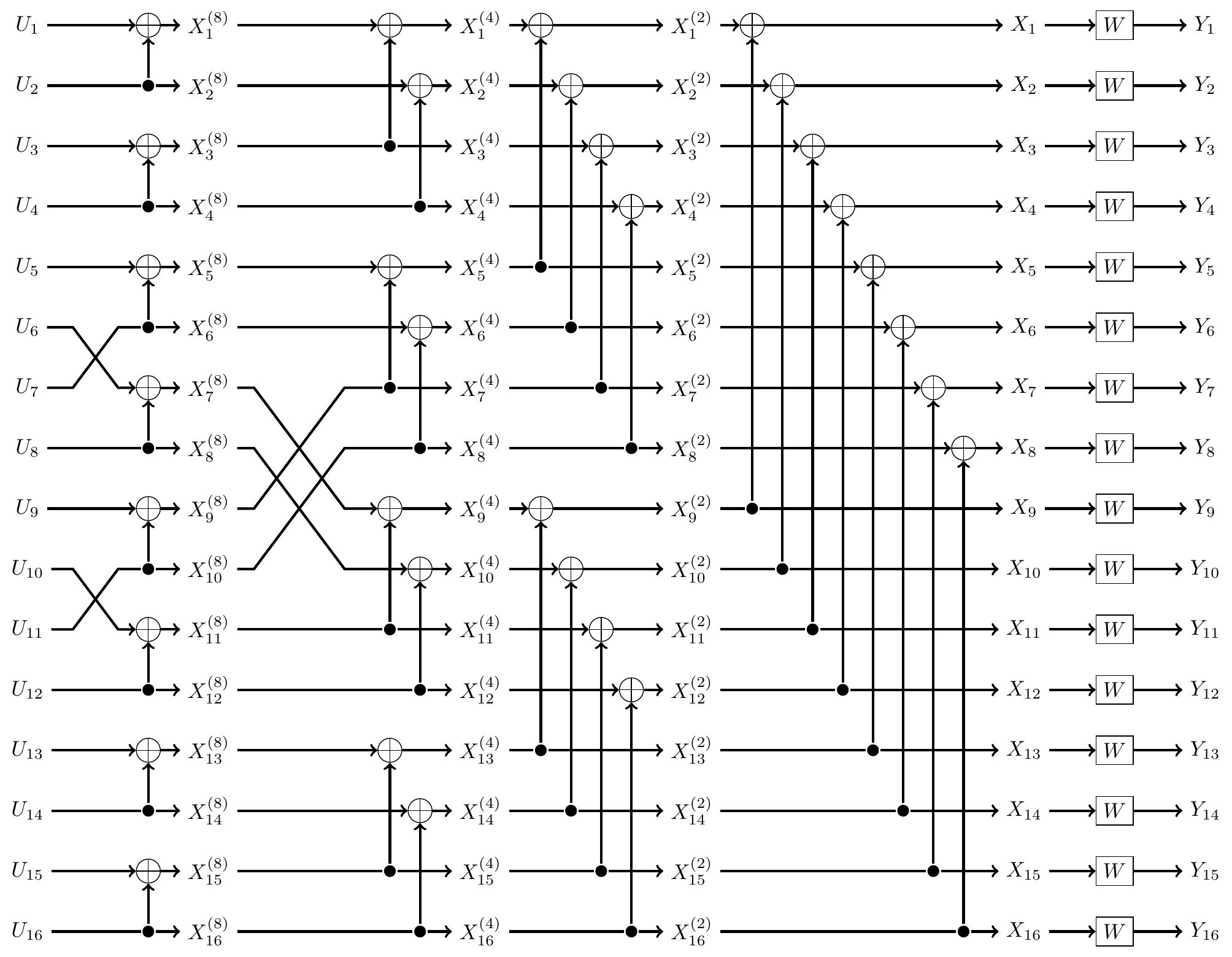}
	\caption{Encoding circuit of an $(n=16, k=8)$ ABS polar code defined by the sets in \eqref{eq:enc_example}.}
	\label{fig:example_enc}
\end{figure*}

Without Line 8, Algorithm~\ref{algo:ABS_encoding} is the same as the encoding algorithm of standard polar codes, whose time complexity is $O(n\log(n))$.
In line 8, we perform a permutation on $n_0$ elements.
According to our code construction, each of these $n_0$ elements is swapped at most once, so the number of operations involved in this permutation is no more than $n_0=2^{m-i}$.
From the for loop in Line 7, we can see that Line 8 is executed $t=2^i$ times for each $i\in\{0,1,\dots,m-1\}$.
In other words, for each fixed value of $i$, Line 8 induces at most $n_0*t=2^m=n$ operations.
Therefore, the total number of operations induced by Line 8 is upper bounded by $n*m=n\log(n)$.
Thus we conclude that the encoding complexity of ABS polar codes is still $O(n\log(n))$.

\begin{proposition}
	The encoding time complexity of ABS polar codes is $O(n\log(n))$.
\end{proposition}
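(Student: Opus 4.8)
The plan is to read off the complexity directly from the loop structure of Algorithm~\ref{algo:ABS_encoding}, treating the permutation layer (Line 8) separately from the polar-transform operations (the remaining lines). First I would note that deleting Line 8 turns Algorithm~\ref{algo:ABS_encoding} into the standard butterfly encoder for polar codes, whose running time is $O(n\log n)$; thus it remains only to bound the total number of operations caused by the $m$ permutation layers.

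Second, I would bound a single execution of Line 8. There we apply $\mathbf{P}_{n_0}^{\ABS}$ to a block of $n_0 = 2^{m-i}$ coordinates. Writing $\mathbf{P}_{n_0}^{\ABS}$ in the form \eqref{eq:pi} and invoking the separation condition \eqref{eq:separated} (equivalently \eqref{eq:separated_even}), the transpositions $\mathbf{S}_{n_0}^{(i)}$ composing $\mathbf{P}_{n_0}^{\ABS}$ have pairwise disjoint supports, so each coordinate is moved at most once and the permutation is realized by at most $n_0$ copy operations. This disjoint-support observation is the only place where the structure of $\cI^{(n)}$ enters the complexity analysis, and it is the single point requiring any care.

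Third, I would count the executions of Line 8. The outer loop runs over $i = 0,1,\dots,m-1$; for fixed $i$ we have $t = 2^i$ and $n_0 = 2^{m-i}$, and the inner loop over $h$ executes Line 8 exactly $t$ times, each costing at most $n_0$ operations. Hence the permutation layer for a fixed $i$ costs at most $t\cdot n_0 = 2^m = n$, and summing over the $m$ values of $i$ gives $nm = n\log n$ for all permutation layers together. Combining this with the $O(n\log n)$ cost of the polar-transform operations yields total encoding complexity $O(n\log n)$, which is the claim. The main (and essentially only) obstacle is the disjoint-support claim of the second step; the rest is a direct reading of the loop bounds in Algorithm~\ref{algo:ABS_encoding}.
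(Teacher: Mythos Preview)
Your proposal is correct and follows essentially the same argument as the paper: both separate Line~8 from the rest of Algorithm~\ref{algo:ABS_encoding}, note that the remainder is the standard $O(n\log n)$ polar encoder, bound a single execution of Line~8 by $n_0$ operations because each coordinate is swapped at most once, and then sum $t\cdot n_0=n$ over the $m$ layers. Your explicit invocation of \eqref{eq:separated} to justify the disjoint-support claim is slightly more precise than the paper's ``according to our code construction,'' but the approaches are otherwise identical.
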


Note that the set $\cI^{(n)}$ in \eqref{eq:pi} uniquely determines the permutation matrix $\mathbf{P}_n^{\ABS}$.
In Fig.~\ref{fig:example_enc}, we present the encoding circuit of an $(n=16, k=8)$ ABS polar code defined by the following sets:
\begin{equation}  \label{eq:enc_example}
	\begin{aligned}
		 & \cI^{(2)} = \emptyset,\quad \cI^{(4)} = \emptyset,\quad \cI^{(8)} = \{4\},\quad \cI^{(16)} = \{6,10\} , \\
		 & \cA=\{9,10,11,12,13,14,15,16\} .
	\end{aligned}
\end{equation}

\section{The SCL decoder for ABS polar codes}\label{sect:SCL}

In this section, we present a new SCL decoder for ABS polar codes.
The organization of this section is as follows: In Section~\ref{sect:polar_decoding}, we recap the classic SCL decoder for standard polar codes based on the $2\times 2$ polar transform.
The purpose of doing so is to get ourselves familiar with the recursive structure, which is shared by both the classic SCL decoder and our new SCL decoder.
The SCL decoder presented in Section~\ref{sect:polar_decoding} is based on the one proposed in \cite{Tal15}.
While the classic SCL decoder is based on the $2\times 2$ polar transform, our new SCL decoder is based on the DB polar transform and the SDB polar transform; see Fig.~\ref{fig:DBpt} and Fig.~\ref{fig:SDBpt} for the definitions of these two transforms.
Instead of jumping directly into the decoding of ABS polar codes, we first present a new SCL decoder for standard polar codes based on the DB polar transform in Section~\ref{sect:ST_decoder_DB}.
This new SCL decoder for standard polar codes already contains most of the new ingredients in the SCL decoder for ABS polar codes, and it helps us learn these new ingredients in a familiar setting.
Finally, in Section~\ref{sect:ABS_decoder}, we present our new SCL decoder for ABS polar codes.

\subsection{SCL decoder for standard polar codes based on the $2\times 2$ polar transform}\label{sect:polar_decoding}
In this subsection, we recap the classic SCL decoder proposed in \cite{Tal15} for standard polar codes.
Suppose that the code length is $n=2^m$, and the upper bound of the list size in the SCL decoder is $L$.
We use $L_c\in \{1, 2, \dots, L\}$ to denote the current list size. $\cA$ is the index set of the information bits.

Before describing the decoding algorithms, let us introduce some notation and intermediate variables.
Following the notation in Fig.~\ref{fig:bit_channels_polar}, $(U_1, U_2, \dots, U_n)$ is the message vector, and we use $(X_1,\dots,X_n)$ and $(Y_1,\dots,Y_n)$ to denote the random codeword vector and the random channel output vector, respectively.
We use $(y_1,\dots,y_n)$ to denote a realization of the random vector $(Y_1,\dots,Y_n)$.
For each $0\le \lambda \le m$, we introduce an intermediate vector $(X_{1}^{(2^\lambda)}, X_{2}^{(2^\lambda)},\dots, X_{n}^{(2^\lambda)})$.
For $\lambda=m$, we define the intermediate vector as
\begin{equation} \label{eq:xnn}
	(X_1^{(n)}, X_2^{(n)}, \dots, X_n^{(n)})=(U_1, U_2, \dots, U_n).
\end{equation}
For $0\le \lambda \le m-1$, the intermediate vectors are defined recursively using the following relation:
\begin{equation}\label{eq:def_intermediate}
	\begin{aligned}
		  & (X_{1}^{(2^\lambda)}, X_{2}^{(2^\lambda)},\dots, X_{n}^{(2^\lambda)})                                                                                \\
		= & (X_{1}^{(2^{\lambda+1})}, X_{2}^{(2^{\lambda+1})},\dots, X_{n}^{(2^{\lambda+1})})(\mI_{2^\lambda}\otimes\mG_2^{\polar}\otimes\mI_{2^{m-\lambda-1}}),
	\end{aligned}
\end{equation}
where $\mI_n$ is the $n\times n$ identity matrix.
By definition, $(X_1^{(1)}, X_2^{(1)}, \dots, X_n^{(1)})=(X_1, X_2, \dots, X_n)$ is the codeword vector.
Intuitively, the intermediate vector $(X_{1}^{(2^\lambda)}, X_{2}^{(2^\lambda)},\dots, X_{n}^{(2^\lambda)})$ is obtained from performing $(m-\lambda)$ layers of polar transform on the message vector $(U_1, U_2, \dots, U_n)$.
Fig.~\ref{fig:example_enc} gives a concrete example of the intermediate vectors in an ABS polar code, which are similar to the ones in standard polar codes.
For each $0\le \lambda \le m$, $1\le i\le 2^\lambda$ and $1\le \beta \le 2^{m-\lambda}$, we introduce the shorthand notation
\begin{equation}\label{eq:subscript}
	\begin{aligned}
		 & X_{i,\beta}^{(\lambda)} = X_{\beta + (i-1)2^{m-\lambda}}^{(2^\lambda)}, \quad Y_{i,\beta}^{(\lambda)} = Y_{\beta + (i-1)2^{m-\lambda}} ,                                                                                 \\
		 & \mathbi{O}_{i,\beta}^{(\lambda)} =(X_{1, \beta}^{(\lambda)}, X_{2, \beta}^{(\lambda)}, \dots, X_{i-1, \beta}^{(\lambda)}, Y_{1, \beta}^{(\lambda)}, Y_{2, \beta}^{(\lambda)}, \dots, Y_{2^\lambda, \beta}^{(\lambda)}) .
	\end{aligned}
\end{equation}
According to the standard polar code construction, the $2^{m-\lambda}$ random vectors
\begin{equation*}
	\begin{aligned}
		\left\{  (X_{1, \beta}^{(\lambda)}, X_{2, \beta}^{(\lambda)}, \dots, X_{2^\lambda, \beta}^{(\lambda)}, Y_{1, \beta}^{(\lambda)}, Y_{2, \beta}^{(\lambda)}, \dots, Y_{2^\lambda, \beta}^{(\lambda)})  \right\}_{\beta = 1}^{2^{m-\lambda}}
	\end{aligned}
\end{equation*}
are independent and identically distributed.
Moreover, the channel mapping from $X_{i,\beta}^{(\lambda)}$ to $\mathbi{O}_{i,\beta}^{(\lambda)}$ is the bit-channel $W_{i}^{(2^\lambda)}$ for every $1\le \beta \le 2^{m-\lambda}$, where $W_{i}^{(2^\lambda)}$ is defined recursively using the relation \eqref{eq:recur_bit_channels}.

Recall that $(y_1, \dots, y_n)$ is a realization of the random vector $(Y_1,\dots,Y_n)$. For each $0\le \lambda \le m$, $1\le i\le 2^\lambda$ and $1\le \beta \le 2^{m-\lambda}$, we introduce the shorthand notation $y_{i,\beta}^{(\lambda)} = y_{\beta + (i-1)2^{m-\lambda}}$, and we use $\hat x_{i,\beta}^{(\lambda)}$ to denote the decoded value of $X_{i,\beta}^{(\lambda)}$. Moreover, we define a vector
\begin{equation}\label{eq:decoded_output_vector}
	\begin{aligned}
		\hat{\mathbi{o}}_{i,\beta}^{(\lambda)} = (\hat x_{1,\beta}^{(\lambda)}, \hat x_{2,\beta}^{(\lambda)}, \dots, \hat x_{i-1,\beta}^{(\lambda)},  y_{1, \beta}^{(\lambda)}, y_{2, \beta}^{(\lambda)}, \dots, y_{2^\lambda, \beta}^{(\lambda)}) .
	\end{aligned}
\end{equation}
By the analysis above, we have
\begin{equation} \label{eq:PWI}
	\mathbb{P} \big( \mathbi{O}_{i,\beta}^{(\lambda)}=\hat{\mathbi{o}}_{i,\beta}^{(\lambda)} \big| X_{i,\beta}^{(\lambda)}=b \big) = W_{i}^{(2^\lambda)} \big( \hat{\mathbi{o}}_{i,\beta}^{(\lambda)} \big| b \big)   \qquad \text{for~} b\in\{0,1\} .
\end{equation}

Now we are ready to introduce the data structures used in the SCL decoder for standard polar codes.
Most of the data structures below are also used in the SCL decoder for ABS polar codes.
\begin{enumerate}[(i)]
	\item 4-dimensional \emph{probability array $\tD$.}
	      The entries in the array $\tD$ are indexed as
	      \begin{align*}
		      \tD[\lambda, s, \beta, b],\quad & 0\le \lambda \le m,\qquad~ 1\le s\le L,          \\
		                                      & 1\le \beta \le 2^{m-\lambda},\quad 0\le b \le 1.
	      \end{align*}
	      For each $0\le \lambda \le m, 1\le s \le L$,  we define a subarray of $\tD$ as
	      \begin{equation*}
		      \tD[\lambda, s]=  (\tD[\lambda, s, \beta, b],\quad 1\le \beta\le 2^{m-\lambda}, \quad b\in\{0,1\}),
	      \end{equation*}
	      and we use $\vec{\tD}[\lambda, s]$ to denote the pointer to the head address of $\tD[\lambda, s]$.
	      In the algorithms below, we will write $\tD[\lambda, s, \beta, b]$ and $\vec{\tD}[\lambda, s][\beta, b]$ interchangeably.
	      Each array $\tD[\lambda, s]$ is used to store a set of transition probabilities in \eqref{eq:PWI}.

	\item 1-dimensional \emph{integer array $\numD$.}
	      The entries of $\numD$ are $\numD[\lambda], 0\le \lambda \le m$. The entry $\numD[\lambda]$ takes value in the set $\{0,1,2,\dots,L\}$ for every $0\le \lambda \le m$. The value of $\numD[\lambda]$ has the following meaning: The arrays $\tD[\lambda, 1],\tD[\lambda, 2],\dots,\tD[\lambda, \numD[\lambda]]$ are currently occupied in the decoding procedure while the arrays $\tD[\lambda, \numD[\lambda]+1],\tD[\lambda, \numD[\lambda]+2],\dots,\tD[\lambda, L]$ are free to use. See Fig.~\ref{fig:ds_n_nd} for an illustration.

	      \begin{figure}
		      \centering
		      \includegraphics[scale = 1.2]{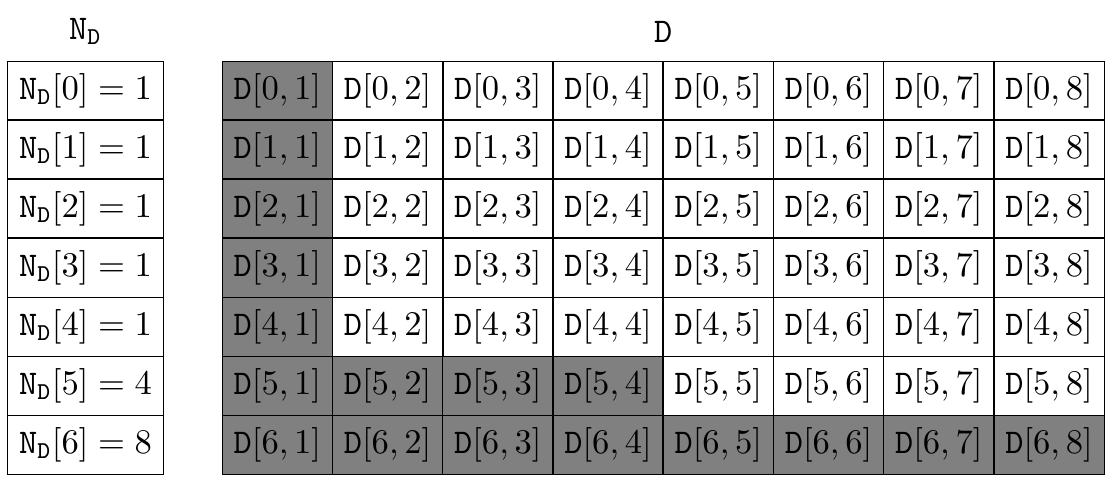}
		      \caption{An illustration of $\tD$ and $\numD$ for code length $n=64$ and list size $L = 8$. We put $\tD[\lambda, s]$ in a shaded cell if it is currently occupied; otherwise, we put it in a white cell. For example, $\numD[5] = 4$ means that  $\tD[5,1], \tD[5,2], \tD[5,3], \tD[5,4]$ have already been allocated to store some transition probabilities while $\tD[5,5], \tD[5,6], \tD[5,7], \tD[5,8]$ are free to use.}
		      \label{fig:ds_n_nd}
	      \end{figure}

	\item 3-dimensional \emph{bit array $\tB$.}
	      The entries in the array $\tB$ are indexed as
	      $$
		      \tB[\lambda, s, \beta],\quad 0\le \lambda \le m, \quad  1\le s\le 2L, \quad  1\le \beta \le 2^{m-\lambda}.
	      $$
	      For each $0\le \lambda \le m, 1\le s \le 2L$,  we define a subarray of $\tB$ as
	      \begin{equation*}
		      \tB[\lambda, s]=  (\tB[\lambda, s, \beta],\quad 1\le \beta\le 2^{m-\lambda}),
	      \end{equation*}
	      and we use $\vec{\tB}[\lambda, s]$ to denote the pointer to the head address of $\tB[\lambda, s]$.
	      In the algorithms below, we will write $\tB[\lambda, s, \beta]$ and  $\vec{\tB}[\lambda, s][\beta]$ interchangeably. Each array $\tB[\lambda, s]$ is used to store a set of decoding results of the intermediate vectors.

	\item 1-dimensional \emph{integer array $\numB$.}
	      The entries of $\numB$ are $\numB[\lambda], 0\le \lambda \le m$. The entry $\numB[\lambda]$ takes value in the set $\{0,1,2,\dots,2L\}$ for every $0\le \lambda \le m$. The value of $\numB[\lambda]$ has the following meaning: The arrays $\tB[\lambda, 1],\tB[\lambda, 2],\dots,\tB[\lambda, \numB[\lambda]]$ are currently occupied in the decoding procedure while the arrays $\tB[\lambda, \numB[\lambda]+1],\tB[\lambda, \numB[\lambda]+2],\dots,\tB[\lambda, 2L]$ are free to use.

	\item 1-dimensional \emph{probability array $\score$.}
	      The entries of $\score$ are $\score[\ell], 1\le \ell \le L_c$, where $L_c\in\{1,2,\dots,L\}$ is the current list size.
	      Each $\score[\ell]$ records the current transition probability of the $\ell$th candidate in the decoding list.
	      When the current list size is larger than the prescribed upper bound $L$, we prune the list according to the value of $\score[\ell]$.

	\item 2-dimensional \emph{pointer arrays $\tP, \barP$.} Their entries are
	      $$
		      \tP=(\tP[\ell,\lambda],~ 1\le \ell\le L,~ 0\le \lambda \le m), \qquad
		      \barP=(\barP[\ell,\lambda],~ 1\le \ell\le L,~ 0\le \lambda \le m) .
	      $$
	      We use $\tP[\ell,\lambda]$ to store the pointer $\vec{\tD}[\lambda, \numD[\lambda]+1]$, so that we can store the transition probabilities in the array $\tD[\lambda, \numD[\lambda]+1]$ and access them in the future. We usually assign values (i.e., pointers) to $\tP[\ell,\lambda]$ through the function $\texttt{allocate\_prob}$ in Algorithm~\ref{algo:st_allocate_prob}. The function $\texttt{allocate\_prob}$ is called in Line~4 of Algorithm~\ref{algo:ST_Decode}, Line~3 of Algorithm~\ref{algo:calcu_-_proba}, and Line~3 of Algorithm~\ref{algo:calcu_+_proba}. The array $\barP$ is a supplement to $\tP$. We use $\barP$ when the entries in $\tP$ are occupied.

	\item 2-dimensional \emph{pointer arrays $\tR, \barR$.}
	      Their entries are
	      $$
		      \tR=(\tR[\ell,\lambda],~ 1\le \ell\le L,~ 0\le \lambda \le m), \qquad
		      \barR=(\barR[\ell,\lambda],~ 1\le \ell\le L,~ 0\le \lambda \le m) .
	      $$
	      We use $\tR[\ell,\lambda]$ to store the pointer $\vec{\tB}[\lambda, \numB[\lambda]+1]$, so that we can store the decoding results of intermediate vectors in the array $\tB[\lambda, \numB[\lambda]+1]$ and access them in the future. We usually assign values (i.e., pointers) to $\tR[\ell,\lambda]$ through the function $\texttt{allocate\_bit}$ in Algorithm~\ref{algo:st_allocate_bit}. The function $\texttt{allocate\_bit}$ is called in Line~13 of Algorithm~\ref{algo:st_decode_channel} and Lines~10,16 of Algorithm~\ref{algo:st_decode_boundary_channel}. The array $\barR$ is a supplement to $\tR$. We use $\barR$ when the entries in $\tR$ are occupied.

	\item \emph{priority queue $\texttt{PriQue}$.}
	      $\texttt{PriQue}$ is a maximum priority queue with size $2L$ such that the element with the maximum value is always removed first from the queue.
	      We use $\texttt{PriQue}$ to record and prune candidate decoding paths.
	      Each element in the queue is a triple $(\ell, b, \prob)$ with the following meaning:
	      When we decode $U_i$ in the last layer $\lambda = m$, the (posterior) probability of $U_i = b$ in the $\ell$th decoding path is $\prob$. The queue $\texttt{PriQue}$ has 4 interfaces:
	      i)   $\texttt{PriQue.push}(\ell, b, \prob)$  pushes the element $(\ell, b, \prob)$ to the queue;
	      ii)  $\texttt{PriQue.pop}()$  removes the element $(\ell, b, \prob)$ with the maximum $\prob$ in the queue;
	      iii) $\texttt{PriQue.clear}()$ removes all the remaining elements in the  queue;
	      iv)  $\texttt{PriQue.size}()$ returns the current number of elements in the queue.
\end{enumerate}

We associate each candidate in the decoding list with a list element.
There are at most $L$ list elements in total. For $1\le \ell\le L$, the $\ell$th list element has the following fields:
\begin{equation}\label{eq:st_list_field}
	\begin{aligned}
		 & (\tP[\ell,0], \tP[\ell,1], \dots, \tP[\ell, m]), \\
		 & (\tR[\ell,0], \tR[\ell,1], \dots, \tR[\ell, m]), \\
		 & \score[\ell].
	\end{aligned}
\end{equation}

\begin{algorithm}[ht]
	\DontPrintSemicolon
	\caption{$\texttt{allocate\_prob}(\lambda)$}
	\label{algo:st_allocate_prob}
	\KwIn{layer $\lambda\in\{0, 1, 2, \dots, m\}$}
	\KwOut{a pointer to the allocated memory}

	$\numD[\lambda]\gets \numD[\lambda] + 1$

	\Return $\vec{\tD}[\lambda, \numD[\lambda]]$

\end{algorithm}

\begin{algorithm}[ht]
	\DontPrintSemicolon
	\caption{$\texttt{allocate\_bit}(\lambda)$}
	\label{algo:st_allocate_bit}
	\KwIn{layer $\lambda\in\{0, 1, 2, \dots, m\}$}
	\KwOut{a pointer to the allocated memory}

	$\numB[\lambda]\gets \numB[\lambda] + 1$

	\Return $\vec{\tB}[\lambda, \numB[\lambda]]$

\end{algorithm}

The function $\texttt{allocate\_prob}$ in Algorithm~\ref{algo:st_allocate_prob} and the function
$\texttt{allocate\_bit}$ in Algorithm~\ref{algo:st_allocate_bit} are used to allocate memory spaces throughout the decoding procedure.
$\texttt{allocate\_prob}(\lambda)$ returns the pointer to the next usable array in $\tD[\lambda, 1],\tD[\lambda, 2],\dots,\tD[\lambda, L]$ and updates the value of $\numD[\lambda]$. Similarly, $\texttt{allocate\_bit}(\lambda)$ returns the pointer to the next usable array in $\tB[\lambda, 1],\tB[\lambda, 2],\dots,\tB[\lambda, 2L]$ and updates the value of $\numB[\lambda]$.

We present the main function $\texttt{ST\_decode}((y_1, y_2,\dots, y_n))$ in Algorithm~\ref{algo:ST_Decode}.
Note that we only update the value of the current list size in the last layer $\lambda = m$, and we have only one list element in the beginning.
The first 3 lines initialize the parameters. In Line~4, we assign the pointer $\vec{\tD}[0,1]$ to $\tP[1, 0]$ and update the value of $\numD[0]$ to be $1$. In Lines~5--7, we store the transition probabilities of the whole channel output vector in the array $\tD[0,1]$. Line~8 executes recursive decoding which we will explain later. After recursive decoding, we obtain $L_c$ list elements. In the $\ell$th list element, $\score[\ell]$ is the transition probability which measures the likelihood of this list element, and the decoding result is stored in the array $(\tR[\ell,0][1],\tR[\ell,0][2],\dots,\tR[\ell,0][n])$. In Lines~9--17, we pick the list element with the maximum $\score[\ell]$ and return the corresponding decoding result.

\begin{algorithm}[ht]
	\DontPrintSemicolon
	\caption{\texttt{ST\_Decode}$((y_1,y_2,\dots,y_n))$}
	\label{algo:ST_Decode}
	\KwIn{the received vector $(y_1,y_2,\dots,y_n)\in\cY^n$}
	\KwOut{the decoded codeword $(\hat x_1,\hat x_2,\dots,\hat x_n)\in\{0,1\}^n$}

	\For{\em $\lambda \in \{1, 2, \dots, m\}$}{
		$\numD[\lambda] \gets 0, \quad \numB[\lambda] \gets 0$
	}

	$L_c\gets 1$

	$\tP[1, 0]\gets \texttt{allocate\_prob}(0)$

	\For{\em $\beta \in \{1, 2, \dots, n$\}}{
		\For{\em $b\in \{0,1\}$}{
			$\tP[1,0][\beta,b]\gets W(y_\beta|b)$
		}
	}

	\texttt{decode\_channel$(0, 1)$}
	\Comment{Algorithm~\ref{algo:st_decode_channel}, recursive decoding}

	$\max\_\score\gets 0$

	$\max\_\ell\gets 0$

	\For{$\ell\in\{1,2,\dots, L_c\}$}{
		\If{$\score[\ell]\ge \max\_\score$}{
			$\max\_\score\gets \score[\ell]$

			$\max\_\ell\gets \ell$
		}
	}

	\For{$\beta=1,2,\dots,n$}
	{
		$\hat x_\beta \gets \tR[\max\_\ell,0][\beta]$
	}

	\Return $(\hat x_1,\hat x_2,\dots,\hat x_n)$
\end{algorithm}

Before explaining the recursive decoding function $\texttt{decode\_channel}$ in Algorithm~\ref{algo:st_decode_channel}, let us introduce some additional notation. Recall that we defined a vector $\hat{\mathbi{o}}_{i,\beta}^{(\lambda)}$ in \eqref{eq:decoded_output_vector} which consists of both the decoding results of intermediate vectors and the channel outputs. This notation is designed for the SC decoder because we only have a single decoding result in the whole SC decoding procedure. However, we have multiple decoding results in the SCL decoder, so we need the following modification of the notation $\hat{\mathbi{o}}_{i,\beta}^{(\lambda)}$.
For each $1 \le \ell \le L_c$, we use $\hat x_{i,\beta}^{(\ell,\lambda)}$ to denote the decoded value of $X_{i,\beta}^{(\lambda)}$ in the $\ell$th list element, and we define a vector
\begin{equation}  \label{eq:listdecoded}
	\hat{\mathbi{o}}_{i,\beta}^{(\ell,\lambda)} = (\hat x_{1,\beta}^{(\ell,\lambda)}, \hat x_{2,\beta}^{(\ell,\lambda)}, \dots, \hat x_{i-1,\beta}^{(\ell,\lambda)},  y_{1, \beta}^{(\lambda)}, y_{2, \beta}^{(\lambda)}, \dots, y_{2^\lambda, \beta}^{(\lambda)}) .
\end{equation}
Then \eqref{eq:PWI} becomes
$$
	\mathbb{P} \big( \mathbi{O}_{i,\beta}^{(\lambda)}=\hat{\mathbi{o}}_{i,\beta}^{(\ell,\lambda)} \big| X_{i,\beta}^{(\lambda)}=b \big) = W_{i}^{(2^\lambda)} \big( \hat{\mathbi{o}}_{i,\beta}^{(\ell,\lambda)} \big| b \big)   \qquad \text{for~} b\in\{0,1\} .
$$

\begin{lemma}
	Suppose that $0\le \lambda \le m$ and $1\le i\le 2^\lambda$.
	Before we call the function $\texttt{decode\_channel}$ in Algorithm~\ref{algo:st_decode_channel} with input parameters $(\lambda, i)$, the pointer $\tP[\ell, \lambda]$ satisfies that
	\begin{equation}\label{eq:prob}
		\tP[\ell, \lambda][\beta, b] = W_{i}^{(2^\lambda)}(\hat{\mathbi o}_{i,\beta}^{(\ell,\lambda)} | b)
		\quad  \text{for all~} 1\le \ell\le L_c,~ 1\le \beta \le 2^{m-\lambda}
		\text{~and~} b\in\{0,1\} .
	\end{equation}
	After the function $\texttt{decode\_channel}(\lambda, i)$ in Algorithm~\ref{algo:st_decode_channel} returns, the pointer $\tR[\ell, \lambda]$ satisfies that
	\begin{equation} \label{eq:bit}
		\tR[\ell, \lambda][\beta] = \hat{x}_{i,\beta}^{(\ell, \lambda)}
		\quad  \text{for all~} 1\le \ell\le L_c \text{~and~}
		1\le \beta\le 2^{m-\lambda}.
	\end{equation}
\end{lemma}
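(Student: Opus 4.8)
The plan is to establish the two assertions \eqref{eq:prob} and \eqref{eq:bit} simultaneously by induction, working down from the last layer $\lambda=m$ to $\lambda=0$ and following the recursive call structure of \texttt{decode\_channel}. The only top-level invocation is \texttt{decode\_channel}$(0,1)$ in Line~8 of Algorithm~\ref{algo:ST_Decode}, and its precondition \eqref{eq:prob} holds there because Lines~5--7 set $\tP[1,0][\beta,b]=W(y_\beta\mid b)$ while $W_1^{(2^0)}=W$ and $\hat{\mathbi o}_{1,\beta}^{(1,0)}=(y_\beta)$. Every other invocation is made from inside \texttt{decode\_channel} itself with a strictly larger $\lambda$, so it suffices to prove, for each $(\lambda,i)$: (i) if \eqref{eq:prob} holds on entry to \texttt{decode\_channel}$(\lambda,i)$, then the body re-establishes \eqref{eq:prob} for each of its recursive subcalls; and (ii) if \eqref{eq:bit} holds after each of those subcalls returns, then the body establishes \eqref{eq:bit} for $(\lambda,i)$ on return.

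For the base case $\lambda=m$ we have $2^{m-\lambda}=1$, so only $\beta=1$ matters, $W_i^{(2^m)}=W_i^{(n)}$ is a genuine bit-channel, and $\hat{\mathbi o}_{i,1}^{(\ell,m)}=(\hat x_1^{(\ell)},\dots,\hat x_{i-1}^{(\ell)},y_1,\dots,y_n)$ is exactly the decoded message prefix together with all channel outputs. I would verify that \texttt{decode\_boundary\_channel} carries out the usual SCL step: by the precondition \eqref{eq:prob} each quantity $W_i^{(n)}(\hat{\mathbi o}_{i,1}^{(\ell,m)}\mid b)$ needed to rank candidates is available in $\tP[\ell,m][1,b]$; for a frozen index the bit is fixed to its frozen value in every surviving list element, for an information index the list forks, the triples are pushed into \texttt{PriQue} with the corresponding path metrics, the best $\min(2L_c,L)$ are retained (with list-element duplication via \texttt{allocate\_bit} when one path forks into two survivors), and $\score$ and $L_c$ are updated; in all cases the decoded bit is written into $\tR[\ell,m][1]$, which is \eqref{eq:bit}. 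There are no recursive subcalls in this case.

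For the inductive step $\lambda<m$, \texttt{decode\_channel}$(\lambda,i)$ first calls \texttt{calcu\_-\_proba}, which computes $\tP[\ell,\lambda+1]$ from $\tP[\ell,\lambda]$ by the ``$-$'' rule of \eqref{eq:polar_transform}; since $W_{2i-1}^{(2^{\lambda+1})}=(W_i^{(2^\lambda)})^-$ by \eqref{eq:recur_bit_channels} and the recursion \eqref{eq:def_intermediate} pairs up coordinate $\beta$ with coordinate $\beta+2^{m-\lambda-1}$ exactly as the $2\times2$ transform prescribes, the precondition \eqref{eq:prob} holds for $(\lambda+1,2i-1)$, and the inductive hypothesis gives $\tR[\ell,\lambda+1][\beta]=\hat x_{2i-1,\beta}^{(\ell,\lambda+1)}$ once \texttt{decode\_channel}$(\lambda+1,2i-1)$ returns. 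Then \texttt{calcu\_+\_proba} uses $\tP[\ell,\lambda]$ together with those just-decoded bits to form $\tP[\ell,\lambda+1]$ by the ``$+$'' rule of \eqref{eq:polar_transform}; since $W_{2i}^{(2^{\lambda+1})}=(W_i^{(2^\lambda)})^+$, the precondition \eqref{eq:prob} holds for $(\lambda+1,2i)$, and the inductive hypothesis yields $\tR[\ell,\lambda+1][\beta]=\hat x_{2i,\beta}^{(\ell,\lambda+1)}$ after \texttt{decode\_channel}$(\lambda+1,2i)$ returns. Finally \texttt{decode\_channel}$(\lambda,i)$ reassembles $\tR[\ell,\lambda]$ from the decoded minus- and plus-child bits by inverting \eqref{eq:def_intermediate}, so that $\tR[\ell,\lambda][\beta]=\hat x_{i,\beta}^{(\ell,\lambda)}$, which is \eqref{eq:bit}.

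The main obstacle is not any individual algebraic identity — the ``$-$''/``$+$'' probability updates are immediate from \eqref{eq:polar_transform} and the intermediate-vector recursion \eqref{eq:def_intermediate} — but synchronizing this probabilistic narrative with the pointer bookkeeping: one must check that \texttt{allocate\_prob} and \texttt{allocate\_bit} hand out exactly the slots that $\tP[\ell,\lambda]$ and $\tR[\ell,\lambda]$ name, that list duplication at a fork copies the right pointers so that \eqref{eq:prob} is inherited by the new paths, that the $\beta$-indexing is used consistently between layer $\lambda$ (range $2^{m-\lambda}$) and layer $\lambda+1$ (range $2^{m-\lambda-1}$) in \texttt{calcu\_-\_proba}, \texttt{calcu\_+\_proba} and the reassembly, and that pruning when $L_c$ would exceed $L$ only deletes list elements without disturbing the contents referenced by the survivors' pointers, so that \eqref{eq:prob} and \eqref{eq:bit} continue to hold for the surviving elements.
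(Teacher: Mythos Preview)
Your proposal is correct and follows essentially the same approach as the paper: both arguments verify \eqref{eq:prob} as a precondition at each call site (base case $\lambda=0$ via Lines~5--7 of Algorithm~\ref{algo:ST_Decode}, inductive step via \texttt{calculate\_$-$\_transform} and \texttt{calculate\_$+$\_transform} together with \eqref{eq:recur_bit_channels}) and verify \eqref{eq:bit} as a postcondition (base case $\lambda=m$ via \texttt{decode\_boundary\_channel}, inductive step via the reassembly in Lines~11--16 of Algorithm~\ref{algo:st_decode_channel} and the identity \eqref{eq:fthat}). The only organizational difference is that you package the two claims into a single structural induction on the recursion tree, whereas the paper runs two separate inductions in opposite directions (downward in $\lambda$ for \eqref{eq:bit}, upward for \eqref{eq:prob}); your packaging is arguably cleaner for algorithm correctness, but the content is the same, and the paper is somewhat more explicit than your last paragraph about tracing the pointer assignments line by line (e.g.\ the role of $\temppointer$ and the swap of $\barR$ and $\tR$).
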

\begin{proof}
	We prove \eqref{eq:bit} first, and we prove it by induction. Lines~1--2 of Algorithm~\ref{algo:st_decode_channel} deal with the base case $\lambda=m$, where we decode $U_i$ in the message vector $(U_1,U_2,\dots,U_n)$ by calling the function $\texttt{decode\_boundary\_channel}(i)$ in Algorithm~\ref{algo:st_decode_boundary_channel}.
	By \eqref{eq:subscript}, when $\lambda=m$, we have $X_{i,1}^{(m)}=X_i^{(n)}$. By \eqref{eq:xnn}, we further obtain that $X_{i,1}^{(m)}=U_i$.
	If $U_i$ is a frozen bit, then Line~17 of Algorithm~\ref{algo:st_decode_boundary_channel} immediately implies \eqref{eq:bit}.
	If $U_i$ is an information bit, we first use $\barR[\ell, m][1]$ to store the decoding result of $U_i$ in the $\ell$th list element\footnote{The variable $b$ in Line~11 of Algorithm~\ref{algo:st_decode_boundary_channel} is the decoding result of $U_i$ in the $\ell$th list element. We will explain Algorithm~\ref{algo:st_decode_boundary_channel} later.}; see Line~11 of Algorithm~\ref{algo:st_decode_boundary_channel}. Next we swap $\barR$ and $\tR$ in Line~13, so \eqref{eq:bit} is satisfied.

	For the inductive step, we assume that \eqref{eq:bit} holds for $\lambda+1$ and prove it for $\lambda$. By this induction hypothesis, after executing Line~6 of Algorithm~\ref{algo:st_decode_channel}, we have
	$$
		\tR[\ell, \lambda+1][\beta] = \hat{x}_{2i-1,\beta}^{(\ell, \lambda+1)}
		\quad  \text{for all~} 1\le \ell\le L_c \text{~and~}
		1\le \beta\le 2^{m-\lambda-1}.
	$$
	After executing Line~8 and Line~12 of Algorithm~\ref{algo:st_decode_channel}, we have
	$$
		\temppointer[\beta] = \hat{x}_{2i-1,\beta}^{(\ell, \lambda+1)}
		\quad  \text{for all~} 1\le \ell\le L_c \text{~and~}
		1\le \beta\le 2^{m-\lambda-1}.
	$$
	Again by the induction hypothesis, after executing Line~10, we have
	$$
		\tR[\ell, \lambda+1][\beta] = \hat{x}_{2i,\beta}^{(\ell, \lambda+1)}
		\quad  \text{for all~} 1\le \ell\le L_c \text{~and~}
		1\le \beta\le 2^{m-\lambda-1}.
	$$
	Since we set $n_c=2^\lambda$ in Line~4, we have $n/(2n_c)=2^{m-\lambda-1}$.
	Therefore, Lines~15--16 become
	\begin{equation} \label{eq:R+-}
		\begin{aligned}
			\tR[\ell, \lambda][\beta] = \hat{x}_{2i-1,\beta}^{(\ell, \lambda+1)} + \hat{x}_{2i,\beta}^{(\ell, \lambda+1)} , \quad
			\tR[\ell, \lambda][\beta + 2^{m-\lambda-1}] = \hat{x}_{2i,\beta}^{(\ell, \lambda+1)}
			\\
			\text{for all~} 1\le \ell\le L_c \text{~and~}
			1\le \beta\le 2^{m-\lambda-1}.
		\end{aligned}
	\end{equation}
	\eqref{eq:def_intermediate}--\eqref{eq:subscript} together imply that
	$$
		X_{i,\beta}^{(\lambda)} = X_{2i-1,\beta}^{(\lambda+1)} + X_{2i,\beta}^{(\lambda+1)}, \quad
		X_{i,\beta+2^{m-\lambda-1}}^{(\lambda)} = X_{2i,\beta}^{(\lambda+1)}  \quad
		\text{for all~} 1\le \beta\le 2^{m-\lambda-1}.
	$$
	This further implies that
	\begin{equation} \label{eq:fthat}
		\begin{aligned}
			\hat{x}_{i,\beta}^{(\ell, \lambda)} = \hat{x}_{2i-1,\beta}^{(\ell, \lambda+1)} + \hat{x}_{2i,\beta}^{(\ell, \lambda+1)}, \quad
			\hat{x}_{i,\beta+2^{m-\lambda-1}}^{(\ell, \lambda)} = \hat{x}_{2i,\beta}^{(\ell, \lambda+1)} \\
			\text{for all~} 1\le \ell\le L_c \text{~and~}
			1\le \beta\le 2^{m-\lambda-1}.
		\end{aligned}
	\end{equation}
	Combining this with \eqref{eq:R+-}, we complete the proof of \eqref{eq:bit}.

	Next we prove \eqref{eq:prob} by induction. This time the base case is $\lambda=0$, and this case only occurs once in Line~8 of Algorithm~\ref{algo:ST_Decode} during the whole decoding procedure. Note that the channel $W_1^{(1)}$ is $W$ itself. Therefore, Lines~5--7 of Algorithm~\ref{algo:ST_Decode} immediately imply \eqref{eq:prob} for $\lambda=0$.

	For the inductive step, we assume that \eqref{eq:prob} holds for $\lambda$ and prove it for $\lambda+1$. By this induction hypothesis, \eqref{eq:prob} holds for $\lambda$ when we execute Line~5 of Algorithm~\ref{algo:st_decode_channel}. In other words, the array associated with the pointer $\tP[\ell, \lambda]$ stores the transition probabilities of $W_i^{(2^\lambda)}$. By \eqref{eq:recur_bit_channels}, $W_{2i-1}^{(2^{\lambda+1})}$ is the ``$-$" transform of $W_i^{(2^\lambda)}$. The function \texttt{calculate\_$-$\_transform}$(\lambda+1)$ calculates the ``$-$" transform of $W_i^{(2^\lambda)}$ and stores the results in the array associated with the pointer $\tP[\ell, \lambda+1]$, so \eqref{eq:prob} holds before we call $\texttt{decode\_channel}$ in Line~6 of Algorithm~\ref{algo:st_decode_channel}. Again by \eqref{eq:recur_bit_channels}, $W_{2i}^{(2^{\lambda+1})}$ is the ``$+$" transform of $W_i^{(2^\lambda)}$. The function \texttt{calculate\_$+$\_transform}$(\lambda+1)$ in Line~9 of Algorithm~\ref{algo:st_decode_channel} calculates the ``$+$" transform of $W_i^{(2^\lambda)}$ and stores the results in the array associated with the pointer $\tP[\ell, \lambda+1]$, so \eqref{eq:prob} holds before we call $\texttt{decode\_channel}$ in Line~10 of Algorithm~\ref{algo:st_decode_channel}.

	During the whole decoding procedure, the function $\texttt{decode\_channel}$ is only called in Line~8 of Algorithm~\ref{algo:ST_Decode} and Lines~6,10 of Algorithm~\ref{algo:st_decode_channel}. We have proved that \eqref{eq:prob} holds for all three places. This completes the proof of the lemma.
\end{proof}

Now let us explain how Algorithm~\ref{algo:st_decode_boundary_channel} works when $U_i$ is an information bit. First, we explore both cases $U_i=0$ and $U_i=1$ for every list element; see Lines~2--4. The variable $b$ in Lines~3--4 represents the (possible) value of $U_i$. Since we explore two possible paths for each existing list element, we have expanded the list size by a factor of $2$ after executing Lines~2--4. If the current list size is larger than $L$, then we need to prune the list, and this is done in Lines~5--13. In Line~5, we update the current list size $L_c$ to be the smaller value among $L$ and the size of $\texttt{PriQue}$. Then in Lines~6--11, we execute \texttt{PriQue.pop()} $L_c$ times to obtain $L_c$ elements in the queue with the largest value of $\score[\ell]$. By Line~4, $\score[\ell]$ stores the transition probability $\tP[\ell, m][1,b]$, which measures the likelihood of the $\ell$th list element. Therefore, we obtain $L_c$ list elements with the largest likelihood after executing Lines~6--11.

\begin{algorithm}[ht]
	\DontPrintSemicolon
	\caption{\texttt{decode\_channel$(\lambda, i)$}}
	\label{algo:st_decode_channel}
	\KwIn{layer $\lambda\in \{0,1,2,\dots, m\}$ and index $i\in \{1, 2, \dots, 2^\lambda\}$}

	\uIf{$\lambda = m$}{
		\texttt{decode\_boundary\_channel}$(i)$
		\Comment{Algorithm~\ref{algo:st_decode_boundary_channel}}
	}\Else{
		$n_c\gets 2^\lambda$
		\Comment{$W_{2i-1}^{(2n_c)} = (W_{i}^{(n_c)})^-$}

		\texttt{calculate\_$-$\_transform}$(\lambda+1)$

		\texttt{decode\_channel}$(\lambda+1, 2i-1)$

		\For{$\ell \in\{1,2,\dots, L_c\}$}{
			$\tR[\ell, \lambda]\gets\tR[\ell, \lambda+1]$
		}

		\texttt{calculate\_$+$\_transform}$(\lambda+1)$
		\Comment{$W_{2i}^{(2n_c)} = (W_{i}^{(n_c)})^+$}

		\texttt{decode\_channel}$(\lambda+1, 2i)$

		\For{$\ell\in\{1,2, \dots L_c\}$}{
			$\temppointer\gets \tR[\ell, \lambda]$

			$\tR[\ell, \lambda]\gets \texttt{allocate\_bit}(\lambda)$

			\For{$\beta \in\{1, 2, \dots, n/(2n_c)\}$}{

				$\tR[\ell, \lambda][\beta]\gets \temppointer[\beta] + \tR[\ell,\lambda+1][\beta]$

				$\tR[\ell, \lambda][\beta + n/(2n_c)]\gets \tR[\ell,\lambda+1][\beta]$
			}
		}

		$\numB[\lambda+1]\gets 0$
	}

	$\numD[\lambda]\gets 0$

	\Return

\end{algorithm}

\begin{algorithm}[ht]
	\DontPrintSemicolon
	\caption{\texttt{decode\_boundary\_channel$(i)$}}
	\label{algo:st_decode_boundary_channel}
	\KwIn{index $i$ in the last layer $(\lambda = m)$}

	\eIf(\Comment{$U_i$ is an information bit}){$i\in\cA$}{

		\For{$\ell \in \{1,2,\dots, L_c\}$}{
			\For{$b\in\{0,1\}$}{
				$\texttt{PriQue.push}(\ell, b,\tP[\ell, m][1,b])$
			}
		}

		$L_c\gets \min\{L, \texttt{PriQue.size()}\}$

		\For{$\ell \in\{1,2,\dots, L_c\}$}{
			$(\ell', b, \score[\ell])\gets \texttt{PriQue.pop()}$

			\For{$\lambda \in\{0,1,2,\dots, m-1\}$}{
				$(\barP[\ell,\lambda], \barR[\ell,\lambda]) \gets (\tP[\ell',\lambda], \tR[\ell',\lambda])$
			}

			$\barR[\ell, m]\gets \texttt{allocate\_bit}(m)$

			$\barR[\ell, m][1]\gets b$
		}

		$\texttt{PriQue.clear()}$
		\Comment{Remove all the remaining elements}

		$\texttt{swap}(\barP, \tP)$,\quad $\texttt{swap}(\barR, \tR)$

	}(\Comment{$U_i$ is a frozen bit}){

		\For{$\ell \in \{1,2,\dots, L_c\}$}{
			$\tR[\ell, m]\gets \texttt{allocate\_bit}(m)$

			$\tR[\ell, m][1]\gets$ frozen value of $U_i$
		}
	}
	\Return
\end{algorithm}


\begin{algorithm}[ht]
	\DontPrintSemicolon
	\caption{\texttt{calculate\_$-$\_transform}$(\lambda)$}
	\label{algo:calcu_-_proba}
	\KwIn{layer $1\le \lambda \le m$}
	\KwOut{Update the entries pointed by $\tP[\ell, \lambda]$, $1\le \ell\le L_c $ }

	$\bar{n}_c \gets 2^{m-\lambda}$

	\For{$\ell\in\{1, 2, \dots, L_c\}$}{

		$\tP[\ell,\lambda]\gets \texttt{allocate\_prob}(\lambda)$

		\For{\em $\beta \in \{1,2,\dots,\bar{n}_c\}, a \in\{0,1\}$}{
			$\beta' \gets \beta + \bar{n}_c$

			$\tP[\ell, \lambda][\beta,a] \gets$ $\frac12\sum_{b\in\{0,1\}} \tP[\ell, \lambda-1][\beta, a+b]\tP[\ell, \lambda-1][\beta',b]$
		}
	}

	\Return

\end{algorithm}

\begin{algorithm}[ht]
	\DontPrintSemicolon
	\caption{\texttt{calculate\_$+$\_transform}$(\lambda)$}
	\label{algo:calcu_+_proba}
	\KwIn{layer $1\le \lambda \le m$}
	\KwOut{Update the entries pointed by $\tP[\ell, \lambda]$, $1\le \ell\le L_c $ }

	$\bar{n}_c \gets 2^{m-\lambda}$

	\For{$\ell\in\{1, 2, \dots, L_c\}$}{
		$\tP[\ell,\lambda]\gets \texttt{allocate\_prob}(\lambda)$

		\For{\em $\beta \in \{1,2,\dots,\bar{n}_c\},b\in\{0,1\}$}{
			$a \gets \tR[\ell, \lambda-1][\beta]$

			$\beta' \gets \beta + \bar{n}_c$

			$\tP[\ell, \lambda][\beta,b] \gets$ $\frac12 \tP[\ell, \lambda-1][\beta, a+b]\tP[\ell, \lambda-1][\beta',b]$
		}
	}

	\Return

\end{algorithm}

The next lemma shows that the data structures $\tD$ and $\tB$ are large enough to store the transition probabilities and the decoding results of the intermediate vectors throughout the decoding procedure.

\begin{lemma}\label{lemma:st_space}
	Throughout the whole decoding procedure, we have $\numD[\lambda]\le L$ and $\numB[\lambda]\le 2L$ for all $0\le \lambda \le m$.
	The space complexity of the SCL decoder is $O(Ln)$.
\end{lemma}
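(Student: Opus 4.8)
The plan is to bound the two allocation counters $\numD[\lambda]$ and $\numB[\lambda]$ by following the recursive structure of $\texttt{decode\_channel}$, and then to obtain the $O(Ln)$ bound by simply adding up the declared sizes of the data structures, which the counter bounds certify are never overflowed. Three elementary observations drive the argument. (a) The current list size obeys $L_c\le L$ throughout: it starts at $1$ in $\texttt{ST\_Decode}$ and is modified only in Line~5 of Algorithm~\ref{algo:st_decode_boundary_channel}, where it becomes $\min\{L,\texttt{PriQue.size}()\}$. (b) $\numD[\lambda]$ is increased only by $\texttt{allocate\_prob}(\lambda)$, which is called once for $\lambda=0$ (Line~4 of Algorithm~\ref{algo:ST_Decode}) and otherwise only inside the length-$L_c$ loops of $\texttt{calculate\_-\_transform}(\lambda)$ and $\texttt{calculate\_+\_transform}(\lambda)$; and $\numD[\lambda]$ is reset to $0$ in Line~18 of every call $\texttt{decode\_channel}(\lambda,\cdot)$. (c) $\numB[\lambda]$ is increased only by $\texttt{allocate\_bit}(\lambda)$, which appears inside the length-$L_c$ loop in Line~13 of $\texttt{decode\_channel}(\lambda,\cdot)$ and, when $\lambda=m$, inside the length-$L_c$ loops in Lines~10 and~16 of Algorithm~\ref{algo:st_decode_boundary_channel}; and $\numB[\lambda+1]$ is reset to $0$ in Line~17 of $\texttt{decode\_channel}(\lambda,\cdot)$. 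The pointer swaps in Line~13 of Algorithm~\ref{algo:st_decode_boundary_channel} change neither counter.

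The core of the proof is the following invariant, to be established by induction on the recursion tree: \emph{whenever $\texttt{decode\_channel}(\lambda,i)$ is entered, $\numD[\lambda']=\numB[\lambda']=0$ for all $\lambda'>\lambda$; and when it returns, $\numD[\lambda']=0$ for all $\lambda'\ge\lambda$ and $\numB[\lambda']=0$ for all $\lambda'>\lambda$.} The base case is the single top-level call $\texttt{decode\_channel}(0,1)$, whose precondition is supplied by Lines~1--2 of Algorithm~\ref{algo:ST_Decode}. For the inductive step with $\lambda<m$: on entry $\numD[\lambda+1]=0$, so $\texttt{calculate\_-\_transform}(\lambda+1)$ in Line~5 raises it to $L_c\le L$; the call $\texttt{decode\_channel}(\lambda+1,2i-1)$ meets its precondition and, by the inductive hypothesis, returns with $\numD[\lambda']=0$ for $\lambda'\ge\lambda+1$; $\texttt{calculate\_+\_transform}(\lambda+1)$ again raises $\numD[\lambda+1]$ to at most $L$ and the second recursive call restores it to $0$; finally Line~18 sets $\numD[\lambda]\gets0$. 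Since every recursive descendant uses $\texttt{allocate\_bit}$ only on layers $\ge\lambda+1$, the value $\numB[\lambda]$ within one invocation of $\texttt{decode\_channel}(\lambda,i)$ is touched only by its own Line~13, so a single invocation raises $\numB[\lambda]$ by exactly $L_c\le L$. The case $\lambda=m$ is analogous, with $\texttt{decode\_boundary\_channel}$ raising $\numB[m]$ by at most $L_c$ and Line~18 of $\texttt{decode\_channel}(m,\cdot)$ clearing $\numD[m]$.

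It then remains to assemble the bounds. Between the instant $\numB[\lambda]$ is last zeroed and the moment it is next zeroed (Line~17 of the parent $\texttt{decode\_channel}(\lambda-1,\cdot)$), the parent invokes $\texttt{decode\_channel}(\lambda,\cdot)$ exactly twice and nothing else changes $\numB[\lambda]$, so $\numB[\lambda]$ never exceeds $2L_c\le 2L$; for $\lambda=m$ the same reasoning applies through the two invocations of $\texttt{decode\_channel}(m,\cdot)$ made by $\texttt{decode\_channel}(m-1,\cdot)$. Likewise $\numD[\lambda]$ for $1\le\lambda\le m$ never exceeds the value $L_c\le L$ reached right after a call to $\texttt{calculate\_-\_transform}(\lambda)$ or $\texttt{calculate\_+\_transform}(\lambda)$, while $\numD[0]\le1$ and $\numB[0]\le L_c\le L$ because $\texttt{decode\_channel}(0,1)$ runs only once. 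This proves $\numD[\lambda]\le L$ and $\numB[\lambda]\le 2L$ for all $\lambda$, hence $\tD$ and $\tB$ as declared are large enough. For the space complexity we simply add sizes: $\tD$ has $\sum_{\lambda=0}^m 2L\cdot2^{m-\lambda}=2L(2^{m+1}-1)=O(Ln)$ entries, $\tB$ has $\sum_{\lambda=0}^m 2L\cdot2^{m-\lambda}=O(Ln)$ entries, the pointer arrays $\tP,\barP,\tR,\barR$ occupy $O(Lm)=O(L\log n)$, and $\score$ together with $\texttt{PriQue}$ and the counters $\numD,\numB$ occupy $O(L+m)$; the total is $O(Ln)$.

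The step I expect to require the most care is stating the invariant at the right level of resolution. The difficulty is that $L_c$ is not constant --- it changes inside $\texttt{decode\_boundary\_channel}$ during the descent into the last layer --- and, crucially, $\numB[\lambda]$ is deliberately \emph{not} reset between the two sibling calls $\texttt{decode\_channel}(\lambda,2i-1)$ and $\texttt{decode\_channel}(\lambda,2i)$; one therefore has to track $\numB[\lambda]$ at four separate checkpoints (entry to the parent, after the first sibling call, after the second, and just before the parent's Line~17) rather than simply asserting it is zero on entry. Everything else is routine bookkeeping.
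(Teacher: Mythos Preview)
Your proposal is correct and follows essentially the same approach as the paper's proof: bound $L_c\le L$, observe that $\numD[\lambda]$ is reset to $0$ at the end of every call to $\texttt{decode\_channel}(\lambda,\cdot)$ while being raised by at most $L_c$ in between, observe that $\numB[\lambda+1]$ is reset only after the two sibling calls $\texttt{decode\_channel}(\lambda+1,2i-1)$ and $\texttt{decode\_channel}(\lambda+1,2i)$ each of which contributes at most $L_c$, and then sum the declared array sizes. The paper's proof is terser---it simply asserts the key facts without formalizing an invariant or carrying out an explicit induction---whereas you spell out the invariant on the recursion tree and track the four checkpoints for $\numB[\lambda]$; but the underlying bookkeeping is identical, including the observation (which the paper handles by writing $L_c+L_c'\le 2L$) that the list size may change between the two sibling calls.
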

\begin{proof}
	For every $0 \le \lambda\le m$ and every $1\le i\le 2^{\lambda}$, the function $\texttt{decode\_channel}(\lambda, i)$ is called only once.
	Moreover, the function $\texttt{decode\_channel}(\lambda, i+1)$ is always called after the function $\texttt{decode\_channel}(\lambda, i)$ returns. Each time we call the function $\texttt{decode\_channel}(\lambda, i)$, we only need to store the transition probabilities for $L_c\le L$ different decoding paths, and we always reset $\numD[\lambda]$ to $0$ before the function $\texttt{decode\_channel}(\lambda, i)$ returns, so $\numD[\lambda]\le L$.

	We need to store the decoding results of intermediate vectors for $L_c\le L$ list elements when we call the function $\texttt{decode\_channel}(\lambda+1, 2i-1)$ in Line~6 of Algorithm~\ref{algo:st_decode_channel}. Similarly, we need to store the decoding results of intermediate vectors for another $L'_c\le L$ list elements\footnote{We use $L'_c$ here because the current list size may change over the decoding procedure.} when we call the function $\texttt{decode\_channel}(\lambda+1, 2i)$ in Line~10 of Algorithm~\ref{algo:st_decode_channel}. Therefore, before we reset $\numB[\lambda+1]$ to $0$ in Line~17, we have $\numB[\lambda+1]=L_c+L'_c\le 2L$. This proves that $\numB[\lambda]$ can not exceed $2L$ for all $0\le \lambda \le m$.

	Next we prove the $O(Ln)$ space complexity of the SCL decoder. The number of entries in the array $\tD$ is upper bounded by
	$$
		2L \sum_{\lambda=0}^m 2^{m-\lambda}= 2L(1+2+4+\dots+2^m) < 2L\cdot 2^{m+1}=4Ln .
	$$
	Similarly, the number of entries in $\tB$ is upper bounded by
	$$
		2L \sum_{\lambda=0}^m 2^{m-\lambda} < 4Ln .
	$$
	The number of entries in both $\numD$ and $\numB$ is $O(\log(n))$. The number of entries in both $\score$ and $\texttt{PriQue}$ is $O(L)$. The number of entries in the pointer arrays $\tP,\barP,\tR,\barR$ is $O(L\log(n))$. Adding these up gives us the $O(Ln)$ space complexity.
\end{proof}

\begin{proposition}
	The decoding time complexity of standard polar codes is $O(Ln\log(n))$.
\end{proposition}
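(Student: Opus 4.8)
The plan is to bound the running time by a layer-by-layer accounting, reusing two facts already established: for every $0\le\lambda\le m$ and $1\le i\le 2^\lambda$ the function $\texttt{decode\_channel}(\lambda,i)$ is called exactly once, and whenever it runs the current list size satisfies $L_c\le L$. First I would dispose of the cost of $\texttt{ST\_Decode}$ (Algorithm~\ref{algo:ST_Decode}) outside the recursion: Lines~5--7 spend $O(n)$ filling $\tD[0,1]$ with the channel probabilities $W(y_\beta\mid b)$, and Lines~9--17 spend $O(L+n)$ scanning $\score$ and copying the winning codeword into $(\hat x_1,\dots,\hat x_n)$. Both are dominated by $O(Ln)$, so the whole problem reduces to bounding the aggregate cost of the single top-level call $\texttt{decode\_channel}(0,1)$, i.e.\ the total cost taken over all recursive calls.

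Second, I would bound the work done inside one call $\texttt{decode\_channel}(\lambda,i)$ with $\lambda<m$, excluding its two recursive sub-calls. Inspecting Algorithm~\ref{algo:st_decode_channel}, the only non-constant steps are the $-$-transform and $+$-transform subroutines (Algorithms~\ref{algo:calcu_-_proba} and~\ref{algo:calcu_+_proba}) and the bit-combining double loop that writes the intermediate vector at layer $\lambda$; in each of these we iterate over at most $L_c\le L$ list elements and over $\bar n_c=2^{m-\lambda-1}$ coordinates, doing $O(1)$ arithmetic per coordinate, so one such call costs $O(L\cdot 2^{m-\lambda})$ beyond its recursive descendants (the pointer copies in Lines~7--8, the resets of $\numB$ and $\numD$, and the $\texttt{allocate\_prob}/\texttt{allocate\_bit}$ calls are $O(L)$ and are absorbed). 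Since there are exactly $2^\lambda$ calls with first argument $\lambda$, the internal work summed over layer $\lambda$ equals $2^\lambda\cdot O(L\cdot 2^{m-\lambda})=O(L\cdot 2^m)=O(Ln)$, and summing over $\lambda=0,1,\dots,m-1$ gives $O(Ln\log n)$ because $m=\log_2 n$.

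Third, it remains to account for the base layer $\lambda=m$, where $\texttt{decode\_channel}(m,i)$ merely invokes $\texttt{decode\_boundary\_channel}(i)$, once for each of the $n$ indices $i$. For a frozen $U_i$ the call costs $O(L)$. For an information $U_i$ the dominant steps in Algorithm~\ref{algo:st_decode_boundary_channel} are the at most $2L$ pushes to $\texttt{PriQue}$ in Lines~2--4 (each $O(\log L)$) and the pruning loop in Lines~6--11, which performs $L_c\le L$ calls to $\texttt{PriQue.pop}()$ and, for each popped path, copies $\Theta(m)$ pointer pairs $(\tP[\ell',\lambda],\tR[\ell',\lambda])$; together with $\texttt{PriQue.clear}()$ this is $O(L(m+\log L))$ per call. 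Summing over the $n$ boundary calls yields $O(nL(m+\log L))=O(Ln\log n)$ in the relevant regime $L\le n$ (equivalently, one may simply treat the list size $L$ as a constant). Adding the $O(n)$ setup, the $O(Ln\log n)$ from the internal layers, and the $O(Ln\log n)$ from the boundary layer, we obtain the claimed $O(Ln\log n)$ bound.

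I expect the only delicate point to be this boundary-layer bookkeeping: one must observe that, thanks to the pointer-based memory design together with the $\texttt{swap}(\barP,\tP)$ and $\texttt{swap}(\barR,\tR)$ step, the pruning loop moves only $\Theta(m)$ pointers per surviving path rather than re-copying $\Theta(n)$-sized arrays, and that the priority-queue operations add only an $O(\log L)$ factor. Everything else is a routine geometric-sum argument that mirrors the proof of the $O(Ln)$ space bound in Lemma~\ref{lemma:st_space}.
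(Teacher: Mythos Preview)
Your analysis is correct, but there is nothing to compare it against: the paper states this proposition without proof, treating the $O(Ln\log n)$ complexity of SCL decoding as a known fact (it is the standard bound from Tal--Vardy). Your layer-by-layer accounting is exactly the natural argument, and the point you flag as delicate---that the pruning loop in \texttt{decode\_boundary\_channel} moves only $\Theta(m)$ pointers per surviving path rather than copying $\Theta(n)$-sized arrays---is precisely why the pointer-based data structures $\tP,\barP,\tR,\barR$ were introduced, so your caution there is well placed.
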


\subsection{SCL decoder for standard polar codes based on the Double-Bits polar transform}\label{sect:ST_decoder_DB}
In this subsection, we present a new SCL decoder for standard polar codes based on the Double-Bits polar transform in Fig.~\ref{fig:DBpt}.
We still use the notation in \eqref{eq:xnn}--\eqref{eq:subscript} and \eqref{eq:listdecoded}.
By \eqref{eq:st_adc}, the channel mapping from $(X_{i,\beta}^{(\lambda)}, X_{i+1,\beta}^{(\lambda)})$ to $\mathbi{O}_{i,\beta}^{(\lambda)}$ is the adjacent-bits-channel $V_{i}^{(2^\lambda)}$ for every $1\le \beta \le 2^{m-\lambda}$, i.e.,
\begin{equation} \label{eq:PVI}
	\mathbb{P} \big( \mathbi{O}_{i,\beta}^{(\lambda)}=\hat{\mathbi{o}}_{i,\beta}^{(\ell,\lambda)} \big| X_{i,\beta}^{(\lambda)}=a, X_{i+1,\beta}^{(\lambda)}=b \big) = V_{i}^{(2^\lambda)} \big( \hat{\mathbi{o}}_{i,\beta}^{(\ell,\lambda)} \big| a,b \big)   \qquad \text{for~} a,b\in\{0,1\} .
\end{equation}
Below we list the data structures of the new SCL decoder for standard polar codes based on the DB polar transform.
\begin{enumerate}[(i)]
	\item 5-dimensional \emph{probability array $\tD$.}
	      The entries in the array $\tD$ are indexed as
	      \begin{align*}
		      \tD[\lambda, s, \beta, a, b],\quad & 1\le \lambda \le m,\qquad~ 1\le s\le L,              \\
		                                         & 1\le \beta \le 2^{m-\lambda},\quad 0\le a, b \le 1 .
	      \end{align*}
	      For each $1\le \lambda \le m, 1\le s \le L$,  we define a subarray of $\tD$ as
	      $$
		      \tD[\lambda, s] = (\tD[\lambda, s, \beta, a, b],\quad 1\le \beta\le 2^{m-\lambda}, \quad 0\le a,b\le 1),
	      $$
	      and we use $\vec{\tD}[\lambda, s]$ to denote the pointer to the head address of $\tD[\lambda, s]$.
	      In the algorithms below, we will write $\tD[\lambda, s, \beta, a, b]$ and $\vec{\tD}[\lambda, s][\beta, a, b]$ interchangeably.
	      Each array $\tD[\lambda, s]$ is used to store a set of transition probabilities in \eqref{eq:PVI}.

	\item 1-dimensional \emph{integer array $\numD$.}
	      The entries of $\numD$ are $\numD[\lambda], 1\le \lambda \le m$. This array is defined in the same way as the previous subsection.

	\item 3-dimensional \emph{bit array $\tB$.}
	      The entries in the array $\tB$ are indexed as
	      \begin{equation} \label{eq:B_index}
		      \tB[\lambda, s, \beta],\quad 1\le \lambda \le m, \quad  1\le s\le 4L, \quad  1\le \beta \le 2^{m-\lambda}.
	      \end{equation}
	      For each $1\le \lambda \le m, 1\le s \le 4L$,  we define a subarray of $\tB$ as
	      \begin{equation*}
		      \tB[\lambda, s]=  (\tB[\lambda, s, \beta],\quad 1\le \beta\le 2^{m-\lambda}),
	      \end{equation*}
	      and we use $\vec{\tB}[\lambda, s]$ to denote the pointer to the head address of $\tB[\lambda, s]$.
	      In the algorithms below, we will write $\tB[\lambda, s, \beta]$ and  $\vec{\tB}[\lambda, s][\beta]$ interchangeably. Each array $\tB[\lambda, s]$ is used to store a set of decoding results of the intermediate vectors.

	\item 1-dimensional \emph{integer array $\numB$.}
	      The entries of $\numB$ are $\numB[\lambda], 1\le \lambda \le m$. The entry $\numB[\lambda]$ takes value in the set $\{0,1,2,\dots,4L\}$ for every $1\le \lambda \le m$. The meaning of $\numB[\lambda]$ is the same as the previous subsection.

	\item 1-dimensional \emph{probability array $\score$,}
	      defined in the same way as the previous subsection.

	\item 2-dimensional \emph{pointer arrays $\tP, \barP$.} Their entries are
	      $$
		      \tP=(\tP[\ell,\lambda],~ 1\le \ell\le L,~ 1\le \lambda \le m), \qquad
		      \barP=(\barP[\ell,\lambda],~ 1\le \ell\le L,~ 1\le \lambda \le m) .
	      $$
	      They are used in the same way as the previous subsection.

	\item 2-dimensional \emph{pointer arrays $\tR, \barR$.}
	      Their entries are
	      $$
		      \tR=(\tR[\ell,\lambda],~ 1\le \ell\le L,~ 1\le \lambda \le m), \qquad
		      \barR=(\barR[\ell,\lambda],~ 1\le \ell\le L,~ 1\le \lambda \le m) .
	      $$
	      They are used in the same way as the previous subsection.

	\item 2-dimensional \emph{pointer arrays $\tH, \barH$.}
	      Their entries are
	      $$
		      \tH=(\tH[\ell,\lambda],~ 1\le \ell\le L,~ 1\le \lambda \le m), \qquad
		      \barH=(\barH[\ell,\lambda],~ 1\le \ell\le L,~ 1\le \lambda \le m) .
	      $$
	      These two pointer arrays serve as backups of $\tR$ and $\barR$. We use $\tH, \barH$ when all the entries in $\tR$ and $\barR$ are occupied.

	\item \emph{priority queue $\texttt{PriQue}$.}
	      $\texttt{PriQue}$ is defined essentially in the same way as the previous subsection. The only difference is that each element in the queue changes from a triple $(\ell, b, \prob)$ to a quadruple $(\ell, a, b, \prob)$.
	      The quadruple $(\ell, a, b, \prob)$ has the following meaning:  When we decode $U_i$ and $U_{i+1}$ in the last layer $\lambda = m$, the (posterior) probability of $(U_i = a, U_{i+1}=b)$ in the $\ell$th decoding path is $\prob$.
\end{enumerate}

Below we list the main differences between the data structures in this subsection and the previous subsection.
\begin{enumerate}[(1)]
	\item The range of $\lambda$ in all the data structures changes from $0\le \lambda \le m$ (previous subsection) to $1\le \lambda \le m$ (this subsection).

	\item The dimension of the probability array $\tD$ changes from $4$ (previous subsection) to $5$ (this subsection).

	\item The range of the index $s$ in the array $\tB$ changes from $1\le s\le 2L$ (previous subsection) to $1\le s\le 4L$ (this subsection).

	\item We have two more pointer arrays $\tH, \barH$ in this subsection.

	\item Each element in the priority queue $\texttt{PriQue}$ changes from a triple $(\ell, b, \prob)$ to a quadruple $(\ell, a, b, \prob)$.
\end{enumerate}

For the SCL decoder presented in this subsection, the $\ell$th list element has the following fields:
\begin{equation}\label{eq:stDB_list_field}
	\begin{aligned}
		 & (\tP[\ell,1], \tP[\ell,2], \dots, \tP[\ell, m]), \\
		 & (\tR[\ell,1], \tR[\ell,2], \dots, \tR[\ell, m]), \\
		 & (\tH[\ell,1], \tH[\ell,2], \dots, \tH[\ell, m]), \\
		 & \score[\ell] .
	\end{aligned}
\end{equation}

We still use the function $\texttt{allocate\_prob}(\lambda)$ in Algorithm~\ref{algo:st_allocate_prob} although the range of $\lambda$ is $\{1,2,\dots, m\}$ in this subsection.
However, we will use the function $\texttt{allocate\_bit}$ in Algorithm~\ref{algo:allocate_bit} for the new decoder in this subsection, which is different from the function with the same name in Algorithm~\ref{algo:st_allocate_bit}. The main difference is that the function $\texttt{allocate\_bit}$ in Algorithm~\ref{algo:allocate_bit} has an extra input parameter $k$, which takes value in $\{1,2\}$. In this subsection, the decoder makes decisions according to the transition probabilities of adjacent-bits-channels. Each adjacent-bits-channel has two input bits. In some cases we only decode one bit while in other cases we need to decode both bits. The input parameter $k$ in Algorithm~\ref{algo:allocate_bit} corresponds to the number of input bits we need to decode for each adjacent-bits-channel. We do not have the parameter $k$ in Algorithm~\ref{algo:st_allocate_bit} because each bit-channel only has one input bit.


\begin{algorithm}[ht]
	\DontPrintSemicolon
	\caption{$\texttt{allocate\_bit}(\lambda, k)$}
	\label{algo:allocate_bit}
	\KwIn{layer $\lambda\in\{1, 2, \dots, m\}$ and an integer $k\in\{1,2\}$}
	\KwOut{a pointer to the allocated memory}

	$s \gets \numB[\lambda] + 1$

	$\numB[\lambda]\gets \numB[\lambda] + k$

	\Return $\vec{\tB}[\lambda, s]$

\end{algorithm}

We present the main function $\texttt{decode}$ in Algorithm~\ref{algo:ABS_Decoding}.
The first 3 lines initialize the parameters. In Line~4, we assign the pointer $\vec{\tD}[1,1]$ to $\tP[1, 1]$ and update the value of $\numD[1]$ to be $1$. In Lines~5--7, we calculate the transition probabilities $V_1^{(2)}(y_\beta,y_{\beta+n/2}|a,b)$ for $1\le\beta\le n/2$ and $a,b\in\{0,1\}$ using \eqref{eq:v_init} and store $V_1^{(2)}(y_\beta,y_{\beta+n/2}|a,b)$ in $\tP[1,1][\beta,a,b]$. Line~8 executes recursive decoding which we will explain later.
After recursive decoding, we obtain $L_c$ list elements. In the $\ell$th list element, $\score[\ell]$ is the transition probability which measures the likelihood of this list element. In Lines~9--14, we pick the list element with the maximum $\score[\ell]$. Recall that $\hat x_{i,\beta}^{(\ell,\lambda)}$ is the decoded value of $X_{i,\beta}^{(\lambda)}$ in the $\ell$th list element.
As we will prove in Lemma~\ref{lm:566} below,
after recursive decoding, we have
$$
	\tR[\ell,1][\beta] = \hat x_{1,\beta}^{(\ell,1)} ,
	\quad
	\tR[\ell,1][\beta+n/2] = \hat x_{2,\beta}^{(\ell,1)}
	\quad
	\text{for~} 1\le\ell\le L_c \text{~and~} 1\le\beta\le n/2 .
$$
Since the codeword vector $(X_1,\dots,X_n)$ and the intermediate vectors $(X_{1,1}^{(1)},\dots,X_{1,n/2}^{(1)}), (X_{2,1}^{(1)},\dots,X_{2,n/2}^{(1)})$ satisfy
$$
	X_{\beta} = X_{1,\beta}^{(1)} + X_{2,\beta}^{(1)} , \quad
	X_{\beta+n/2} = X_{2,\beta}^{(1)}
	\quad
	\text{for~} 1\le\beta\le n/2 ,
$$
we further have
$$
	\hat x_\beta^{(\ell)} = \tR[\ell,1][\beta] + \tR[\ell,1][\beta + n/2] , \quad
	\hat x_{\beta+n/2}^{(\ell)} = \tR[\ell,1][\beta + n/2]
	\quad
	\text{for~} 1\le\beta\le n/2 ,
$$
where $(\hat x_1^{(\ell)},\hat x_2^{(\ell)},\dots,\hat x_n^{(\ell)})$ is the decoding result of the codeword vector in the $\ell$th list element. This is how we calculate the final decoding result in Lines~15--17.

\begin{algorithm}[ht]
	\DontPrintSemicolon
	\caption{\texttt{Decode}$((y_1,y_2,\dots,y_n)) \quad$}
	\label{algo:ABS_Decoding}
	\KwIn{the received vector $(y_1,y_2,\dots,y_n)\in\cY^n$}
	\KwOut{the decoded codeword $(\hat x_1,\hat x_2,\dots,\hat x_n)\in\{0,1\}^n$}

	\For{\em $\lambda \in \{1, 2, \dots, m\}$}{
		$\numD[\lambda] \gets \numB[\lambda] \gets 0$
	}

	$L_c\gets 1$

	$\tP[1, 1]\gets \texttt{allocate\_prob}(1)$

	\For{\em $\beta \in \{1, 2, \dots, n/2\}$}{
		\For{\em $a\in \{0,1\}$, $b\in \{0,1\}$}{
			$\tP[1,1][\beta,a,b]\gets W(y_\beta|a+b)\cdot W(y_{\beta+n/2}|b)$
		}
	}

	\texttt{decode\_channel$(1, 1)$}
	\Comment{Recursive decoding}

	$\max\_\score\gets 0$

	$\max\_\ell\gets 0$

	\For{$\ell\in\{1,2,\dots, L_c\}$}{
		\If{$\score[\ell]\ge \max\_\score$}{
			$\max\_\score\gets \score[\ell]$

			$\max\_\ell\gets \ell$
		}
	}

	\For{$\beta=1,2,\dots,n/2$}
	{
		$\hat x_\beta \gets \tR[\max\_\ell,1][\beta] + \tR[\max\_\ell,1][\beta + n/2]$

		$\hat x_{\beta+n/2} \gets \tR[\max\_\ell,1][\beta + n/2]$
	}

	\Return $(\hat x_1,\hat x_2,\dots,\hat x_n)$
\end{algorithm}

The recursive decoding function $\texttt{decode\_channel}$ in Algorithm~\ref{algo:STDB_decode_channel} has two branches. If $\lambda=m$, we call the function $\texttt{decode\_boundary\_channel}$  in  Algorithm~\ref{algo:decode_boundary_channel}.
If $\lambda < m$, we call the function $\texttt{decode\_original\_channel}$ in Algorithm~\ref{algo:st_decode_ori_channel}.
In  Algorithms~\ref{algo:decode_boundary_channel}--\ref{algo:st_decode_ori_channel}, we only decode $(X_{i,\beta}^{(\lambda)}, 1\le \beta\le 2^{m-\lambda})$ if $i\le 2^{\lambda}-2$; we decode both $(X_{i,\beta}^{(\lambda)}, 1\le \beta\le 2^{m-\lambda})$ and $(X_{i+1,\beta}^{(\lambda)}, 1\le \beta\le 2^{m-\lambda})$ if $i=2^\lambda-1$.
The following lemma further explains how  Algorithms~\ref{algo:STDB_decode_channel},\ref{algo:decode_boundary_channel}--\ref{algo:st_decode_ori_channel} work.

\begin{lemma}  \label{lm:566}
	Suppose that $1\le \lambda \le m$ and $1\le i\le 2^\lambda-1$.
	Before we call the function $\texttt{decode\_channel}$ in Algorithm~\ref{algo:STDB_decode_channel} with input parameters $(\lambda, i)$, the pointer $\tP[\ell, \lambda]$ satisfies that
	\begin{equation}\label{eq:1prob}
		\tP[\ell, \lambda][\beta, a, b] = V_{i}^{(2^\lambda)}(\hat{\mathbi o}_{i,\beta}^{(\ell,\lambda)} | a, b)
		\quad  \text{for all~} 1\le \ell\le L_c,~ 1\le \beta \le 2^{m-\lambda}
		\text{~and~} a,b\in\{0,1\} .
	\end{equation}
	After the function $\texttt{decode\_channel}(\lambda, i)$ in Algorithm~\ref{algo:STDB_decode_channel} returns, the pointer $\tR[\ell, \lambda]$ satisfies that
	\begin{equation} \label{eq:1bit}
		\tR[\ell, \lambda][\beta] = \hat{x}_{i,\beta}^{(\ell, \lambda)}
		\quad  \text{for all~} 1\le \ell\le L_c \text{~and~}
		1\le \beta\le 2^{m-\lambda}.
	\end{equation}
	Moreover, if $i=2^\lambda-1$, then the pointer $\tR[\ell, \lambda]$ further satisfies that
	\begin{equation} \label{eq:2bit}
		\tR[\ell, \lambda][\beta+2^{m-\lambda}] = \hat{x}_{i+1,\beta}^{(\ell, \lambda)}
		\quad  \text{for all~} 1\le \ell\le L_c \text{~and~}
		1\le \beta\le 2^{m-\lambda}.
	\end{equation}
\end{lemma}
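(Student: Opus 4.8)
The plan is to establish the two bookkeeping invariants by a double induction on the layer index $\lambda$, mirroring the proof of the analogous lemma for the $2\times 2$‑transform decoder. First I would prove the bit invariants \eqref{eq:1bit} and \eqref{eq:2bit} by a downward induction on $\lambda$ with base case $\lambda=m$, and then I would prove the probability invariant \eqref{eq:1prob} by an upward induction on $\lambda$ with base case $\lambda=1$, invoking the already‑established \eqref{eq:1bit}--\eqref{eq:2bit} wherever a decoded bit feeds into a transform computation. Since $\hat x_{i,\beta}^{(\ell,\lambda)}$ is by definition whatever value the decoder assigns to $X_{i,\beta}^{(\lambda)}$, the bit invariants are pure pointer/array bookkeeping and can be proved without reference to \eqref{eq:1prob}, so this ordering of the two inductions is legitimate.

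For the bit invariants, the base case $\lambda=m$ is a trace through $\texttt{decode\_boundary\_channel}$ in Algorithm~\ref{algo:decode_boundary_channel}: by \eqref{eq:xnn}--\eqref{eq:subscript} we have $X_{i,1}^{(m)}=U_i$, so when $U_i$ is frozen its value is written directly into $\tR[\ell,m]$, when $U_i$ is an information bit with $i<n-1$ both candidate values are explored, pruned through $\texttt{PriQue}$, and the surviving bit is recorded in $\tR[\ell,m]$ after the $\texttt{swap}$, and when $i=n-1$ the pair $(U_{n-1},U_n)$ is decoded jointly and both bits are stored, which is precisely where \eqref{eq:2bit} is needed. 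For the inductive step with $\lambda<m$, I would use the intermediate‑vector decomposition $X_{i,\beta}^{(\lambda)}=X_{2i-1,\beta}^{(\lambda+1)}+X_{2i,\beta}^{(\lambda+1)}$ and $X_{i,\beta+2^{m-\lambda-1}}^{(\lambda)}=X_{2i,\beta}^{(\lambda+1)}$ coming from \eqref{eq:def_intermediate}--\eqref{eq:subscript}. When $i\le 2^\lambda-2$, $\texttt{decode\_original\_channel}$ decodes blocks $2i-1$ and $2i$ at layer $\lambda+1$ via two recursive calls, saving the first result (through the backup pointer arrays $\tH,\barH$) so that both are available when they are combined; the induction hypothesis together with the XOR relation then gives \eqref{eq:1bit}. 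When $i=2^\lambda-1$, the third recursive call $\texttt{decode\_channel}(\lambda+1,2i+1)$ has $2i+1=2^{\lambda+1}-1$, so by the induction hypothesis (including \eqref{eq:2bit}) it returns the decoded values of \emph{both} blocks $2i+1$ and $2i+2$; combining blocks $2i-1,2i$ into block $i$ and blocks $2i+1,2i+2$ into block $i+1$ at layer $\lambda$ yields \eqref{eq:1bit} and \eqref{eq:2bit} simultaneously, which is why $\texttt{allocate\_bit}$ in Algorithm~\ref{algo:allocate_bit} carries the parameter $k\in\{1,2\}$.

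For the probability invariant, the base case $\lambda=1$ is Lines~5--7 of Algorithm~\ref{algo:ABS_Decoding}, which set $\tP[1,1][\beta,a,b]=W(y_\beta\mid a+b)\,W(y_{\beta+n/2}\mid b)$; under the index convention $y_{1,\beta}^{(1)}=y_\beta$, $y_{2,\beta}^{(1)}=y_{\beta+n/2}$ this equals $V_1^{(2)}(\hat{\mathbi{o}}_{1,\beta}^{(1,1)}\mid a,b)$ by \eqref{eq:v_init}, and there is a single list element. For the inductive step, assume \eqref{eq:1prob} at layer $\lambda$ for index $i$ and consider $\texttt{decode\_channel}(\lambda,i)$ with $\lambda<m$. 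By Lemma~\ref{lemma:recur_ST_DB}, $V_{2i-1}^{(2^{\lambda+1})}=(V_i^{(2^\lambda)})^\triangledown$, $V_{2i}^{(2^{\lambda+1})}=(V_i^{(2^\lambda)})^\lozenge$, and (when $i=2^\lambda-1$) $V_{2i+1}^{(2^{\lambda+1})}=(V_i^{(2^\lambda)})^\vartriangle$. I would check that the three transform subroutines invoked inside $\texttt{decode\_original\_channel}$ implement the $\triangledown,\lozenge,\vartriangle$ formulas of \eqref{eq:DBpt} on the array $\tP[\ell,\lambda]$, so that immediately before each recursive call at index $2i-1$, $2i$, or $2i+1$ the array $\tP[\ell,\lambda+1]$ holds the transition probabilities of the corresponding layer‑$(\lambda+1)$ adjacent‑bits‑channel evaluated at $\hat{\mathbi{o}}_{\cdot,\beta}^{(\ell,\lambda+1)}$; the $\lozenge$ transform additionally requires the decoded bits $\hat x_{2i-1,\beta}^{(\ell,\lambda+1)}$ and the $\vartriangle$ transform also $\hat x_{2i,\beta}^{(\ell,\lambda+1)}$, both of which sit correctly in $\tR[\ell,\lambda+1]$ by \eqref{eq:1bit}. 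Since $\texttt{decode\_channel}$ is invoked only from Line~8 of Algorithm~\ref{algo:ABS_Decoding} and from inside $\texttt{decode\_original\_channel}$, verifying these call sites covers them all.

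The step I expect to be the main obstacle is the bookkeeping around the overlapping triples of adjacent‑bits‑channels, namely the identity \eqref{eq:2w} that $V_{2i+1}^{(2^{\lambda+1})}$ is simultaneously $(V_i^{(2^\lambda)})^\vartriangle$ and $(V_{i+1}^{(2^\lambda)})^\triangledown$. One must argue carefully that the convention ``decode one bit when $i\le 2^\lambda-2$ and both bits when $i=2^\lambda-1$'' makes every intermediate bit $X_{i',\beta}^{(\lambda+1)}$ decoded exactly once, with the conditioning used inside the $\lozenge$ and $\vartriangle$ transforms matching the composite output vector $\hat{\mathbi{o}}$ built up through the recursion; equivalently, that the transform identities of \eqref{eq:DBpt} applied to the probability arrays translate into the channel identities of Lemma~\ref{lemma:recur_ST_DB} evaluated at exactly the arguments appearing in \eqref{eq:1prob}. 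Everything else reduces to the same index arithmetic already carried out in the $2\times 2$ case.
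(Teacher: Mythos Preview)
Your proposal is correct and follows essentially the same approach as the paper: two separate inductions, bit invariants first with base case $\lambda=m$ traced through \texttt{decode\_boundary\_channel}, then the probability invariant with base case $\lambda=1$ from the initialization in Algorithm~\ref{algo:ABS_Decoding}, the inductive step driven by Lemma~\ref{lemma:recur_ST_DB} and a check that the three transform subroutines implement \eqref{eq:DBpt}. The only remark is that the ``main obstacle'' you flag---the overlap \eqref{eq:2w}---is not actually an obstacle here, because Algorithm~\ref{algo:st_decode_ori_channel} invokes the $\vartriangle$ transform \emph{only} in the boundary case $i=2^\lambda-1$, so there is never a collision with the $\triangledown$ transform of the next index; the convention you describe already resolves it by construction rather than requiring a separate argument.
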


\begin{proof}
	We first prove \eqref{eq:1bit}--\eqref{eq:2bit} by induction. Algorithm~\ref{algo:decode_boundary_channel} deals with the base case $\lambda=m$. Recall from \eqref{eq:subscript} that $X_{i,1}^{(m)}=X_i^{(n)}$ and $X_{i+1,1}^{(m)}=X_{i+1}^{(n)}$ when $\lambda=m$. By \eqref{eq:xnn}, we further obtain $X_{i,1}^{(m)}=U_i$ and $X_{i+1,1}^{(m)}=U_{i+1}$. The discussion below is divided into two cases. {\bf Case (1) $i\le n-2$:}
	If $U_i$ is a frozen bit, then Line~11 of Algorithm~\ref{algo:decode_boundary_channel} immediately implies \eqref{eq:1bit}. If $U_i$ is an information bit, then we explore both decoding paths $U_i=0$ and $U_i=1$ for every list element, where the variable $a$ in Lines~4--6 represents the (possible) value of $U_i$. The question mark ``?" in Line~6 means that we do not need to decode $U_{i+1}$ when $i\le n-2$. Since we expand the current list size by a factor of $2$ in Lines~4--6, the current list size might exceed the prescribed upper bound $L$. In this case, we prune the list according to $\score[\ell]$ in Lines~29--44. The variables $a$ and $b$ in Lines~29--44 represent the decoded values of $U_i$ and $U_{i+1}$ in each list element, respectively. In Line~42, we use $\barR[\ell, m][1]$ to temporarily store the decoding result of $U_i$ in the $\ell$th list element, and we use $\barR[\ell, m][2]$ to temporarily store the decoding result of $U_{i+1}$ in the $\ell$th list element. Next we swap $\barR$ and $\tR$ in Line~44, so \eqref{eq:1bit}--\eqref{eq:2bit} are satisfied.
		{\bf Case (2) $i= n-1$:} This case is handled in Lines~12--28. Note that $U_{n-1}$ and $U_n$ in Lines~17,21,25 refer to their frozen values (or true values).
	The argument for Case (2) is similar to Case (1), and we do not repeat it here.

	For the inductive step, we assume that \eqref{eq:1bit}--\eqref{eq:2bit} hold for $\lambda+1$ and prove them for $\lambda$. By this induction hypothesis, after executing Lines~1--5 of Algorithm~\ref{algo:st_decode_ori_channel}, we have
	$$
		\tH[\ell, \lambda+1][\beta] = \hat{x}_{2i-1,\beta}^{(\ell, \lambda+1)}
		\quad  \text{for all~} 1\le \ell\le L_c \text{~and~}
		1\le \beta\le 2^{m-\lambda-1}.
	$$
	In Line~1, we set $n_c=2^\lambda$. We again divide the discussion into two cases.
		{\bf Case (1) $i\le n_c-2$:} In this case, we only need to prove \eqref{eq:1bit}. After executing Lines~7--10 and Line~14, we have
	$$
		\temppointer[\beta] = \hat{x}_{2i,\beta}^{(\ell, \lambda+1)}
		\quad  \text{for all~} 1\le \ell\le L_c \text{~and~}
		1\le \beta\le 2^{m-\lambda-1}.
	$$
	Since $n_c=2^\lambda$, we have $n/(2n_c)=2^{m-\lambda-1}$. Therefore, Lines~18--19 of Algorithm~\ref{algo:st_decode_ori_channel} become \eqref{eq:R+-}. Combining \eqref{eq:R+-} with \eqref{eq:fthat}, we finish the proof of \eqref{eq:1bit} for Case (1).
		{\bf Case (2) $i= n_c-1$:} In this case, we  need to prove both \eqref{eq:1bit} and \eqref{eq:2bit}. The proof of \eqref{eq:1bit} is exactly the same as Case (1). To prove \eqref{eq:2bit}, we observe that if $i= n_c-1$, then $2i+1=2n_c-1=2^{\lambda+1}-1$. Then by the induction hypothesis, after executing Line~24, we have
	\begin{align*}
		\tR[\ell, \lambda+1][\beta] = \hat{x}_{2i+1,\beta}^{(\ell, \lambda+1)} , \quad
		\tR[\ell, \lambda+1][\beta+2^{m-\lambda-1}] = \hat{x}_{2i+2,\beta}^{(\ell, \lambda+1)} \\
		\text{for all~} 1\le \ell\le L_c \text{~and~}
		1\le \beta\le 2^{m-\lambda-1}.
	\end{align*}
	Therefore, Lines~32--34 become
	\begin{equation} \label{eq:lmd+1}
		\begin{aligned}
			\tR[\ell, \lambda][\beta+2^{m-\lambda}] = \hat{x}_{2i+1,\beta}^{(\ell, \lambda+1)} + \hat{x}_{2i+2,\beta}^{(\ell, \lambda+1)}, \quad
			\tR[\ell, \lambda][\beta+2^{m-\lambda-1}+2^{m-\lambda}] = \hat{x}_{2i+2,\beta}^{(\ell, \lambda+1)} \\
			\text{for all~} 1\le \ell\le L_c \text{~and~}
			1\le \beta\le 2^{m-\lambda-1}.
		\end{aligned}
	\end{equation}
	Replacing $i$ with $i+1$ in \eqref{eq:fthat} we obtain
	\begin{align*}
		\hat{x}_{i+1,\beta}^{(\ell, \lambda)} = \hat{x}_{2i+1,\beta}^{(\ell, \lambda+1)} + \hat{x}_{2i+2,\beta}^{(\ell, \lambda+1)}, \quad
		\hat{x}_{i+1,\beta+2^{m-\lambda-1}}^{(\ell, \lambda)} = \hat{x}_{2i+2,\beta}^{(\ell, \lambda+1)} \\
		\text{for all~} 1\le \ell\le L_c \text{~and~}
		1\le \beta\le 2^{m-\lambda-1}.
	\end{align*}
	Combining this with \eqref{eq:lmd+1}, we complete the proof of \eqref{eq:2bit}.

	Next we prove \eqref{eq:1prob} by induction. This time the base case is $\lambda=1$, and this case only occurs once in Line~8 of Algorithm~\ref{algo:ABS_Decoding} during the whole decoding procedure.
	By \eqref{eq:v_init}, we have $V_1^{(2)}(y_\beta, y_{\beta+n/2}| a,b) = W(y_\beta|a+b)\cdot W(y_{\beta+n/2}|b)$.
	Therefore, Lines~5--7 of Algorithm~\ref{algo:ABS_Decoding} immediately imply \eqref{eq:1prob} for $\lambda=1$.

	For the inductive step, we assume that \eqref{eq:1prob} holds for $\lambda$ and prove it for $\lambda+1$. By this induction hypothesis, \eqref{eq:1prob} holds for $\lambda$ when we execute Line~2 of Algorithm~\ref{algo:st_decode_ori_channel}. In other words, the array associated with the pointer $\tP[\ell, \lambda]$ stores the transition probabilities of $V_i^{(2^\lambda)}$. By Lemma~\ref{lemma:recur_ST_DB}, $V_{2i-1}^{(2^{\lambda+1})}$ is the ``$\oria$" transform of $V_i^{(2^\lambda)}$. The function \texttt{calculate\_$\oria$\_transform}$(\lambda+1)$ calculates the ``$\oria$" transform of $V_i^{(2^\lambda)}$ and stores the results in the array associated with the pointer $\tP[\ell, \lambda+1]$, so \eqref{eq:1prob} holds before we call $\texttt{decode\_channel}$ in Line~3 of Algorithm~\ref{algo:st_decode_ori_channel}. Again by Lemma~\ref{lemma:recur_ST_DB}, $V_{2i}^{(2^{\lambda+1})}$ is the ``$\orib$" transform of $V_i^{(2^\lambda)}$. The function \texttt{calculate\_$\orib$\_transform}$(\lambda+1)$ in Line~7 of Algorithm~\ref{algo:st_decode_ori_channel} calculates the ``$\orib$" transform of $V_i^{(2^\lambda)}$ and stores the results in the array associated with the pointer $\tP[\ell, \lambda+1]$, so \eqref{eq:1prob} holds before we call $\texttt{decode\_channel}$ in Line~8 of Algorithm~\ref{algo:st_decode_ori_channel}.
	Using exactly the same method, we can show that \eqref{eq:1prob} also holds before we call $\texttt{decode\_channel}$ in Line~24 of Algorithm~\ref{algo:st_decode_ori_channel}.

	During the whole decoding procedure, the function $\texttt{decode\_channel}$ is only called in Line~8 of Algorithm~\ref{algo:ABS_Decoding} and Lines~3,8,24 of Algorithm~\ref{algo:st_decode_ori_channel}. We have proved that \eqref{eq:1prob} holds for all four places. This completes the proof of the lemma.
\end{proof}

\begin{algorithm}[ht]
	\DontPrintSemicolon
	\caption{\texttt{decode\_channel$(\lambda, i)$}}
	\label{algo:STDB_decode_channel}
	\KwIn{layer $\lambda\in \{1,2,\dots, m\}$ and index $i\in \{1, 2, \dots, 2^\lambda-1\}$}

	\uIf{$\lambda = m$}{
		\texttt{decode\_boundary\_channel}$(i)$
		\Comment{Algorithm~\ref{algo:decode_boundary_channel}}
	}
	\Else{
		\texttt{decode\_original\_channel}$(\lambda, i)$
		\Comment{Algorithm~\ref{algo:st_decode_ori_channel}}
	}

	$\numD[\lambda]\gets 0$

	\Return

\end{algorithm}

Before proceeding further, let us explain the meaning of the boolean variable ``flag" in Algorithm~\ref{algo:decode_boundary_channel}. flag takes value $0$ if we do not expand the decoding list in the decoding procedure, and it takes value $1$ otherwise. In Algorithm~\ref{algo:decode_boundary_channel}, we do not expand the decoding list if and only if we only decode frozen bits. There are two such cases, one in Lines~7--11 and the other in Lines~13--17. We set the variable flag to be $0$ in both cases. In all the other cases, we need to decode at least one information bit, and we need to expand the list size by a factor of at least $2$, so we set the variable flag to be $1$ in all the other cases.
If flag$=0$, then the list size does not change, and we do not need to prune the list. Therefore, we only prune the list when flag$=1$; see Line~29.

\begin{remark}
	The calculations in Line 6 of Algorithm~\ref{algo:calcu_oria_proba} correspond to the "$\oria$" transform in Fig.~\ref{fig:DBpt} and the first equation in \eqref{eq:DBpt}.
	The calculations in Line 7 of Algorithm~\ref{algo:calcu_orib_proba} correspond to the "$\orib$" transform in Fig.~\ref{fig:DBpt} and the second equation in \eqref{eq:DBpt}.
	The calculations in Lines 7-8 of Algorithm~\ref{algo:calcu_oric_proba} correspond to the "$\oric$" transform in Fig.~\ref{fig:DBpt} and the third equation in \eqref{eq:DBpt}.
	This is why we say that the SCL decoder presented in this subsection is based on the DB polar transform.
\end{remark}

\begin{algorithm}[ht]
	\DontPrintSemicolon
	\caption{\texttt{calculate\_$\oria$\_transform}$(\lambda)$}
	\label{algo:calcu_oria_proba}
	\KwIn{layer $2\le \lambda \le m$}
	\KwOut{Update the entries pointed by $\tP[\ell, \lambda]$, $1\le \ell\le L_c $ }

	$\bar{n}_c \gets 2^{m-\lambda}$

	\For{$\ell\in\{1, 2, \dots, L_c\}$}{

		$\tP[\ell,\lambda]\gets \texttt{allocate\_prob}(\lambda)$

		\For{\em $\beta \in \{1,2,\dots,\bar{n}_c\}, r_1,r_2\in\{0,1\}$}{
			$\beta' \gets \beta + \bar{n}_c$

			$\tP[\ell, \lambda][\beta,r_1,r_2] \gets$ $\frac14\sum_{r_3, r_4\in\{0,1\}}$
			$\tP[\ell, \lambda-1][\beta, r_1+r_2, r_3+r_4]\tP[\ell, \lambda-1][\beta', r_2, r_4]$
		}
	}

	\Return
\end{algorithm}

\begin{algorithm}[ht]
	\DontPrintSemicolon
	\caption{\texttt{calculate\_$\orib$\_transform}$(\lambda)$}
	\label{algo:calcu_orib_proba}
	\KwIn{layer $2\le \lambda \le m$}
	\KwOut{Update the entries pointed by $\tP[\ell, \lambda]$, $1\le \ell\le L_c $ }

	$\bar{n}_c \gets 2^{m-\lambda}$


	\For{$\ell\in\{1, 2, \dots, L_c\}$}{

		$\tP[\ell,\lambda]\gets \texttt{allocate\_prob}(\lambda)$

		\For{\em $\beta \in \{1,2,\dots,\bar{n}_c\}, r_2,r_3\in\{0,1\}$}{
			$r_1\gets \tH[\ell, \lambda][\beta]$

			$\beta' \gets \beta + \bar{n}_c$

			$\tP[\ell, \lambda][\beta,r_2,r_3] \gets$ $\frac14\sum_{r_4\in\{0,1\}}$
			$\tP[\ell, \lambda-1][\beta, r_1+r_2, r_3+r_4]\tP[\ell, \lambda-1][\beta', r_2, r_4]$
		}
	}

	\Return

\end{algorithm}

\begin{algorithm}[ht]
	\DontPrintSemicolon
	\caption{\texttt{calculate\_$\oric$\_transform}$(\lambda)$}
	\label{algo:calcu_oric_proba}
	\KwIn{layer $2\le \lambda \le m$}
	\KwOut{Update the entries pointed by $\tP[\ell, \lambda]$, $1\le \ell\le L_c $ }

	$\bar{n}_c \gets 2^{m-\lambda}$


	\For{$\ell\in\{1, 2, \dots, L_c\}$}{

		$\tP[\ell,\lambda]\gets \texttt{allocate\_prob}(\lambda)$

		\For{\em $\beta \in \{1,2,\dots,\bar{n}_c\}, r_3,r_4\in\{0,1\}$}{
			$r_1\gets \tH[\ell, \lambda][\beta],\quad r_2\gets \tR[\ell, \lambda-1][\beta]$

			$\beta' \gets \beta + \bar{n}_c$

			$\tP[\ell, \lambda][\beta,r_3,r_4] \gets$\\
			$\frac14\tP[\ell, \lambda-1][\beta, r_1+r_2, r_3+r_4]\tP[\ell, \lambda-1][\beta', r_2, r_4]$
		}
	}

	\Return
\end{algorithm}

\begin{algorithm}[ht]
	\DontPrintSemicolon
	\caption{\texttt{decode\_boundary\_channel$(i)$}}
	\label{algo:decode_boundary_channel}
	\KwIn{index $i$ in the last layer $(\lambda = m)$}

	flag $\gets 1$

	\eIf(\Comment{Only decode $U_i$}){$i\le n-2$}{

		\eIf(\Comment{$U_i$ is an information bit}){$i\in\cA$}{

			\For{$\ell \in \{1,2,\dots, L_c\}$, $a\in\{0,1\}$}{
				$\prob \gets\frac12\sum_{b\in\{0,1\}}\tP[\ell, m][1,a,b]$

				$\texttt{PriQue.push}(\ell, a, ``?",\prob)$
			}
		}(\Comment{$U_i$ is a frozen bit}){
			flag $\gets 0$

			\For{$\ell \in \{1,2,\dots, L_c\}$}{
				$\tR[\ell, m]\gets \texttt{allocate\_bit}(m, 1)$

				$\tR[\ell, m][1]\gets$ frozen value of $U_i$
			}
		}

	}(\Comment{Decode both $U_{n-1}$ and $U_n$.}){
		\uIf{$n-1, n\notin \cA$}{
			flag $\gets 0$
			\Comment{$U_{n-1}$ and $U_{n}$ are both frozen bits}

			\For{$\ell\in\{1,2,\dots, L_c\}$}{
				$\tR[\ell,m]\gets \texttt{allocate\_bit}(m,2)$

				$(\tR[\ell,m][1],\tR[\ell, m][2])\gets(U_{n-1}, U_{n})$
			}
		}\ElseIf{$n-1\in \cA$, $n\notin \cA$}{
			\Comment{information bit $U_{n-1}$, frozen bit $U_{n}$ }

			\For{$\ell\in\{1,2,\dots, L_c\}$, $a\in\{0,1\}$}{
				$\texttt{PriQue.push}(\ell, a, U_{n},\tP[\ell,m][1, a, U_{n}])$
			}

		}\ElseIf{$n-1\notin \cA$, $n\in \cA$}{
			\Comment{frozen bit $U_{n-1}$, information bit $U_{n}$}

			\For{$\ell\in\{1,2,\dots, L_c\}$, $b\in\{0,1\}$}{
				$\texttt{PriQue.push}(\ell, U_{n-1}, b,\tP[\ell,m][1, U_{n-1}, b])$
			}
		}\Else(\Comment{$U_{n-1}$ and $U_{n}$ are both information bits}){

			\For{$\ell\in\{1,2,\dots, L_c\}$, $a,b\in\{0,1\}$}{
				$\texttt{PriQue.push}(\ell, a, b,\tP[\ell,m][1, a, b])$
			}
		}
	}
	\If{\em flag $ = 1$}{
		$L_c\gets \min\{L, \texttt{PriQue.size()}\}$

		\For{$\ell \in\{1,2,\dots, L_c\}$}{
			$(\ell', a, b, \score[\ell])\gets \texttt{PriQue.pop()}$

			\For{$\lambda \in\{1,2,\dots, m-1\}$}{
				$\barP[\ell,\lambda]\gets \tP[\ell',\lambda]$

				$\barR[\ell,\lambda]\gets \tR[\ell',\lambda]$

				$\barH[\ell,\lambda]\gets \tH[\ell',\lambda]$
			}

			\eIf{$i < n-1$}{
				$\barR[\ell, m]\gets \texttt{allocate\_bit}(m, 1)$

				$\barR[\ell, m][1]\gets a$
			}{
				$\barR[\ell, m]\gets \texttt{allocate\_bit}(m, 2)$

				$(\barR[\ell,m][1],\barR[\ell, m][2]) \gets (a,b)$

			}
		}

		$\texttt{PriQue.clear()}$

		$\texttt{swap}(\barP, \tP)$,\quad $\texttt{swap}(\barR, \tR)$,\quad $\texttt{swap}(\barH, \tH)$
	}
	\Return
\end{algorithm}

\begin{algorithm}[ht]
	\DontPrintSemicolon
	\caption{\texttt{decode\_original\_channel$(\lambda, i)$}}
	\label{algo:st_decode_ori_channel}
	\KwIn{$\lambda \in \{1,2, \dots, m\}$ and index $i$ satisfying $1\le i \le 2^{\lambda}-1$}

	$n_c\gets 2^{\lambda}$
	\Comment{$V_{2i-1}^{(2n_c)} = (V_{i}^{(n_c)})^\oria$}

	\texttt{calculate\_$\oria$\_transform}$(\lambda+1)$

	\texttt{decode\_channel}$(\lambda+1, 2i-1)$

	\For{$\ell\in\{1, 2, \dots, L_c\}$}{
		$\tH[\ell, \lambda + 1]\gets \tR[\ell, \lambda+1]$
	}

	\Comment{$V_{2i}^{(2n_c)} = (V_{i}^{(n_c)})^\orib$}

	\texttt{calculate\_$\orib$\_transform}$(\lambda+1)$

	\texttt{decode\_channel}$(\lambda+1, 2i)$

	\For{$\ell\in\{1, 2, \dots, L_c\}$}{
		$\tR[\ell, \lambda]\gets \tR[\ell, \lambda+1]$
	}
	\vspace*{.1in}

	\eIf{$i \le n_c-2$}{
		\Comment{Only decode one bit ${X}_{i,\beta}^{(\lambda)}$ for each $\beta$}

		\For{$\ell\in\{1, 2, \dots, L_c\}$}{
			$\temppointer \gets \tR[\ell, \lambda]$

			$\tR[\ell, \lambda] \gets \texttt{allocate\_bit}(\lambda, 1)$

			\For{\em $\beta \in \{1,2,\dots,n/(2n_c)\}$}{
				$\beta' \gets \beta + n/(2n_c)$

				$\tR[\ell, \lambda][\beta] \gets \tH[\ell, \lambda+1][\beta] + \temppointer[\beta]$

				$\tR[\ell, \lambda][\beta'] \gets \temppointer[\beta]$

			}
		}
	}{
		\Comment{Decode two bits ${X}_{n_c-1,\beta}^{(\lambda)}, {X}_{n_c,\beta}^{(\lambda)}$ for each $\beta$}

		\Comment{$V_{2i+1}^{(2n_c)} = (V_{i}^{(n_c)})^\oric$}

		\texttt{calculate\_$\oric$\_transform}$(\lambda+1)$

		\texttt{decode\_channel}$(\lambda+1, 2i+1)$

		\For{$\ell\in\{1, 2, \dots, L_c\}$}{
			$\temppointer \gets \tR[\ell, \lambda]$

			$\tR[\ell, \lambda] \gets \texttt{allocate\_bit}(\lambda, 2)$

			\For{\em $\beta \in \{1,2,\dots,n/(2n_c)\}$}{
				$\beta' \gets \beta + n/(2n_c)$

				$\tR[\ell, \lambda][\beta] \gets \tH[\ell, \lambda+1][\beta] + \temppointer[\beta]$

				$\tR[\ell, \lambda][\beta'] \gets \temppointer[\beta]$

				$\tR[\ell, \lambda][\beta + n/(n_c)] \gets$\\
				$\qquad\tR[\ell, \lambda+1][\beta] + \tR[\ell, \lambda+1][\beta']$

				$\tR[\ell, \lambda][\beta' + n/(n_c)] \gets \tR[\ell, \lambda+1][\beta']$

			}
		}
	}

	$\numB[\lambda+1]\gets 0$

	\Return

\end{algorithm}

The next lemma shows that the data structures $\tD$ and $\tB$ are large enough to store the transition probabilities and the decoding results of the intermediate vectors throughout the decoding procedure.

\begin{lemma}\label{lemma:stdb_space}
	Throughout the whole decoding procedure, we have $\numD[\lambda]\le L$ and $\numB[\lambda]\le 4L$ for all $1\le \lambda\le m$. The space complexity of the SCL decoder is $O(Ln)$.
\end{lemma}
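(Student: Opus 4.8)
The plan is to follow the template of the proof of Lemma~\ref{lemma:st_space}, updating the bookkeeping for the three features that distinguish the Double-Bits decoder from the classical one: the recursion in \texttt{decode\_original\_channel} (Algorithm~\ref{algo:st_decode_ori_channel}) spawns \emph{three} child calls, at channel indices $2i-1,2i,2i+1$, along its rightmost branch rather than two; the parent retains two auxiliary copies of sibling results (the array pointed to by $\tH[\cdot,\lambda+1]$ and the pointer parked in $\tR[\cdot,\lambda]$) before combining; and $\texttt{allocate\_bit}(\lambda,k)$ may reserve $k\in\{1,2\}$ bit-blocks per list element. As always $L_c\le L$ at every point of the run.

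First I would bound $\numD[\lambda]\le L$. The counter $\numD[\lambda]$ is incremented only through $\texttt{allocate\_prob}(\lambda)$, which is called exactly once per list element inside each of the transform routines in Algorithms~\ref{algo:calcu_oria_proba}, \ref{algo:calcu_orib_proba} and \ref{algo:calcu_oric_proba}; each such block of $L_c$ allocations is immediately consumed by a call $\texttt{decode\_channel}(\lambda,\cdot)$, and Line~5 of Algorithm~\ref{algo:STDB_decode_channel} resets $\numD[\lambda]$ to $0$ when that call returns. Hence between any two consecutive increments-by-$L_c$ there is a reset, so $\numD[\lambda]$ never exceeds $L_c\le L$.

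The substantive step is $\numB[\lambda]\le 4L$. The key observations: for $2\le\lambda\le m$ the counter $\numB[\lambda]$ is reset to $0$ at the end of every call $\texttt{decode\_original\_channel}(\lambda-1,\cdot)$ (Line~36 of Algorithm~\ref{algo:st_decode_ori_channel}), it is otherwise only incremented (by $\texttt{allocate\_bit}(\lambda,\cdot)$ executed inside the layer-$\lambda$ processing), and all layer-$\lambda$ processing of the whole run is nested inside one such parent call; so it suffices to bound $\numB[\lambda]$ during a single call $\texttt{decode\_original\_channel}(\lambda-1,i)$ started from $\numB[\lambda]=0$. Write $n_c=2^{\lambda-1}$. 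If $i\le n_c-2$, the parent makes only $\texttt{decode\_channel}(\lambda,2i-1)$ and $\texttt{decode\_channel}(\lambda,2i)$; since $2i-1,2i\le 2^\lambda-2$, neither is the two-bit channel $2^\lambda-1$ of layer $\lambda$, each thus reserves one block per list element, and the first result (kept in $\tH[\cdot,\lambda+1]$) and the second (parked in $\tR[\cdot,\lambda]$) both remain live until the combine, so $\numB[\lambda]$ rises only to at most $L_c+L'_c\le 2L$. If $i=n_c-1$, the parent additionally calls $\texttt{decode\_channel}(\lambda,2i+1)$ with $2i+1=2^\lambda-1$, which is exactly the two-bit boundary channel of layer $\lambda$ and therefore reserves two blocks per list element (through $\texttt{allocate\_bit}(\lambda,2)$ in the ``decode two bits'' branch of Algorithm~\ref{algo:st_decode_ori_channel}, or through $\texttt{allocate\_bit}(m,2)$ in $\texttt{decode\_boundary\_channel}(n-1)$ when $\lambda=m$); since the two earlier copies are still live, $\numB[\lambda]$ rises only to at most $L_c+L'_c+2L''_c\le 4L$. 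The boundary layer is handled identically ($\numB[m]$ is reset by each $\texttt{decode\_original\_channel}(m-1,\cdot)$ and bounded by $4L$), and layer $1$ is trivial since its only client is the single two-bit channel $i=1$, giving $\numB[1]\le 2L$.

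Finally, the $O(Ln)$ space bound follows by direct enumeration: by $\numD[\lambda]\le L$ the array $\tD$ uses $\sum_{\lambda=1}^m 4L\cdot 2^{m-\lambda}=4L(n-1)$ entries, by $\numB[\lambda]\le 4L$ the array $\tB$ uses $\sum_{\lambda=1}^m 4L\cdot 2^{m-\lambda}=4L(n-1)$ entries, the integer arrays $\numD,\numB$ have $O(\log n)$ entries, $\score$ and \texttt{PriQue} have $O(L)$ entries, and the six pointer arrays $\tP,\barP,\tR,\barR,\tH,\barH$ have $O(L\log n)$ entries; summing gives $O(Ln)$. The main obstacle is the case analysis for the $4L$ bound: one must verify for every child call that it reserves exactly one block per surviving list element unless it is the layer-$\lambda$ two-bit channel (then exactly two), that the $\tH$-copy of the $(2i-1)$-result and the parked $\tR$-pointer holding the $(2i)$-result are neither overwritten nor freed before the parent consumes them to write its layer-$(\lambda-1)$ output, and that the pointer swap $\texttt{swap}(\barR,\tR)$ inside $\texttt{decode\_boundary\_channel}$ still reserves only one (or two) fresh blocks per pruned list element; in particular one must confirm the arithmetic identity that $2i+1=2^\lambda-1$ precisely when $i=n_c-1$, so that the three-way branch occurs only along the single rightmost root-to-leaf path of the recursion tree.
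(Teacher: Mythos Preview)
Your proof is correct and follows essentially the same approach as the paper: the $\numD[\lambda]\le L$ bound is handled exactly as in Lemma~\ref{lemma:st_space}, and for $\numB[\lambda]\le 4L$ you perform the same case split on the parent call $\texttt{decode\_original\_channel}(\lambda-1,i)$, observing that the child calls at $2i-1$ and $2i$ each contribute at most $L$ (since $2i-1,2i\le 2^{\lambda}-2$), while the third child call at $2i+1=2^{\lambda}-1$ (which exists only when $i=2^{\lambda-1}-1$) contributes at most $2L$, yielding the total $4L$. Your presentation is somewhat more detailed (you explicitly track the liveness of the $\tH$ and parked $\tR$ pointers and handle the boundary layers separately), but the argument is the same.
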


\begin{proof}
	The proof of $\numD[\lambda]\le L$ is the same as Lemma~\ref{lemma:st_space}, and we do not repeat it.
	Now we prove $\numB[\lambda]\le 4L$. As we can see from Algorithms~\ref{algo:STDB_decode_channel},\ref{algo:decode_boundary_channel}--\ref{algo:st_decode_ori_channel}, each time we call the function \texttt{decode\_channel$(\lambda, i)$}, the value of $\numB[\lambda]$ increases by $L_c$ if $i\le 2^\lambda-2$, and it increases by $2L_c$ if $i=2^\lambda-1$.
	Since the input $i$ in Algorithm~\ref{algo:st_decode_ori_channel} satisfies $i\le 2^\lambda-1$, we have $2i-1<2i\le 2^{\lambda+1}-2$. Therefore, after executing Line~3 of Algorithm~\ref{algo:st_decode_ori_channel}, the value of $\numB[\lambda+1]$ increases by at most $L$. Similarly, after executing Line~8 of Algorithm~\ref{algo:st_decode_ori_channel}, the value of $\numB[\lambda+1]$ also increases by at most $L$. If $i=2^\lambda-1$, we will execute Line~24 of Algorithm~\ref{algo:st_decode_ori_channel}. In this case, $2i+1=2^{\lambda+1}-1$, so the value of $\numB[\lambda+1]$ increases by at most $2L$. Therefore, before we reset $\numB[\lambda+1]$ to $0$ in Line~35 of Algorithm~\ref{algo:st_decode_ori_channel}, its value is at most $4L$. This proves $\numB[\lambda]\le 4L$.

	The proof of the space complexity is the same as Lemma~\ref{lemma:st_space}.
\end{proof}

In TABLE~\ref{tb:stdb_space}, we list the upper bound of $\numB[\lambda+1]$ at the starting point and the end of the function $\texttt{decode\_channel}(\lambda+1, j)$. The starting point refers to the moment we call $\texttt{decode\_channel}(\lambda+1, j)$, and the end refers to the moment this function returns. These upper bounds come from the proof of Lemma~\ref{lemma:stdb_space}.

%
%

\begin{table}[H]
	\centering
	\begin{tabular}{|r|l|c|c|}
		\hline
		\multicolumn{2}{|c|}{cases}              & start      & end      \\
		\hline
		\multirow{2}*{$1\le i\le 2^{\lambda}-1$} & $j = 2i-1$ & 0   & L  \\
		\cline{2-4}
		~                                        & $j = 2i$   & L   & 2L \\
		\hline
		$ i=2^{\lambda}-1$                       & $j = 2i+1$ & 2L  & 4L \\
		\hline
	\end{tabular}
	\caption{The upper bound of $\numB[\lambda+1]$ at the starting point and the end of the function $\texttt{decode\_channel}(\lambda+1, j)$}
	\label{tb:stdb_space}
\end{table}

\begin{proposition}
	The decoding time complexity of standard polar codes based on the DB polar transform is $O(Ln\log(n))$.
\end{proposition}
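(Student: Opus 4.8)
The plan is to mirror the accounting used in the classical complexity analysis of the SCL decoder for standard polar codes (the proposition at the end of Section~\ref{sect:polar_decoding}), now applied to the recursion in Algorithms~\ref{algo:ABS_Decoding}, \ref{algo:STDB_decode_channel}, \ref{algo:decode_boundary_channel}, and \ref{algo:st_decode_ori_channel}. The first step is a call-count claim: for every $1\le\lambda\le m$ (with $m=\log_2 n$), the function $\texttt{decode\_channel}(\lambda,i)$ is invoked exactly once for each $i\in\{1,2,\dots,2^\lambda-1\}$ and for no other value of $i$. This follows by induction on $\lambda$ from the recursive structure of Algorithm~\ref{algo:st_decode_ori_channel}: a call with parameters $(\lambda,i)$, $\lambda<m$, spawns the children $(\lambda+1,2i-1)$ and $(\lambda+1,2i)$, together with the extra child $(\lambda+1,2i+1)$ precisely in the boundary case $i=2^\lambda-1$; as $i$ ranges over $\{1,\dots,2^\lambda-1\}$ these children sweep through $\{1,\dots,2^{\lambda+1}-1\}$ without repetition. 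In particular the extra $\oric$-branch fires only once per layer, so it does not affect the count, and the total number of recursive calls is $\sum_{\lambda=1}^{m}(2^\lambda-1)=O(n)$, with $n-1$ of them being boundary calls at $\lambda=m$.

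Next I would bound the work inside a single call $\texttt{decode\_channel}(\lambda,i)$ \emph{excluding} its recursive sub-calls. For $\lambda<m$ this work lies in $\texttt{decode\_original\_channel}$ (Algorithm~\ref{algo:st_decode_ori_channel}) and consists of a constant number of invocations of the transform routines in Algorithms~\ref{algo:calcu_oria_proba}--\ref{algo:calcu_oric_proba} together with the bit-combination loops at the end of Algorithm~\ref{algo:st_decode_ori_channel}. By Lemma~\ref{lemma:stdb_space} the current list size $L_c$ never exceeds $L$, and each of these routines and loops runs over at most $L$ list elements, over $\bar{n}_c=2^{m-\lambda-1}$ positions $\beta$, and over a constant number of input-bit patterns; hence the local cost of such a call is $O(L\,2^{m-\lambda})$. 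For $\lambda=m$ the local work is in $\texttt{decode\_boundary\_channel}$ (Algorithm~\ref{algo:decode_boundary_channel}): it pushes $O(L)$ and pops $O(L)$ quadruples into/from $\texttt{PriQue}$, costing $O(L\log L)$, and it copies the pointer fields $\barP,\barR,\barH$ for each surviving path across the $m-1$ layers $\lambda'\in\{1,\dots,m-1\}$, costing $O(Lm)$; the local cost is therefore $O(Lm+L\log L)=O(L\log n)$.

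Finally I would sum the local costs over all calls, obtaining
\[
\sum_{\lambda=1}^{m-1}(2^\lambda-1)\cdot O\big(L\,2^{m-\lambda}\big)\;+\;(2^m-1)\cdot O(L\log n)\;=\;\sum_{\lambda=1}^{m-1}O(Ln)\;+\;O(Ln\log n)\;=\;O(Ln\log n),
\]
and then add the $O(n)$ initialization in Lines~1--7 of Algorithm~\ref{algo:ABS_Decoding} and the $O(L)$ final path selection in Lines~9--17, neither of which changes the order. I expect the only delicate point to be the bookkeeping at the $\Theta(n)$ boundary calls, since the priority-queue operations together with the pointer copying already contribute $\Theta(Ln\log n)$ and thus dominate, while every other contribution is strictly smaller; one should also double-check that the $\oric$-transform branch is not over-counted, which is exactly what the call-count claim above ensures. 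As in the classical decoder, the $\log L$ factor in the queue operations could be removed with a linear-time selection routine, but this is immaterial for the stated $O(Ln\log(n))$ bound, and the $O(Ln)$ space bound has already been established in Lemma~\ref{lemma:stdb_space}.
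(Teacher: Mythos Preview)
Your argument is correct. The paper states this proposition without proof (as it also does for the analogous Proposition at the end of Section~\ref{sect:polar_decoding}), presumably because the reasoning is a routine adaptation of the classical $O(Ln\log n)$ analysis for the SCL decoder; your write-up supplies exactly that adaptation. The call-count induction is the right observation---the extra ``$\oric$'' branch in Algorithm~\ref{algo:st_decode_ori_channel} fires once per layer and simply completes the range $\{1,\dots,2^{\lambda+1}-1\}$---and the per-call cost estimates for Algorithms~\ref{algo:calcu_oria_proba}--\ref{algo:calcu_oric_proba} and Algorithm~\ref{algo:decode_boundary_channel} are accurate. One cosmetic point: the $O(L\log L)$ term from the priority queue is dominated by $O(L\log n)$ only under the standing assumption $L\le n$, which is harmless here (and is the same implicit assumption underlying the paper's earlier proposition for the classical decoder).
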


\subsection{SCL decoder for ABS polar codes}\label{sect:ABS_decoder}

\begin{figure*}
	\centering
	\includegraphics[scale = 0.6]{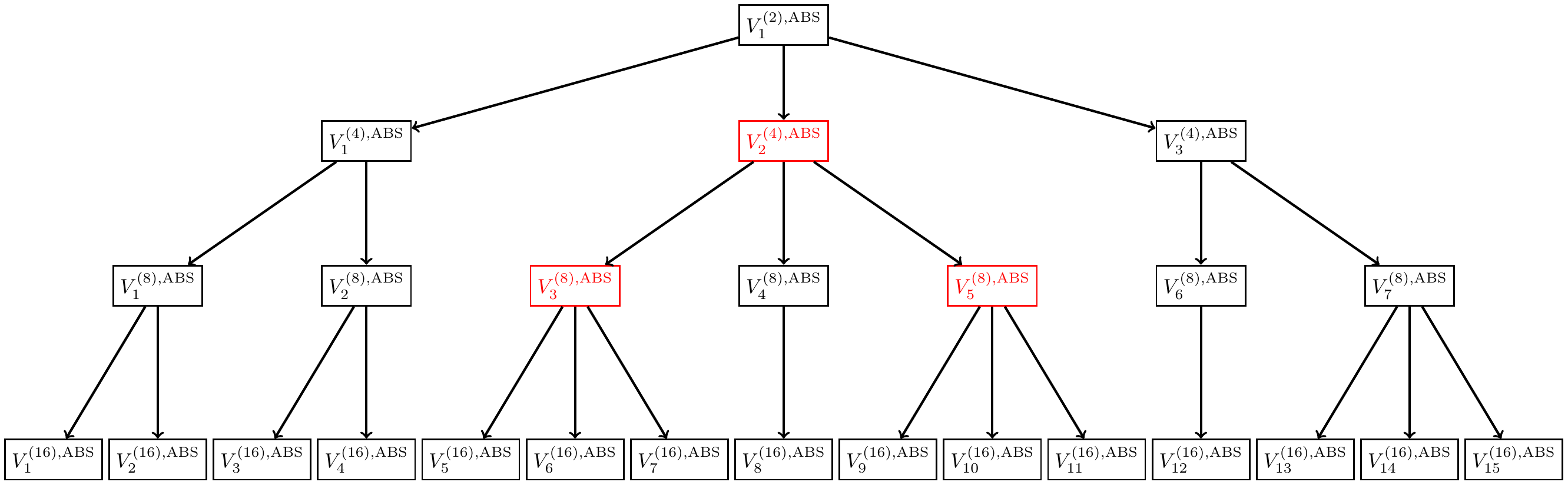}
	\caption{Recursive decoding of the ABS polar code defined in Fig.~\ref{fig:example_enc}.
	We put $V_{i}^{(2^\lambda),\ABS}$ in a black block (e.g., $V_1^{(2),\ABS}$) if $2i\notin\cI^{(2^{\lambda+1})}$.
	In this case, \texttt{decode\_channel}$(\lambda, i)$ in Algorithm~\ref{algo:decode_channel} calls $\texttt{decode\_original\_channel}(\lambda,i)$.
	We put $V_{i}^{(2^\lambda),\ABS}$ in a red block (e.g., $V_2^{(4),\ABS}$) if $2i\in\cI^{(2^{\lambda+1})}$.
	In this case, \texttt{decode\_channel}$(\lambda, i)$ calls $\texttt{decode\_swapped\_channel}(\lambda,i)$. An arrow from $V_{i}^{(2^\lambda),\ABS}$ to $V_{i_1}^{(2^{\lambda_1}),\ABS}$ means that \texttt{decode\_channel}$(\lambda_1, i_1)$ is called in the execution of \texttt{decode\_channel}$(\lambda, i)$. For example, we call \texttt{decode\_channel} with input parameters $(2,1),(2,2)$ and $(2,3)$ in the execution of \texttt{decode\_channel}$(1,1)$.
	}
	\label{fig:example_dec}
\end{figure*}

\begin{algorithm}[ht]
	\DontPrintSemicolon
	\caption{\texttt{decode\_channel$(\lambda, i)$} }
	\label{algo:decode_channel}
	\KwIn{layer $\lambda\in \{1,2,\dots, m\}$ and index $i\in \{1, 2, \dots, 2^\lambda-1\}$}

	\uIf{$\lambda = m$}{
		\texttt{decode\_boundary\_channel}$(i)$
		\Comment{Algorithm~\ref{algo:decode_boundary_channel}}
	}
	\ElseIf{$2i\notin \cI^{(2^{\lambda + 1})}$}{
		\texttt{decode\_original\_channel}$(\lambda, i)$
		\Comment{Algorithm~\ref{algo:decode_ori_channel}}
	}\ElseIf{$2i\in \cI^{(2^{\lambda + 1})}$}{
		\texttt{decode\_swapped\_channel}$(\lambda, i)$
		\Comment{Algorithm~\ref{algo:decode_swp_channel}}
	}

	$\numD[\lambda]\gets 0$

	\Return

\end{algorithm}

In this subsection, we present the new SCL decoder for ABS polar codes.
This decoder is based on the DB polar transform in Fig.~\ref{fig:DBpt} and the SDB polar transform in Fig.~\ref{fig:SDBpt}.
Since we have the permutation matrices $\mP_{2}^{\ABS}, \mP_4^{\ABS} ,\dots, \mP_{n}^{\ABS}$ in the ABS polar code construction, we need to replace the recursive relation \eqref{eq:def_intermediate} with
\begin{equation}\label{eq:abs_intermediate}
	\begin{aligned}
		      & (X_{1}^{(2^\lambda)}, X_{2}^{(2^\lambda)},\dots, X_{n}^{(2^\lambda)})
		=     & (X_{1}^{(2^{\lambda+1})}, X_{2}^{(2^{\lambda+1})},\dots, X_{n}^{(2^{\lambda+1})})
		\cdot & \big((\mP_{2^{\lambda+1}}^{\ABS}(\mI_{2^\lambda}\otimes\mG_2^{\polar}))\otimes\mI_{2^{m-\lambda-1}}\big)
	\end{aligned}
\end{equation}
in order to define the intermediate vectors in ABS polar codes.
We still use the notation in \eqref{eq:subscript} and \eqref{eq:listdecoded}.

The data structures in this subsection are essentially the same as the ones in the previous subsection. There are only two minor differences:
\begin{enumerate}[(i)]
	\item We change the range of the index $s$ in \eqref{eq:B_index} from $1\le s\le 4L$ to $1\le s\le 6L$.

	\item In the integer array $\numB$, each entry $\numB[\lambda]$ takes value in $\{0,1,2,\dots,6L\}$ instead of $\{0,1,2,\dots,4L\}$.
\end{enumerate}
For the SCL decoder presented in this subsection, the fields of the $\ell$th list element are the same as the ones listed in \eqref{eq:stDB_list_field}.

The following functions are shared by the decoder in this subsection and the decoders in previous sections:
\begin{enumerate}[(1)]
	\item  \texttt{allocate\_prob} in Algorithm~\ref{algo:st_allocate_prob}

	\item  \texttt{allocate\_bit} in Algorithm~\ref{algo:allocate_bit}

	\item  $\texttt{decode}$ in Algorithm~\ref{algo:ABS_Decoding}. This is the main function of the decoder.

	\item  \texttt{calculate\_$\oria$\_transform} in Algorithm~\ref{algo:calcu_oria_proba}

	\item  \texttt{calculate\_$\orib$\_transform} in Algorithm~\ref{algo:calcu_orib_proba}

	\item  \texttt{calculate\_$\oric$\_transform} in Algorithm~\ref{algo:calcu_oric_proba}

	\item  \texttt{decode\_boundary\_channel} in Algorithm~\ref{algo:decode_boundary_channel}
\end{enumerate}

The following functions are solely used in this subsection. More precisely, either they appeared in previous subsections with the same name but with different implementations or they did not appear in previous subsections at all.

\begin{enumerate}[(1)]
	\item  \texttt{decode\_channel} in Algorithm~\ref{algo:decode_channel}. In Section~\ref{sect:ST_decoder_DB}, we also have the function \texttt{decode\_channel} in Algorithm~\ref{algo:STDB_decode_channel}, but the implementations in these two algorithms are different.

	\item \texttt{decode\_original\_channel} in Algorithm~\ref{algo:decode_ori_channel}. In Section~\ref{sect:ST_decoder_DB}, we also have the function \texttt{decode\_original\_channel} in Algorithm~\ref{algo:st_decode_ori_channel}, but the implementations in these two algorithms are different.

	\item \texttt{decode\_swapped\_channel} in Algorithm~\ref{algo:decode_swp_channel}. This function did not appear in previous subsections.

	\item  \texttt{calculate\_$\swpa$\_transform} in Algorithm~\ref{algo:calcu_swpa_proba}. This function did not appear in previous subsections.

	\item  \texttt{calculate\_$\swpb$\_transform} in Algorithm~\ref{algo:calcu_swpb_proba}. This function did not appear in previous subsections.

	\item  \texttt{calculate\_$\swpc$\_transform} in Algorithm~\ref{algo:calcu_swpc_proba}. This function did not appear in previous subsections.
\end{enumerate}

Although this subsection and the previous subsection share the same main function $\texttt{decode}$ in Algorithm~\ref{algo:ABS_Decoding}, the function \texttt{decode\_channel} in Line~8 of Algorithm~\ref{algo:ABS_Decoding} has different implementations in these two subsections.
More specifically, the function $\texttt{decode\_channel}$ in Algorithm~\ref{algo:decode_channel} has one more branch than $\texttt{decode\_channel}$ in Algorithm~\ref{algo:STDB_decode_channel}. The additional branch decodes swapped adjacent bits.

Algorithm~\ref{algo:decode_ori_channel} and Algorithm~\ref{algo:st_decode_ori_channel} are the implementations of $\texttt{decode\_original\_channel}$ for this subsection and the previous subsection, respectively. The difference between Algorithm~\ref{algo:decode_ori_channel} and Algorithm~\ref{algo:st_decode_ori_channel} is that we calculate the $\oria$ transform only when $2(i-1)\notin\cI^{(2^{\lambda+1})}$ in Algorithm~\ref{algo:decode_ori_channel}; see Lines~2--6. In contrast, we always calculate the $\oria$ transform in Algorithm~\ref{algo:st_decode_ori_channel}; see Lines~2--5. The reason behind this difference is given in Lemma~\ref{lemma:recur_ABS}: When $2(i-1)\in\cI^{(2^{\lambda+1})}$, we only have $V_{2i-1}^{(2^{\lambda+1}), \ABS} = \big(V_{i-1}^{(2^\lambda),\ABS}\big)^{\swpc}$, but $V_{2i-1}^{(2^{\lambda+1}), \ABS} = \big(V_{i}^{(2^\lambda),\ABS}\big)^{\oria}$ does not hold, so we do not calculate the $\oria$ transform in this case.

In Fig.~\ref{fig:example_dec}, we use the ABS polar code defined in Fig.~\ref{fig:example_enc} as a concrete example to illustrate the recursive structure of the function $\texttt{decode\_channel}$ in Algorithm~\ref{algo:decode_channel}.

\begin{lemma}  \label{lm:swp6}
	Suppose that $1\le \lambda \le m$ and $1\le i\le 2^\lambda-1$.
	Before we call the function $\texttt{decode\_channel}$ in Algorithm~\ref{algo:decode_channel} with input parameters $(\lambda, i)$, the pointer $\tP[\ell, \lambda]$ satisfies that
	\begin{equation}\label{eq:swp_prob}
		\tP[\ell, \lambda][\beta, a, b] = V_{i}^{(2^\lambda)}(\hat{\mathbi o}_{i,\beta}^{(\ell,\lambda)} | a, b)
		\quad  \text{for all~} 1\le \ell\le L_c,~ 1\le \beta \le 2^{m-\lambda}
		\text{~and~} a,b\in\{0,1\} .
	\end{equation}
	After the function $\texttt{decode\_channel}(\lambda, i)$ in Algorithm~\ref{algo:decode_channel} returns, the pointer $\tR[\ell, \lambda]$ satisfies that
	\begin{equation} \label{eq:swp_1bit}
		\tR[\ell, \lambda][\beta] = \hat{x}_{i,\beta}^{(\ell, \lambda)}
		\quad  \text{for all~} 1\le \ell\le L_c \text{~and~}
		1\le \beta\le 2^{m-\lambda}.
	\end{equation}
	Moreover, if $i=2^\lambda-1$, then the pointer $\tR[\ell, \lambda]$ further satisfies that
	\begin{equation} \label{eq:swp_2bit}
		\tR[\ell, \lambda][\beta+2^{m-\lambda}] = \hat{x}_{i+1,\beta}^{(\ell, \lambda)}
		\quad  \text{for all~} 1\le \ell\le L_c \text{~and~}
		1\le \beta\le 2^{m-\lambda}.
	\end{equation}
\end{lemma}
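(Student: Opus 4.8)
The plan is to follow the proof of Lemma~\ref{lm:566} almost line by line, replacing the single recursive rule of Lemma~\ref{lemma:recur_ST_DB} by the five-case recursion of Lemma~\ref{lemma:recur_ABS} and inserting, at every step of the recursion, a branch according to whether $2i\in\cI^{(2^{\lambda+1})}$. (Here the channel $V_i^{(2^\lambda)}$ in \eqref{eq:swp_prob} is read as the ABS adjacent-bits-channel $V_i^{(2^\lambda),\ABS}$, which by \eqref{eq:abs_intermediate} is the channel from $(X_{i,\beta}^{(\lambda)},X_{i+1,\beta}^{(\lambda)})$ to $\mathbi{O}_{i,\beta}^{(\lambda)}$.) As in Lemma~\ref{lm:566}, I would prove the two assertions by inductions running in opposite directions through the recursion tree: the bit-bookkeeping identities \eqref{eq:swp_1bit}--\eqref{eq:swp_2bit} by induction on $\lambda$ with base case $\lambda=m$ and inductive step from $\lambda+1$ to $\lambda$, and the probability identity \eqref{eq:swp_prob} by induction on $\lambda$ with base case $\lambda=1$ and inductive step from $\lambda$ to $\lambda+1$.

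For \eqref{eq:swp_1bit}--\eqref{eq:swp_2bit}, the base case $\lambda=m$ is governed by \texttt{decode\_boundary\_channel} (Algorithm~\ref{algo:decode_boundary_channel}), which is reused verbatim from Section~\ref{sect:ST_decoder_DB}; since $X_{i,1}^{(m)}=U_i$ and $X_{i+1,1}^{(m)}=U_{i+1}$ by \eqref{eq:subscript}--\eqref{eq:xnn}, the argument, including the \texttt{flag} logic and the two-bit boundary case $i=2^\lambda-1$, is word-for-word that of Lemma~\ref{lm:566}. For the inductive step I would use that $\texttt{decode\_channel}(\lambda,i)$ dispatches (Algorithm~\ref{algo:decode_channel}) to \texttt{decode\_original\_channel} (Algorithm~\ref{algo:decode_ori_channel}) when $2i\notin\cI^{(2^{\lambda+1})}$ and to \texttt{decode\_swapped\_channel} (Algorithm~\ref{algo:decode_swp_channel}) when $2i\in\cI^{(2^{\lambda+1})}$. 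In either branch, the induction hypothesis places the correct layer-$(\lambda+1)$ decoded vectors into $\tR[\ell,\lambda+1]$ and $\tH[\ell,\lambda+1]$, and what remains is to check that the bit-recombination steps reproduce the coordinate relations between layer $\lambda$ and layer $\lambda+1$ implied by \eqref{eq:abs_intermediate}. Because $\mathbf{P}_{2^{\lambda+1}}^{\ABS}$ only swaps even-indexed, fully separated coordinates (condition \eqref{eq:separated_even}), these relations reduce blockwise to the already-verified formulas \eqref{eq:R+-}--\eqref{eq:fthat}, up to a local index transposition that has been absorbed into the SDB transforms $\swpa,\swpb,\swpc$.

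For \eqref{eq:swp_prob}, the base case $\lambda=1$ follows from \eqref{eq:v_abs_init} (equal to \eqref{eq:v_init} since $\mathbf{P}_2^{\ABS}$ is the identity) together with Lines~5--7 of Algorithm~\ref{algo:ABS_Decoding}, exactly as in Lemma~\ref{lm:566}. For the inductive step, assuming $\tP[\ell,\lambda]$ stores the transition probabilities of $V_i^{(2^\lambda),\ABS}$, the decoder computes either the DB transforms $\oria,\orib,\oric$ (Algorithms~\ref{algo:calcu_oria_proba}--\ref{algo:calcu_oric_proba}) in the original branch or the SDB transforms $\swpa,\swpb,\swpc$ (Algorithms~\ref{algo:calcu_swpa_proba}--\ref{algo:calcu_swpc_proba}) in the swapped branch, and Lemma~\ref{lemma:recur_ABS} tells us precisely which of $V_{2i-1}^{(2^{\lambda+1}),\ABS}$, $V_{2i}^{(2^{\lambda+1}),\ABS}$, $V_{2i+1}^{(2^{\lambda+1}),\ABS}$ equals which transform of $V_i^{(2^\lambda),\ABS}$; hence \eqref{eq:swp_prob} is re-established at layer $\lambda+1$ just before each recursive $\texttt{decode\_channel}$ call. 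The delicate point is the $\oria$ transform: when $2(i-1)\in\cI^{(2^{\lambda+1})}$, Lemma~\ref{lemma:recur_ABS} and the remark following it say that $V_{2i-1}^{(2^{\lambda+1}),\ABS}$ is $(V_{i-1}^{(2^\lambda),\ABS})^{\swpc}$ rather than $(V_i^{(2^\lambda),\ABS})^{\oria}$, and this value was already written into $\tP[\ell,\lambda+1]$ during the earlier call $\texttt{decode\_channel}(\lambda,i-1)$ --- which is exactly why Algorithm~\ref{algo:decode_ori_channel} omits the $\oria$ computation in that case.

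I expect the main obstacle to be the case bookkeeping rather than any individual computation: one must verify that the branching in Algorithm~\ref{algo:decode_channel}, combined with the $i\le 2^\lambda-2$ versus $i=2^\lambda-1$ split and the \texttt{flag} logic of Algorithm~\ref{algo:decode_boundary_channel}, exhausts the five cases of Lemma~\ref{lemma:recur_ABS}, and that whenever $V_{2i+1}^{(2^{\lambda+1}),\ABS}$ has two competing recursive expressions the decoder consistently uses the unique valid one singled out by the remark after Lemma~\ref{lemma:recur_ABS}. The separation condition \eqref{eq:separated_even} is what makes this dispatch unambiguous, so it is invoked throughout.
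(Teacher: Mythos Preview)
Your plan is correct and matches the paper's proof: both argue by the same pair of inductions (bits from $\lambda=m$ downward, probabilities from $\lambda=1$ upward), reuse Algorithm~\ref{algo:decode_boundary_channel} and the analysis of Lemma~\ref{lm:566} for the base case and the original branch, and handle the swapped branch by reading off the coordinate relations~\eqref{eq:rdX1}--\eqref{eq:rdX2} that $\mathbf{P}_{2^{\lambda+1}}^{\ABS}$ induces in~\eqref{eq:abs_intermediate}. If anything you are more explicit than the paper about the $\oria$ subtlety when $2(i-1)\in\cI^{(2^{\lambda+1})}$: the paper simply asserts that the proof of~\eqref{eq:swp_prob} is ``the same'' as that of~\eqref{eq:1prob}, whereas you correctly note that the call $\texttt{decode\_channel}(\lambda+1,2i-1)$ (and the verification of~\eqref{eq:swp_prob} for it) already occurred via the $\swpc$ transform inside the preceding $\texttt{decode\_swapped\_channel}(\lambda,i-1)$.
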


\begin{proof}
	The proof of \eqref{eq:swp_prob} is the same as that of \eqref{eq:1prob}. Here we only prove \eqref{eq:swp_1bit}--\eqref{eq:swp_2bit} by induction. The proof of the base case $\lambda=m$ relies on the analysis of Algorithm~\ref{algo:decode_boundary_channel}, which was already done in the proof of Lemma~\ref{lm:566}. For the inductive step, we assume that \eqref{eq:swp_1bit}--\eqref{eq:swp_2bit} hold for $\lambda+1$ and prove them for $\lambda$. This requires us to analyze Algorithm~\ref{algo:decode_ori_channel} for $2i\notin \cI^{(2^{\lambda + 1})}$ and analyze Algorithm~\ref{algo:decode_swp_channel} for $2i\in \cI^{(2^{\lambda + 1})}$. Algorithm~\ref{algo:decode_ori_channel} and Algorithm~\ref{algo:st_decode_ori_channel} are essentially the same. Since we have already analyzed Algorithm~\ref{algo:st_decode_ori_channel} in the proof of Lemma~\ref{lm:566}, we omit the analysis of Algorithm~\ref{algo:decode_ori_channel} here. We will focus on the analysis of Algorithm~\ref{algo:decode_swp_channel} for the rest of this proof.

	By the induction hypothesis, after executing Lines~1--5 of Algorithm~\ref{algo:decode_swp_channel}, we have
	$$
		\tH[\ell, \lambda+1][\beta] = \hat{x}_{2i-1,\beta}^{(\ell, \lambda+1)}
		\quad  \text{for all~} 1\le \ell\le L_c \text{~and~}
		1\le \beta\le 2^{m-\lambda-1}.
	$$
	After executing Lines~7--10 and Lines~17,27, we have
	$$
		\temppointer[\beta] = \hat{x}_{2i,\beta}^{(\ell, \lambda+1)}
		\quad  \text{for all~} 1\le \ell\le L_c \text{~and~}
		1\le \beta\le 2^{m-\lambda-1}.
	$$
	After executing Lines~12--13, we have
	\begin{equation} \label{eq:part1}
		\tR[\ell, \lambda+1][\beta] = \hat{x}_{2i+1,\beta}^{(\ell, \lambda+1)}
		\quad  \text{for all~} 1\le \ell\le L_c \text{~and~}
		1\le \beta\le 2^{m-\lambda-1}.
	\end{equation}
	In Line~1, we set $n_c=2^\lambda$. We again divide the discussion into two cases.
		{\bf Case (1) $i\le n_c-2$:} In this case, we only need to prove \eqref{eq:swp_1bit}.
	Since $n_c=2^\lambda$, we have $n/(2n_c)=2^{m-\lambda-1}$. Therefore, Lines~21--22 of Algorithm~\ref{algo:decode_swp_channel} become
	\begin{equation} \label{eq:swp_R1}
		\begin{aligned}
			\tR[\ell, \lambda][\beta] = \hat{x}_{2i-1,\beta}^{(\ell, \lambda+1)} + \hat{x}_{2i+1,\beta}^{(\ell, \lambda+1)} , \quad
			\tR[\ell, \lambda][\beta+2^{m-\lambda-1}] = \hat{x}_{2i+1,\beta}^{(\ell, \lambda+1)} \\
			\text{for all~} 1\le \ell\le L_c \text{~and~}
			1\le \beta\le 2^{m-\lambda-1}.
		\end{aligned}
	\end{equation}
	Equations \eqref{eq:abs_intermediate} and \eqref{eq:subscript} together imply that if $2i\in \cI^{(2^{\lambda + 1})}$, then
	\begin{align}
		 & X_{i,\beta}^{(\lambda)} = X_{2i-1,\beta}^{(\lambda+1)} + X_{2i+1,\beta}^{(\lambda+1)}, \quad
		X_{i,\beta+2^{m-\lambda-1}}^{(\lambda)} = X_{2i+1,\beta}^{(\lambda+1)}  \quad
		\text{for all~} 1\le \beta\le 2^{m-\lambda-1} , \label{eq:rdX1}                                 \\
		 & X_{i+1,\beta}^{(\lambda)} = X_{2i,\beta}^{(\lambda+1)} + X_{2i+2,\beta}^{(\lambda+1)}, \quad
		X_{i+1,\beta+2^{m-\lambda-1}}^{(\lambda)} = X_{2i+2,\beta}^{(\lambda+1)}  \quad
		\text{for all~} 1\le \beta\le 2^{m-\lambda-1}.
		\label{eq:rdX2}
	\end{align}
	\eqref{eq:rdX1} further implies that
	\begin{align*}
		\hat{x}_{i,\beta}^{(\ell, \lambda)} = \hat{x}_{2i-1,\beta}^{(\ell, \lambda+1)} + \hat{x}_{2i+1,\beta}^{(\ell, \lambda+1)}, \quad
		\hat{x}_{i,\beta+2^{m-\lambda-1}}^{(\ell, \lambda)} = \hat{x}_{2i+1,\beta}^{(\ell, \lambda+1)} \\
		\text{for all~} 1\le \ell\le L_c \text{~and~}
		1\le \beta\le 2^{m-\lambda-1}.
	\end{align*}
	Combining this with \eqref{eq:swp_R1}, we complete the proof of \eqref{eq:swp_1bit} for Case (1).
		{\bf Case (2) $i= n_c-1$:} In this case, we need to prove both \eqref{eq:swp_1bit} and \eqref{eq:swp_2bit}. The proof of \eqref{eq:swp_1bit} is exactly the same as Case (1). To prove \eqref{eq:swp_2bit}, we observe that if $i= n_c-1$, then $2i+1=2n_c-1=2^{\lambda+1}-1$. Then by the induction hypothesis, after executing Lines~12--13, we have not only \eqref{eq:part1} but also
	\begin{align*}
		\tR[\ell, \lambda+1][\beta+2^{m-\lambda-1}] = \hat{x}_{2i+2,\beta}^{(\ell, \lambda+1)} \quad
		\text{for all~} 1\le \ell\le L_c \text{~and~}
		1\le \beta\le 2^{m-\lambda-1}.
	\end{align*}
	Therefore, Lines~33--35 become
	\begin{equation} \label{eq:swp_R2}
		\begin{aligned}
			\tR[\ell, \lambda][\beta+2^{m-\lambda}] = \hat{x}_{2i,\beta}^{(\ell, \lambda+1)} + \hat{x}_{2i+2,\beta}^{(\ell, \lambda+1)}, \quad
			\tR[\ell, \lambda][\beta+2^{m-\lambda-1}+2^{m-\lambda}] = \hat{x}_{2i+2,\beta}^{(\ell, \lambda+1)} \\
			\text{for all~} 1\le \ell\le L_c \text{~and~}
			1\le \beta\le 2^{m-\lambda-1}.
		\end{aligned}
	\end{equation}
	Equation \eqref{eq:rdX2} implies that
	\begin{align*}
		\hat{x}_{i+1,\beta}^{(\ell, \lambda)} = \hat{x}_{2i,\beta}^{(\ell, \lambda+1)} + \hat{x}_{2i+2,\beta}^{(\ell, \lambda+1)}, \quad
		\hat{x}_{i+1,\beta+2^{m-\lambda-1}}^{(\ell, \lambda)} = \hat{x}_{2i+2,\beta}^{(\ell, \lambda+1)} \\
		\text{for all~} 1\le \ell\le L_c \text{~and~}
		1\le \beta\le 2^{m-\lambda-1}.
	\end{align*}
	Combining this with \eqref{eq:swp_R2}, we complete the proof of \eqref{eq:swp_2bit}.
\end{proof}


The next lemma shows that the data structures $\tD$ and $\tB$ are large enough to store the transition probabilities and the decoding results of the intermediate vectors throughout the decoding procedure.
\begin{lemma}\label{lemma:abs_space}
	Throughout the whole decoding procedure, we have  $\numD[\lambda]\le L$ and $\numB[\lambda]\le 6L$ for all $1\le \lambda\le m$.  The space complexity of the SCL decoder is $O(Ln)$.
\end{lemma}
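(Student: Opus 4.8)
The plan is to prove the two inequalities $\numD[\lambda]\le L$ and $\numB[\lambda]\le 6L$ separately, and then read off the $O(Ln)$ space complexity by the same counting as in Lemma~\ref{lemma:st_space} and Lemma~\ref{lemma:stdb_space}. The bound $\numD[\lambda]\le L$ is proved exactly as before and requires no new idea: for every $1\le\lambda\le m$ and every admissible index $i$ the function $\texttt{decode\_channel}(\lambda,i)$ is called at most once, all calls with a strictly smaller first argument are nested inside it and return before the next call with the same first argument, each such call allocates at most one array $\tD[\lambda,\cdot]$ per active decoding path (hence at most $L_c\le L$ arrays), and it resets $\numD[\lambda]$ to $0$ before returning in Line~7 of Algorithm~\ref{algo:decode_channel}. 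Therefore $\numD[\lambda]$ never exceeds $L$.

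The bound $\numB[\lambda]\le 6L$ is the core of the argument and the only place where the sets $\cI^{(\cdot)}$ really enter. I would track the value of $\numB[\lambda+1]$ across the whole ordered sequence of calls $\texttt{decode\_channel}(\lambda,1),\texttt{decode\_channel}(\lambda,2),\dots,\texttt{decode\_channel}(\lambda,2^\lambda-1)$, the only calls that touch layer $\lambda+1$. The facts I would use, all read off from Lemma~\ref{lemma:recur_ABS} together with Algorithms~\ref{algo:decode_channel},~\ref{algo:decode_ori_channel},~\ref{algo:decode_swp_channel} and~\ref{algo:decode_boundary_channel}, are: (a) a single invocation of $\texttt{decode\_channel}(\lambda+1,j)$ raises $\numB[\lambda+1]$ by at most $L$ when $j\le 2^{\lambda+1}-2$ and by at most $2L$ when $j=2^{\lambda+1}-1$, since in the latter boundary case each path decodes two bits; (b) inside one call $\texttt{decode\_channel}(\lambda,i)$ there are at most three layer-$(\lambda+1)$ sub-calls, for indices among $\{2i-1,2i,2i+1\}$, and exactly which of them occur is dictated by whether $2(i-1),2i,2(i+1)$ lie in $\cI^{(2^{\lambda+1})}$, as enumerated in the five cases of Lemma~\ref{lemma:recur_ABS}; (c) when $2(i-1)\in\cI^{(2^{\lambda+1})}$ the decoded result of $V_{2i-1}^{(2^{\lambda+1}),\ABS}$ was produced by the preceding call $\texttt{decode\_swapped\_channel}(\lambda,i-1)$ and must survive into the call $\texttt{decode\_channel}(\lambda,i)$, so $\numB[\lambda+1]$ is not reset all the way to $0$ between these two calls but carries over at most $L$ slots. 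The separation condition~\eqref{eq:separated_even} is what keeps this carry-over harmless: it guarantees that no two consecutive indices are both swapped, so the carry-over cannot chain or compound.

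Assembling the worst admissible pattern — the swapped branch occurring near the boundary index $2^{\lambda+1}-1$ together with the carried-over slots and the backup arrays $\barR,\tH,\barH$ that coexist during the priority-queue pruning — then yields $\numB[\lambda+1]\le 6L$, which I would record in a case table analogous to TABLE~\ref{tb:stdb_space}, extended by the rows corresponding to $\texttt{decode\_swapped\_channel}$ (start/end bounds on $\numB[\lambda+1]$ for $j=2i-1$, $j=2i$, $j=2i+1$, in both the interior and the boundary subcase). With $\numD[\lambda]\le L$ and $\numB[\lambda]\le 6L$, the space estimate is then immediate: $\tD$ has at most $L\sum_{\lambda=1}^{m}2^{m-\lambda}=O(Ln)$ entries, $\tB$ has at most $6L\sum_{\lambda=1}^{m}2^{m-\lambda}=O(Ln)$ entries, $\numD$ and $\numB$ have $O(\log n)$ entries, $\score$ and $\texttt{PriQue}$ have $O(L)$ entries, and the pointer arrays $\tP,\barP,\tR,\barR,\tH,\barH$ have $O(L\log n)$ entries; adding these up gives $O(Ln)$.

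The main obstacle will be step (c): precisely identifying which decoded quantities must persist from one layer-$\lambda$ call to the next, and then verifying — using~\eqref{eq:separated_even} — that the worst simultaneous coexistence of ``in use'', ``carried over'' and ``backup'' slots at layer $\lambda+1$ is bounded by $6L$ rather than something larger. Everything else, including the $\numD$ bound and the final summation, is a routine adaptation of the proofs of Lemma~\ref{lemma:st_space} and Lemma~\ref{lemma:stdb_space}.
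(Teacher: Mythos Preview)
Your approach is essentially the paper's: the $\numD$ bound is carried over verbatim, and the $\numB$ bound is obtained by a case table tracking $\numB[\lambda+1]$ at the start and end of each sub-call $\texttt{decode\_channel}(\lambda+1,j)$, split according to whether $2(i-1)$, $2i$ lie in $\cI^{(2^{\lambda+1})}$, with the worst case $6L$ arising precisely when $2(i-1)\in\cI^{(2^{\lambda+1})}$ and $i=2^\lambda-1$ (three allocations carried over from the preceding $\texttt{decode\_swapped\_channel}(\lambda,i-1)$, one more from $j=2i$, and two from the boundary call $j=2i+1$).

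One inaccuracy to correct: the backup pointer arrays $\barR,\tH,\barH$ do \emph{not} contribute to $\numB[\lambda+1]$; they are just pointers into already-allocated slots of $\tB$. The quantity $\numB[\lambda+1]$ grows only through calls to $\texttt{allocate\_bit}(\lambda+1,\cdot)$, and the reason it can reach $6L$ is purely that $\texttt{decode\_swapped\_channel}(\lambda,i-1)$ (in its interior branch $i-1\le n_c-2$) does not reset $\numB[\lambda+1]$ before returning, so its three layer-$(\lambda+1)$ allocations survive into the next call $\texttt{decode\_original\_channel}(\lambda,i)$, which then skips its own $\oria$ sub-call and adds up to three more. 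Your step~(c) already says this; just drop the reference to the backup arrays when you write it up.
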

\begin{proof}
	In TABLE~\ref{tb:abs_space}, we use the method in the proof of Lemma~\ref{lemma:stdb_space} to obtain the upper bound of $\numB[\lambda+1]$ at the starting point and the end of the function $\texttt{decode\_channel}(\lambda+1, j)$. The upper bounds in TABLE~\ref{tb:abs_space} immediately imply $\numB[\lambda]\le 6L$. The proof of $\numD[\lambda]\le L$ is the same as Lemma~\ref{lemma:st_space}, and we do not repeat it.

	The proof of the space complexity is the same as Lemma~\ref{lemma:st_space}.
\end{proof}

\begin{table}[H]
	\centering
	\begin{tabular}{|c|r|l|c|c|}
		\hline
		\multicolumn{3}{|c|}{cases}                                                       & start                                     & end                                 \\
		\hline
		\multirow{4}*{$2i\in\cI^{(2^{\lambda+1})}$}                                       & \multirow{2}*{$1\le i\le 2^{\lambda}-1$}  & $j = 2i-1$                & 0  & L  \\
		\cline{3-5}
		~                                                                                 & ~                                         & $j = 2i$                  & L  & 2L \\
		\cline{2-5}
		~                                                                                 & $1\le i< 2^{\lambda}-1$                   & \multirow{2}*{$j = 2i+1$} & 2L & 3L \\
		\cline{2-2}\cline{4-5}
		~                                                                                 & $i = 2^{\lambda}-1$                       & ~                         & 2L & 4L \\
		\hline
		\multirow{2}*{$2(i-1)\in\cI^{(2^{\lambda+1})}$}                                   & $1 \le i\le 2^{\lambda}-1$                & $j = 2i$                  & 3L & 4L \\
		\cline{2-5}
		~                                                                                 & $i = 2^{\lambda}-1$                       & $j = 2i+1$                & 4L & 6L \\
		\hline
		\multirow{2}*{$2(i-1)\notin\cI^{(2^{\lambda+1})}, 2i\notin\cI^{(2^{\lambda+1})}$} & \multirow{2}*{$1 \le i\le 2^{\lambda}-1$} & $j = 2i-1$                & 0  & L  \\
		\cline{3-5}
		~                                                                                 & ~                                         & $j = 2i$                  & L  & 2L \\
		\cline{2-5}
		~                                                                                 & $i=2^{\lambda}-1$                         & $j = 2i+1$                & 2L & 4L \\
		\hline
	\end{tabular}
	\caption{The upper bound of $\numB[\lambda+1]$ at the starting point and the end of the function $\texttt{decode\_channel}(\lambda+1, j)$}
	\label{tb:abs_space}
\end{table}

\begin{proposition}
	The decoding time complexity of ABS polar codes is $O(Ln\log(n))$.
\end{proposition}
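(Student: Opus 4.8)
The plan is to mirror the layer-by-layer running-time accounting used for the standard polar SCL decoder based on the DB polar transform (the complexity analysis in Section~\ref{sect:ST_decoder_DB}), charging the total cost to the recursive invocations of \texttt{decode\_channel} in Algorithm~\ref{algo:decode_channel}. For each $1\le\lambda\le m$ I will bound (a) the number of invocations \texttt{decode\_channel}$(\lambda,i)$ that occur during one run of \texttt{Decode} in Algorithm~\ref{algo:ABS_Decoding}, and (b) the non-recursive work performed inside a single such invocation; the products, summed over $\lambda$, together with the cost of the top-level routine, will give the bound.

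For (a) I would prove by downward induction on $\lambda$ that the calls at layer $\lambda$ are exactly $\{\texttt{decode\_channel}(\lambda,i):1\le i\le 2^\lambda-1\}$, each occurring once. The base case is the single call \texttt{decode\_channel}$(1,1)$ in Line~8 of Algorithm~\ref{algo:ABS_Decoding}. For the inductive step, a layer-$\lambda$ call routes either to \texttt{decode\_swapped\_channel}$(\lambda,i)$ when $2i\in\cI^{(2^{\lambda+1})}$ or to \texttt{decode\_original\_channel}$(\lambda,i)$ when $2i\notin\cI^{(2^{\lambda+1})}$, and by Lemma~\ref{lemma:recur_ABS} --- whose hypotheses hold precisely because $\cI^{(2^{\lambda+1})}$ consists of even numbers satisfying \eqref{eq:separated_even} --- it issues recursive calls only for the children $V_j^{(2^{\lambda+1}),\ABS}$ it is responsible for under the relevant Case i)--v), together with the extra child $j=2i+1$ when $i=2^\lambda-1$. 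A short bookkeeping argument, the one already tabulated in TABLE~\ref{tb:abs_space}, shows that as $i$ ranges over $1,\dots,2^\lambda-1$ these children cover each index $j\in\{1,\dots,2^{\lambda+1}-1\}$ exactly once: every even $j=2i$ is produced by parent $i$, every odd $j=2i-1$ is produced by parent $i-1$ if $2(i-1)\in\cI^{(2^{\lambda+1})}$ and by parent $i$ otherwise, and $j=2^{\lambda+1}-1$ is produced by parent $2^\lambda-1$. Hence there are exactly $2^\lambda-1$ invocations at layer $\lambda$, and the total number of \texttt{decode\_channel} calls is $\sum_{\lambda=1}^m(2^\lambda-1)=O(n)$. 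For (b), an invocation \texttt{decode\_channel}$(\lambda,i)$ with $\lambda<m$ performs at most three transform computations chosen among \texttt{calculate\_$\oria$\_transform}, \texttt{calculate\_$\orib$\_transform}, \texttt{calculate\_$\oric$\_transform} (Algorithms~\ref{algo:calcu_oria_proba}--\ref{algo:calcu_oric_proba}) and their SDB analogues \texttt{calculate\_$\swpa$\_transform}, \texttt{calculate\_$\swpb$\_transform}, \texttt{calculate\_$\swpc$\_transform}; each of these loops over the $L_c\le L$ active decoding paths and over $\bar n_c=2^{m-\lambda-1}$ positions, performing $O(1)$ arithmetic per $(\ell,\beta)$ pair since the inner sums run over a fixed-size index set, and the remaining pointer-copy and binary-XOR loops inside \texttt{decode\_original\_channel}/\texttt{decode\_swapped\_channel} have the same size, so the local work is $O(L\,2^{m-\lambda})$. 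Multiplying by the $2^\lambda-1$ invocations at layer $\lambda$ gives $O(Ln)$ per layer, and summing over $\lambda=1,\dots,m-1$ yields $O(Ln\log n)$.

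It remains to treat the boundary layer $\lambda=m$ and the top-level routine. At layer $m$ there are $n-1$ calls \texttt{decode\_channel}$(m,i)$, each routing to \texttt{decode\_boundary\_channel}$(i)$ in Algorithm~\ref{algo:decode_boundary_channel}; such a call pushes and pops $O(L)$ elements of \texttt{PriQue} at cost $O(\log L)$ each and, for each of the $L_c\le L$ surviving list elements, copies $O(m)$ pointers in the loop over $\lambda\in\{1,\dots,m-1\}$, for a total of $O(L\log L+Lm)=O(L\log n)$ per call (using $\log L=O(\log n)$ in the relevant regime $L=O(n)$); over all $i$ this is $O(Ln\log n)$. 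Algorithm~\ref{algo:ABS_Decoding} itself spends $O(n)$ on initialization in Lines~1--7, $O(L)$ on the maximum search in Lines~9--14, and $O(n)$ on the output in Lines~15--17, and by Lemma~\ref{lemma:abs_space} the arrays $\tD$ and $\tB$ are never overrun so no cost is hidden in reallocation. Summing all contributions yields decoding time $O(Ln\log n)$, as claimed. The step needing the most care is (a): showing that interleaving the ``swapped'' and ``original'' branches does not make the recursion branch exponentially, i.e., that each adjacent-bits-channel is still visited exactly once; this is exactly where the separation condition \eqref{eq:separated_even}, via Lemma~\ref{lemma:recur_ABS}, is indispensable, since without it the recursive relations among the $V_i^{(n),\ABS}$ --- and hence a well-defined recursive call structure --- would not exist. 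These are the same structural facts underlying the proofs of Lemma~\ref{lm:swp6} and Lemma~\ref{lemma:abs_space}.
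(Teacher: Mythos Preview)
Your argument is correct and complete. In fact the paper does not supply a proof of this proposition at all: it is stated immediately after Lemma~\ref{lemma:abs_space} and Table~\ref{tb:abs_space} without any accompanying argument, just as the analogous Propositions~2 and~3 for the standard-polar SCL decoders are stated without proof. Your layer-by-layer accounting---showing that each $V_i^{(2^\lambda),\ABS}$ is visited exactly once (which is precisely the content of Fig.~\ref{fig:example_dec} and the case analysis behind Table~\ref{tb:abs_space}), bounding the per-call work by $O(L\,2^{m-\lambda})$, and handling the boundary layer separately---is the natural elaboration the paper leaves implicit, and it lines up with the structural facts already established in Lemmas~\ref{lm:swp6} and~\ref{lemma:abs_space}.
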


\begin{algorithm}[ht]
	\DontPrintSemicolon
	\caption{\texttt{decode\_original\_channel$(\lambda, i)$}}
	\label{algo:decode_ori_channel}
	\KwIn{$\lambda \in \{1,2, \dots, m\}$ and index $i$ satisfying $1\le i \le 2^{\lambda}-1$ and $2i\notin \cI^{(2^{\lambda + 1})}$}

	$n_c\gets 2^{\lambda}$

	\If(\Comment{$V_{2i-1}^{(2n_c)} = (V_{i}^{(n_c)})^\oria$}){\em $2(i-1)\notin \cI^{(2^{\lambda + 1})}$}{

		\texttt{calculate\_$\oria$\_transform}$(\lambda+1)$

		\texttt{decode\_channel}$(\lambda+1, 2i-1)$

		\For{$\ell\in\{1, 2, \dots, L_c\}$}{
			$\tH[\ell, \lambda + 1]\gets \tR[\ell, \lambda+1]$
		}
	}

	\Comment{$V_{2i}^{(2n_c)} = (V_{i}^{(n_c)})^\orib$}

	\texttt{calculate\_$\orib$\_transform}$(\lambda+1)$

	\texttt{decode\_channel}$(\lambda+1, 2i)$

	\For{$\ell\in\{1, 2, \dots, L_c\}$}{
		$\tR[\ell, \lambda]\gets \tR[\ell, \lambda+1]$
	}
	\vspace*{.1in}

	\eIf{$i \le n_c-2$}{
		\Comment{Only decode one bit ${X}_{i,\beta}^{(\lambda)}$ for each $\beta$}

		\For{$\ell\in\{1, 2, \dots, L_c\}$}{
			$\temppointer \gets \tR[\ell, \lambda]$

			$\tR[\ell, \lambda] \gets \texttt{allocate\_bit}(\lambda, 1)$

			\For{\em $\beta \in \{1,2,\dots,n/(2n_c)\}$}{
				$\beta' \gets \beta + n/(2n_c)$

				$\tR[\ell, \lambda][\beta] \gets \tH[\ell, \lambda+1][\beta] + \temppointer[\beta]$

				$\tR[\ell, \lambda][\beta'] \gets \temppointer[\beta]$

			}
		}
	}{
		\Comment{Decode two bits ${X}_{n_c-1,\beta}^{(\lambda)}, {X}_{n_c,\beta}^{(\lambda)}$ for each $\beta$}

		\Comment{$V_{2i+1}^{(2n_c)} = (V_{i}^{(n_c)})^\oric$}

		\texttt{calculate\_$\oric$\_transform}$(\lambda+1)$

		\texttt{decode\_channel}$(\lambda+1, 2i+1)$

		\For{$\ell\in\{1, 2, \dots, L_c\}$}{
			$\temppointer \gets \tR[\ell, \lambda]$

			$\tR[\ell, \lambda] \gets \texttt{allocate\_bit}(\lambda, 2)$

			\For{\em $\beta \in \{1,2,\dots,n/(2n_c)\}$}{
				$\beta' \gets \beta + n/(2n_c)$

				$\tR[\ell, \lambda][\beta] \gets \tH[\ell, \lambda+1][\beta] + \temppointer[\beta]$

				$\tR[\ell, \lambda][\beta'] \gets \temppointer[\beta]$

				$\tR[\ell, \lambda][\beta + n/(n_c)] \gets$\\
				$\qquad\tR[\ell, \lambda+1][\beta] + \tR[\ell, \lambda+1][\beta']$

				$\tR[\ell, \lambda][\beta' + n/(n_c)] \gets \tR[\ell, \lambda+1][\beta']$

			}
		}
	}

	$\numB[\lambda+1]\gets 0$

	\Return

\end{algorithm}

\begin{algorithm}[ht]
	\DontPrintSemicolon
	\caption{\texttt{decode\_swapped\_channel$(\lambda, i)$}}
	\label{algo:decode_swp_channel}
	\KwIn{$\lambda \in \{1,2, \dots, m\}$ and index $i$ satisfying $1\le i \le 2^{m-\lambda}-1$ and $2i\in \cI^{(2^{\lambda + 1})}$.}

	$n_c\gets 2^{\lambda}$
	\Comment{$V_{2i-1}^{(2n_c)} = (V_{i}^{(n_c)})^\swpa$}

	\texttt{calculate\_$\swpa$\_transform}$(\lambda+1)$

	\texttt{decode\_channel}$(\lambda+1, 2i-1)$

	\For{$\ell\in\{1, 2, \dots, L_c\}$}{
		$\tH[\ell, \lambda + 1]\gets \tR[\ell, \lambda+1]$
	}
	\vspace*{.1in}

	\Comment{$V_{2i}^{(2n_c)} = (V_{i}^{(n_c)})^\swpb$}

	\texttt{calculate\_$\swpb$\_transform}$(\lambda+1)$

	\texttt{decode\_channel}$(\lambda+1, 2i)$

	\For{$\ell\in\{1, 2, \dots, L_c\}$}{
		$\tR[\ell, \lambda]\gets \tR[\ell, \lambda+1]$
	}
	\vspace*{.1in}

	\Comment{$V_{2i+1}^{(2n_c)} = (V_{i}^{(n_c)})^\swpc$}

	\texttt{calculate\_$\swpc$\_transform}$(\lambda+1)$

	\texttt{decode\_channel}$(\lambda+1, 2i+1)$

	\eIf{$i\le n_c-2$}{
		\Comment{Only decode one bit ${X}_{i,\beta}^{(\lambda)}$ for each $\beta$}

		\For{$\ell\in\{1, 2, \dots, L_c\}$}{
			$\temppointer \gets \tR[\ell, \lambda]$

			$\tR[\ell, \lambda] \gets \texttt{allocate\_bit}(\lambda, 1)$

			\For{\em $\beta \in \{1,2,\dots,n/(2n_c)\}$}{
				$\beta' \gets \beta + n/(2n_c)$

				$\tR[\ell, \lambda][\beta] \gets \tH[\ell, \lambda+1][\beta] + \tR[\ell, \lambda+1][\beta]$

				$\tR[\ell, \lambda][\beta'] \gets \tR[\ell, \lambda+1][\beta]$
			}

			$\tH[\ell, \lambda+1] \gets \temppointer$
		}

	}{
		\Comment{Decode two bits ${X}_{n_c-1, \beta}^{(\lambda)}, {X}_{n_c,\beta}^{(\lambda)}$ for each $\beta$}

		\For{$\ell\in\{1, 2, \dots, L_c\}$}{
			$\temppointer \gets \tR[\ell, \lambda]$

			$\tR[\ell, \lambda] \gets \texttt{allocate\_bit}(\lambda, 2)$

			\For{\em $\beta \in \{1,2,\dots,n/(2n_c)\}$}{
				$\beta' \gets \beta + n/(2n_c)$

				$\tR[\ell, \lambda][\beta] \gets \tH[\ell, \lambda+1][\beta] + \tR[\ell,\lambda+1][\beta]$

				$\tR[\ell, \lambda][\beta'] \gets \tR[\ell,\lambda+1][\beta]$

				$\tR[\ell, \lambda][\beta + n/(n_c)] \gets$\\
				$\qquad\temppointer[\beta] + \tR[\ell, \lambda+1][\beta']$

				$\tR[\ell, \lambda][\beta' + n/(n_c)] \gets \tR[\ell, \lambda+1][\beta']$

			}
		}

		$\numB[\lambda+1]\gets 0$
	}

	\Return
\end{algorithm}

\begin{algorithm}[ht]
	\DontPrintSemicolon
	\caption{\texttt{calculate\_$\swpa$\_transform}$(\lambda)$}
	\label{algo:calcu_swpa_proba}
	\KwIn{layer $2\le \lambda \le m$}
	\KwOut{Update the entries pointed by $\tP[\ell, \lambda]$, $1\le \ell\le L_c $ }

	$\bar{n}_c \gets 2^{m-\lambda}$

	\For{$\ell\in\{1, 2, \dots, L_c\}$}{

		$\tP[\ell,\lambda]\gets \texttt{allocate\_prob}(\lambda)$

		\For{\em $\beta \in \{1,2,\dots,\bar{n}_c\}, r_1,r_2\in\{0,1\}$}{
			$\beta' \gets \beta + \bar{n}_c$

			$\tP[\ell, \lambda][\beta,r_1, r_2] \gets$ $\frac14\sum_{r_3, r_4\in\{0,1\}}$
			$\tP[\ell, \lambda-1][\beta, r_1+r_3, r_2+r_4]\tP[\ell, \lambda-1][\beta', r_3, r_4]$
		}
	}

	\Return
\end{algorithm}

\begin{algorithm}[ht]
	\DontPrintSemicolon
	\caption{\texttt{calculate\_$\swpb$\_transform}$(\lambda)$}
	\label{algo:calcu_swpb_proba}
	\KwIn{layer $2\le \lambda \le m$}
	\KwOut{Update the entries pointed by $\tP[\ell, \lambda]$, $1\le \ell\le L_c $ }

	$\bar{n}_c \gets 2^{m-\lambda}$


	\For{$\ell\in\{1, 2, \dots, L_c\}$}{

		$\tP[\ell,\lambda]\gets \texttt{allocate\_prob}(\lambda)$

		\For{\em $\beta \in \{1,2,\dots,\bar{n}_c\}, r_2,r_3\in\{0,1\}$}{
			$r_1\gets \tH[\ell, \lambda][\beta]$

			$\beta' \gets \beta + \bar{n}_c$

			$\tP[\ell, \lambda][\beta,r_2,r_3] \gets$ $\frac14\sum_{r_4\in\{0,1\}}$
			$\tP[\ell, \lambda-1][\beta, r_1+r_3, r_2+r_4]\tP[\ell, \lambda-1][\beta', r_3, r_4]$
		}
	}

	\Return

\end{algorithm}

\begin{algorithm}[ht]
	\DontPrintSemicolon
	\caption{\texttt{calculate\_$\swpc$\_transform}$(\lambda)$}
	\label{algo:calcu_swpc_proba}
	\KwIn{layer $2\le \lambda \le m$}
	\KwOut{Update the entries pointed by $\tP[\ell, \lambda]$, $1\le \ell\le L_c $ }

	$\bar{n}_c \gets 2^{m-\lambda}$


	\For{$\ell\in\{1, 2, \dots, L_c\}$}{

		$\tP[\ell,\lambda]\gets \texttt{allocate\_prob}(\lambda)$

		\For{\em $\beta \in \{1,2,\dots,\bar{n}_c\}, r_3,r_4\in\{0,1\}$}{
			$r_1\gets \tH[\ell, \lambda][\beta],\quad r_2\gets \tR[\ell, \lambda-1][\beta]$

			$\beta' \gets \beta + \bar{n}_c$

			$\tP[\ell, \lambda][\beta,r_3,r_4] \gets$\\
			$\frac14\tP[\ell, \lambda-1][\beta, r_1+r_3, r_2+r_4]\tP[\ell, \lambda-1][\beta', r_3, r_4]$
		}
	}

	\Return
\end{algorithm}

\clearpage

\section{Simulation results} \label{sect:simu}

\subsection{Scaling exponent over binary erasure channels}  \label{sect:scaling_BEC}

In this subsection, we empirically calculate the scaling exponents of ABS polar codes and standard polar codes over a BEC with erasure probability $0.5$.

When the original channel $W$ is a general BMS channel, we can only obtain an approximation of the transition probabilities of the adjacent-bits-channels through quantization, as discussed in Section~\ref{sect:quantization}. However, when the original channel $W$ is a BEC, we are able to calculate the exact parameters of the adjacent-bits-channels. To that end, we introduce a class of channels called double-bits-erasure-channels (DBEC). The input alphabet of a DBEC is $\{0,1\}^2$, and the output alphabet is $\{0,1,?\}^3$. For a given input $(u_1,u_2)\in\{0,1\}^2$, the output of the DBEC can only take the following five values
\begin{itemize}
	\item $(u_1,u_1+u_2,u_2)$ with probability $p$,
	\item $(u_1,?,?)$ with probability $q$,
	\item $(?,u_1+u_2,?)$ with probability $r$,
	\item $(?,?,u_2)$ with probability $s$,
	\item $(?,?,?)$ with probability $t$.
\end{itemize}
$p$ is the probability of preserving all information in the inputs. $q,r,s$ are the probabilities of preserving one bit of information in the inputs. $t$ is the probability of erasing all the information. Such a DBEC is denoted as DBEC$(p,q,r,s,t)$, where the parameters satisfy $p+q+r+s+t=1$. Note that DBEC has been studied in the literature under other names. For example, the authors of \cite{Duursma22} call it tetrahedral erasure channel.

One can show that if the original channel $W$ is a BEC, then all the adjacent-bits-channels in the ABS polar code construction are DBEC. More precisely, using \eqref{eq:v_abs_init}, we can show that if $W$ is a BEC with erasure probability $\epsilon$, then
$$
	V_{1}^{(2),\ABS} = \text{DBEC}( (1-\epsilon)^2,0,(1-\epsilon)\epsilon,(1-\epsilon)\epsilon,\epsilon^2 ) .
	$$
	Moreover, if an adjacent-bits-channel $V=\text{DBEC}(p,q,r,s,t)$, then
	\begin{align*}
		V^{\oria}= & \text{DBEC}((p+q)^2,            0,           (p+q)(r+s+t),   (r+s+t)(p+q), (r+s+t)^2) ,          \\
		V^{\orib}= & \text{DBEC}(p^2+2rp+2sp,        2q-q^2+2pt,  2rs,            r^2+s^2,      2t(r+s)+t^2) ,        \\
		V^{\oric}= & \text{DBEC}((p+r+s)^2,          0,           (p+r+s)(q+t),   (q+t)(p+r+s), (q+t)^2) ,            \\
		V^{\swpa}= & \text{DBEC}(p^2,                q^2+2pq,     r^2+2pr,        s^2+2ps,      2t-t^2+2qr+2qs+2rs) , \\
		V^{\swpb}= & \text{DBEC}(p^2+2pr+2ps,        r^2+s^2,     2rs,            2q-q^2+2pt,   t^2+2rt+2rs) ,        \\
		V^{\swpc}= & \text{DBEC}(2p-p^2+2qr+2qs+2rs, q^2+2tq,     r^2+2rt,        s^2+2st,      t^2) .
	\end{align*}
	Combining this with Lemma~\ref{lemma:recur_ABS}, we can explicitly calculate the parameters of all the adjacent-bits-channels in the ABS polar code construction when the original channel $W$ is a BEC. After that, we use \eqref{eq:v_to_w} to calculate the erasure probabilities of each bit-channel: If $V_{i}^{(n),\ABS}=\text{DBEC}(p,q,r,s,t)$, then $W_{i}^{(n),\ABS}$ is an erasure channel with erasure probability $r+s+t$, and $W_{i+1}^{(n),\ABS}$ is an erasure channel with erasure probability $q+t$.

	Let $W$ be a BEC with erasure probability $0.5$. For $n\in\{2^6, 2^7, \dots, 2^{20}\}$, we define
	\begin{align*}
		 & f_{\polar}(n):=\frac{1}{n} |\{i:1\le i\le n,~~ 0.01\le I(W_i^{(n)})\le 0.99 \}| ,    \\
		 & f_{\ABS}(n):=\frac{1}{n} |\{i:1\le i\le n,~~ 0.01\le I(W_i^{(n),\ABS})\le 0.99 \}| .
	\end{align*}
	By definition, $f_{\polar}(n)$ is the fraction of ``unpolarized" bit-channels in the length-$n$ standard polar code constructed for the BEC $W$, and $f_{\ABS}(n)$ is the fraction of ``unpolarized" bit-channels in the length-$n$ ABS polar code constructed for the BEC $W$. A bit-channel is said to be unpolarized if its capacity is between $0.01$ and $0.99$. The values of $f_{\polar}(n)$ and $f_{\ABS}(n)$ for $n\in\{2^6, 2^7, \dots, 2^{20}\}$ are listed in TABLE~\ref{fract}.

	\begin{table}
		\centering
		\begin{tabular}{r|cc}
			\hline
			$n$     & $f_{\polar}(n)$ & $f_{\ABS}(n)$ \\
			\hline
			64      & 0.53125000      & 0.50000000    \\
			128     & 0.43750000      & 0.42187500    \\
			256     & 0.37500000      & 0.34375000    \\
			512     & 0.30078125      & 0.27343750    \\
			1024    & 0.25390625      & 0.22070312    \\
			2048    & 0.20605469      & 0.18164062    \\
			4096    & 0.17041016      & 0.15136719    \\
			8192    & 0.14208984      & 0.12329102    \\
			16384   & 0.11755371      & 0.09936523    \\
			32768   & 0.09674072      & 0.08087158    \\
			65536   & 0.07995605      & 0.06542969    \\
			131072  & 0.06613159      & 0.05333710    \\
			262144  & 0.05499268      & 0.04324722    \\
			524288  & 0.04529572      & 0.03502846    \\
			1048576 & 0.03742218      & 0.02853012    \\
			\hline
		\end{tabular}
		\caption{The fractions of ``unpolarized" bit-channels in standard polar codes and ABS polar codes constructed for a BEC with erasure probability $\epsilon = 0.5$.}
		\label{fract}
	\end{table}

	In order to estimate the scaling exponents, we approximate $f_{\polar}(n)$ as $f_{\polar}(n)\approx c_1 n^{-\gamma_1}$, and we approximate $f_{\ABS}(n)$ as $f_{\ABS}(n)\approx c_2 n^{-\gamma_2}$. By taking the logarithm on both sides of the equation and running linear regression, we obtain that $f_{\polar}(n) \approx 1.67 n^{-0.274}$ and $f_{\ABS}(n) \approx 1.76 n^{-0.297}$. Therefore, the scaling exponent for standard polar codes is $\mu_{\polar}\approx 1/0.274=3.65$, and the scaling exponent for ABS polar codes is $\mu_{\ABS}\approx 1/0.297=3.37$.

	The above empirical calculations of scaling exponents confirm that the polarization of ABS polar codes is indeed faster than standard polar codes. An interesting problem for future research is to obtain provable and tight upper bounds on the scaling exponent of ABS polar codes. Another related question is to analyze the code distance of ABS polar codes and compare it with standard polar codes.

	\subsection{Simulation results over binary-input AWGN channels}

	\begin{figure}
		\centering
		\begin{subfigure}{0.41\linewidth}
			\centering
			\includegraphics[width=\linewidth]{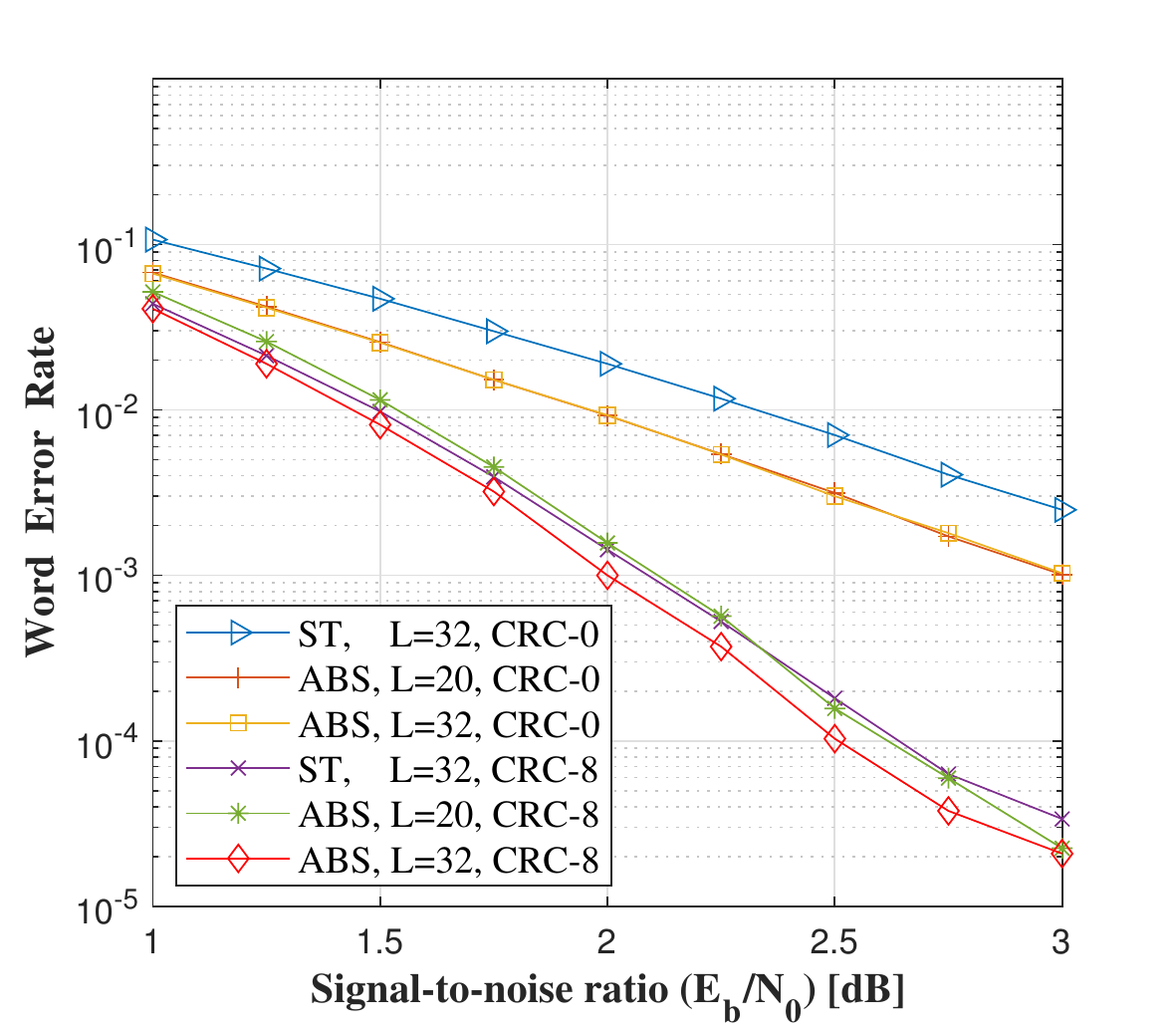}
			\caption{length 256, dimension 77}
		\end{subfigure}
		~\hspace*{0.2in}
		\begin{subfigure}{0.41\linewidth}
			\centering
			\includegraphics[width=\linewidth]{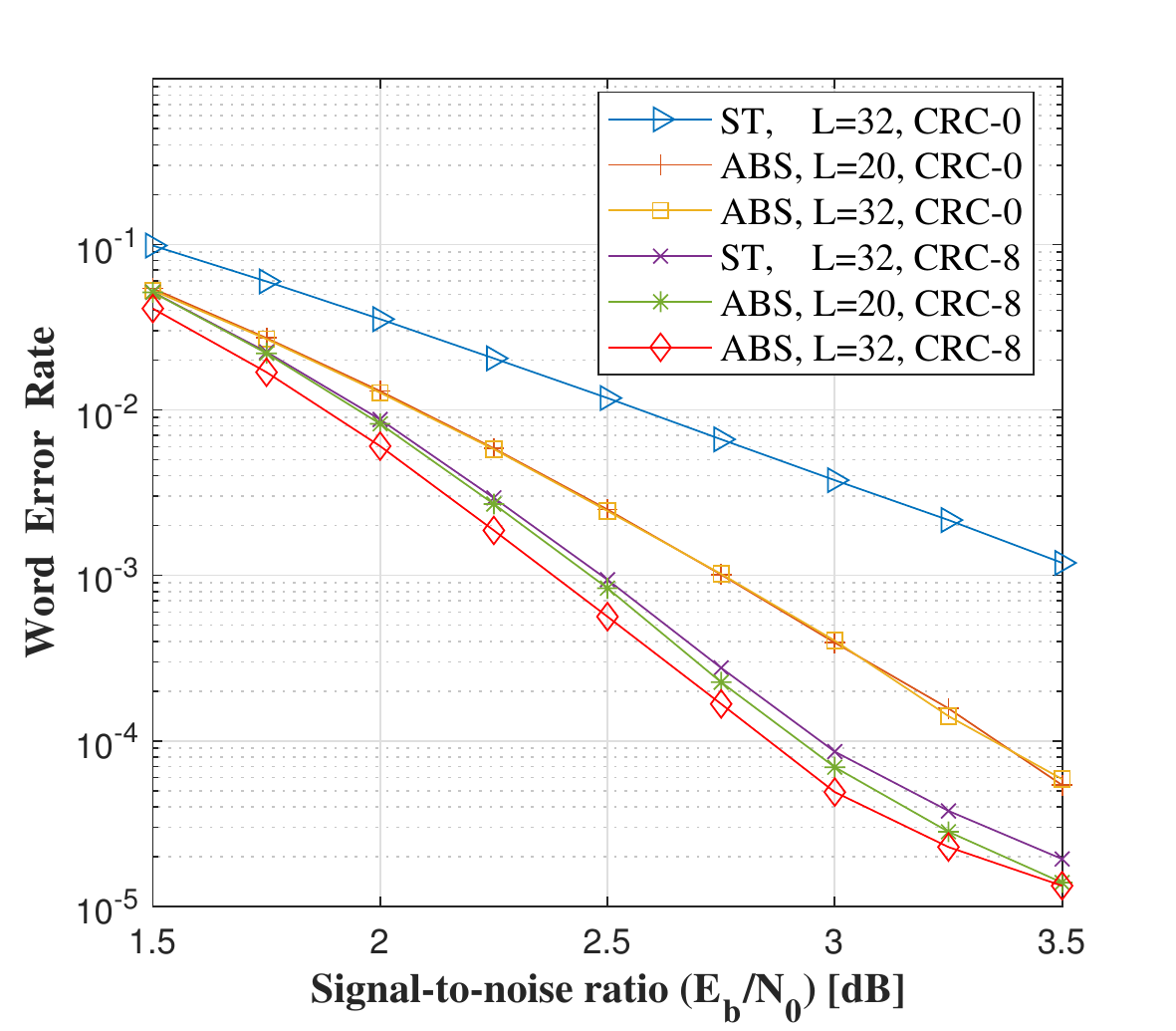}
			\caption{length 256, dimension 128}
		\end{subfigure}

		\vspace*{0.1in}

		\begin{subfigure}{0.41\linewidth}
			\centering
			\includegraphics[width=\linewidth]{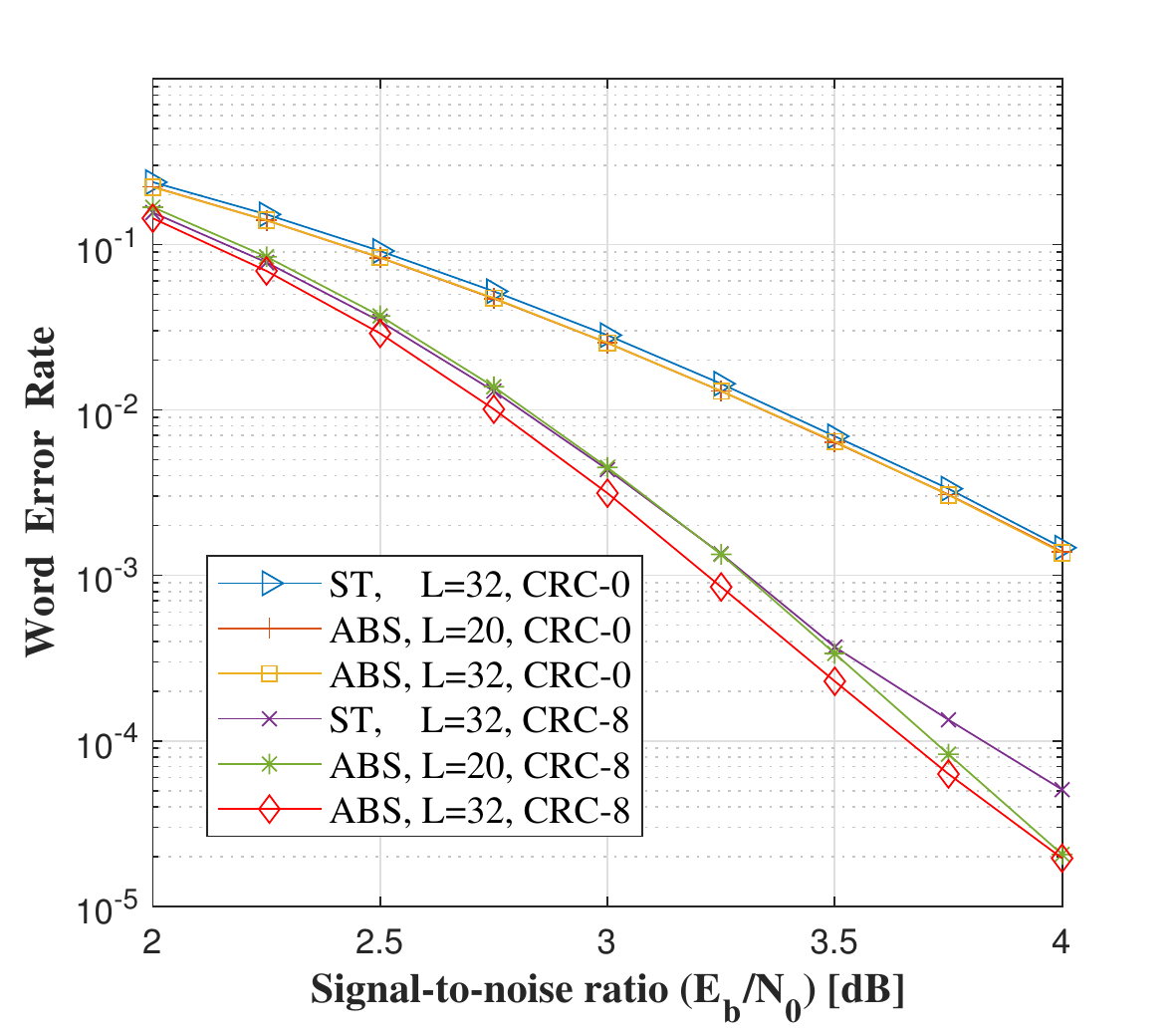}
			\caption{length 256, dimension 179}
		\end{subfigure}
		~\hspace*{0.2in}
		\begin{subfigure}{0.41\linewidth}
			\centering
			\includegraphics[width=\linewidth]{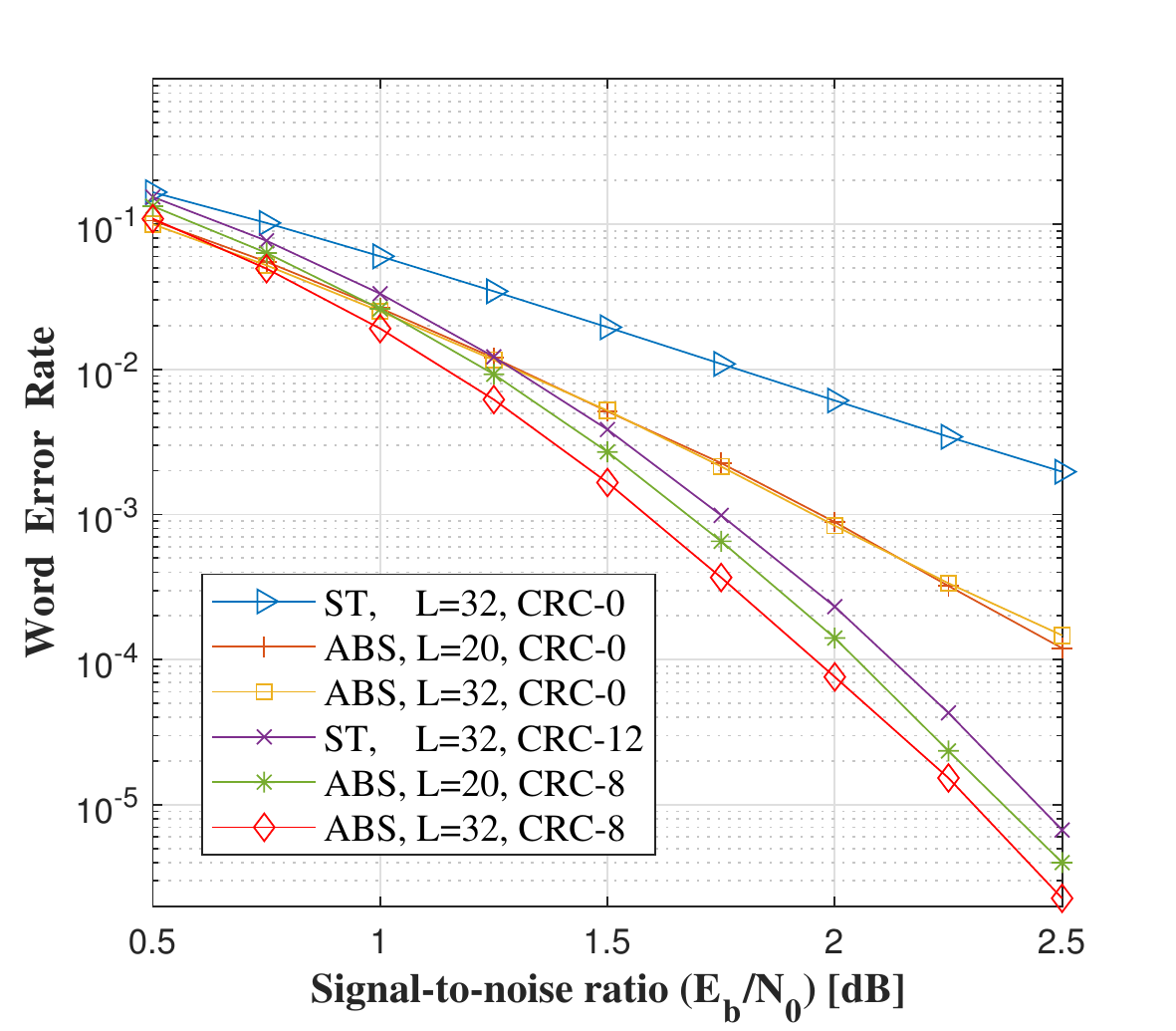}
			\caption{length 512, dimension 154}
		\end{subfigure}

		\vspace*{0.1in}

		\begin{subfigure}{0.41\linewidth}
			\centering
			\includegraphics[width=\linewidth]{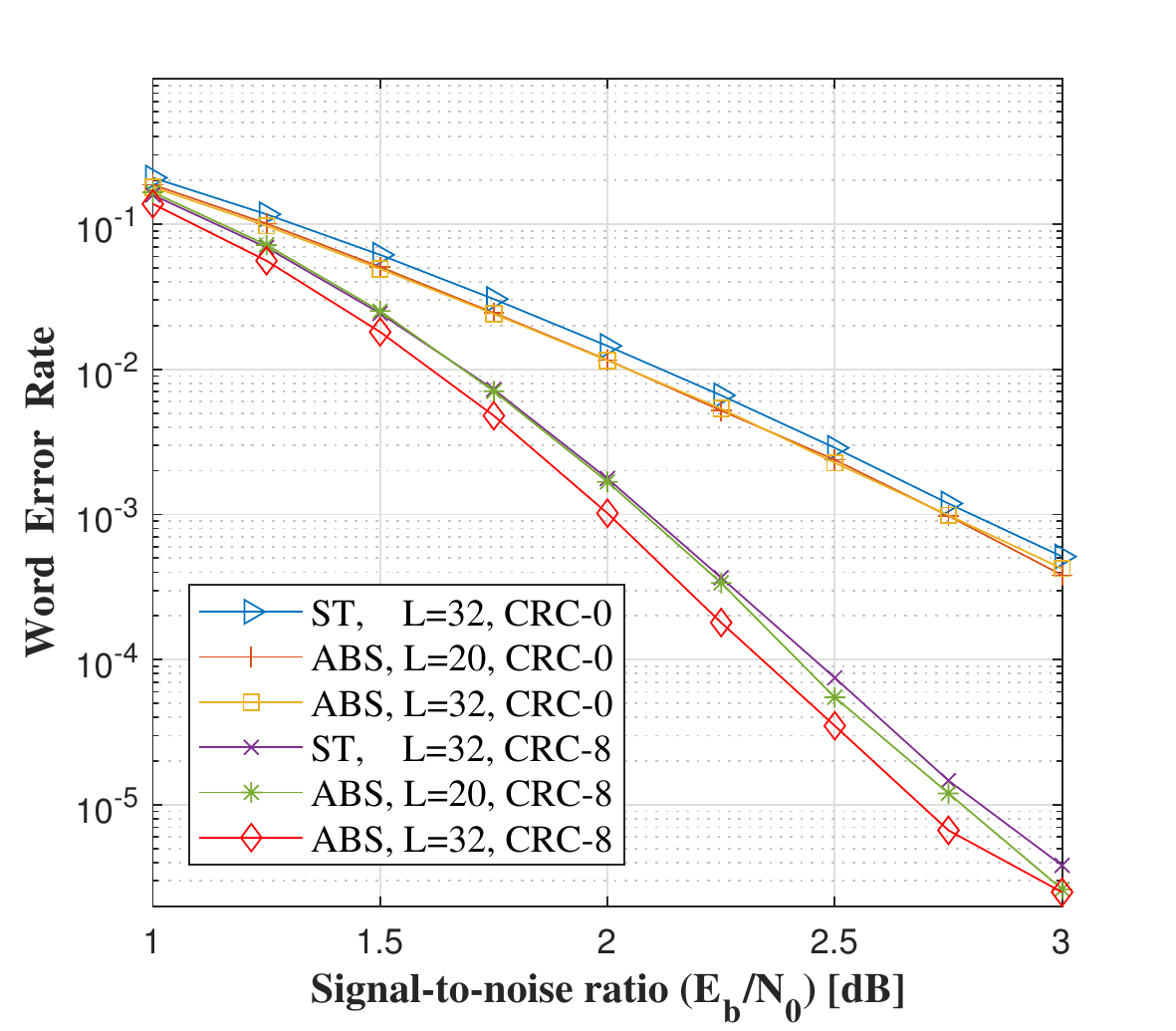}
			\caption{length 512, dimension 256}
		\end{subfigure}
		~\hspace*{0.2in}
		\begin{subfigure}{0.41\linewidth}
			\centering
			\includegraphics[width=\linewidth]{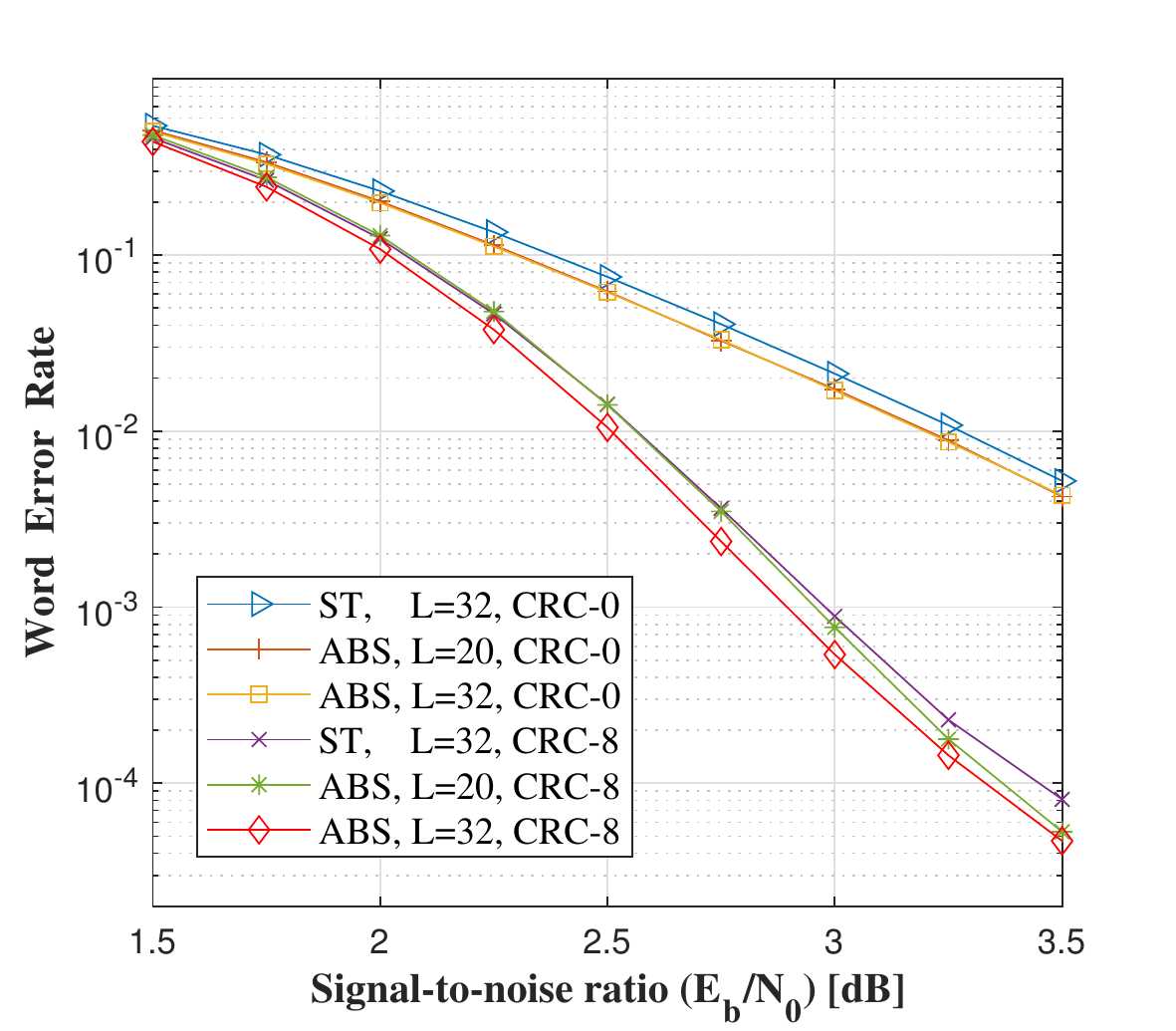}
			\caption{length 512, dimension 358}
		\end{subfigure}

		\caption{Comparison between ABS polar codes and standard polar codes over the binary-input AWGN channel.
			The legend ``ABS" refers to ABS polar codes, and ``ST" refers to standard polar codes.
			``CRC-0" means that we do not use CRC.
			The nonzero CRC length is chosen from the set $\{4,8,12,16,20,24\}$ to minimize the decoding error probability.
			The parameter $L$ is the list size.
			For standard polar codes, we always choose $L=32$.
			For ABS polar codes, we test two different list sizes $L=20$ and $L=32$.}
		\label{fig:cp1}
	\end{figure}

	\begin{figure}
		\centering
		\begin{subfigure}{0.41\linewidth}
			\centering
			\includegraphics[width=\linewidth]{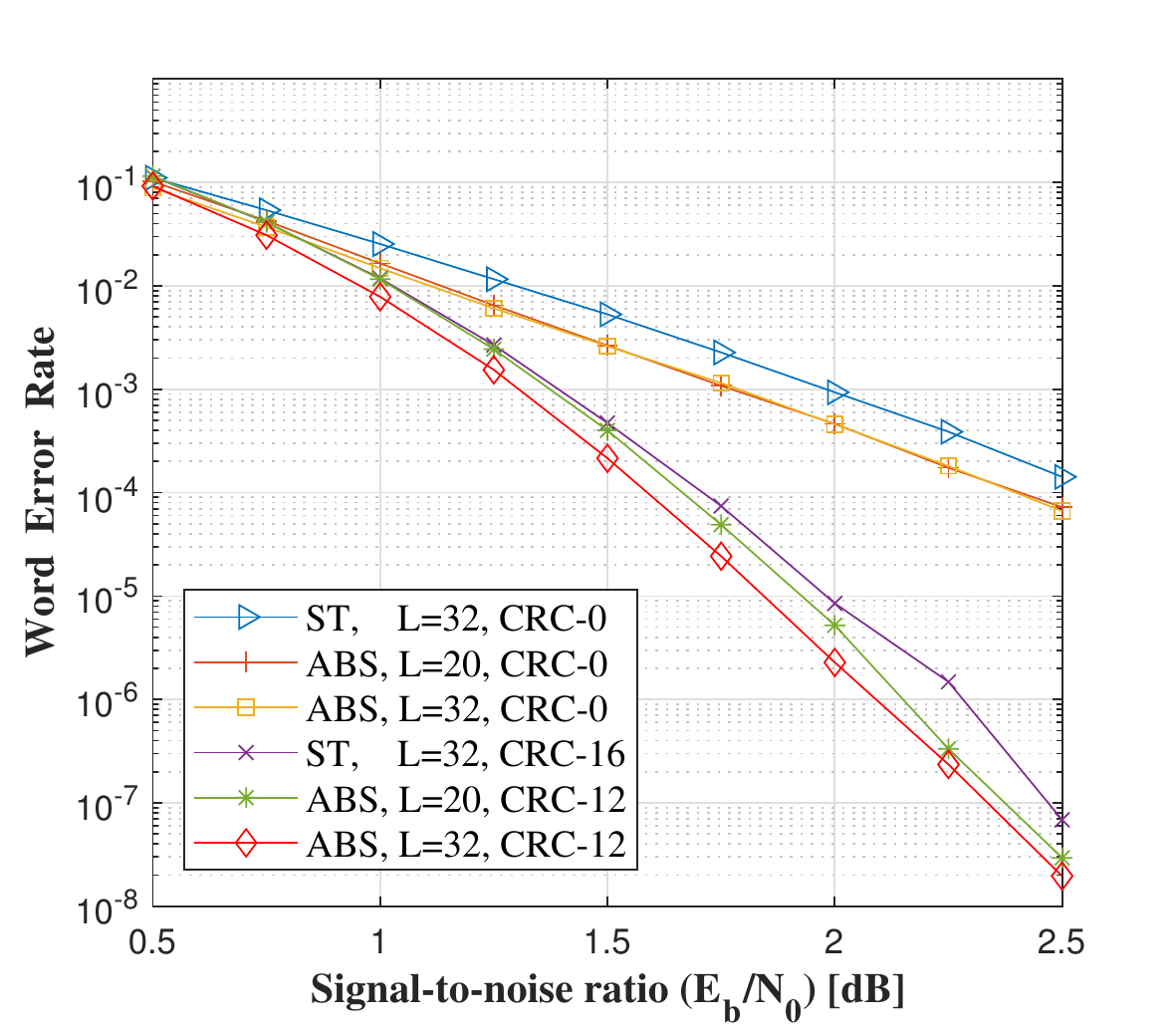}
			\caption{length 1024, dimension 307}
		\end{subfigure}
		~\hspace*{0.2in}
		\begin{subfigure}{0.41\linewidth}
			\centering
			\includegraphics[width=\linewidth]{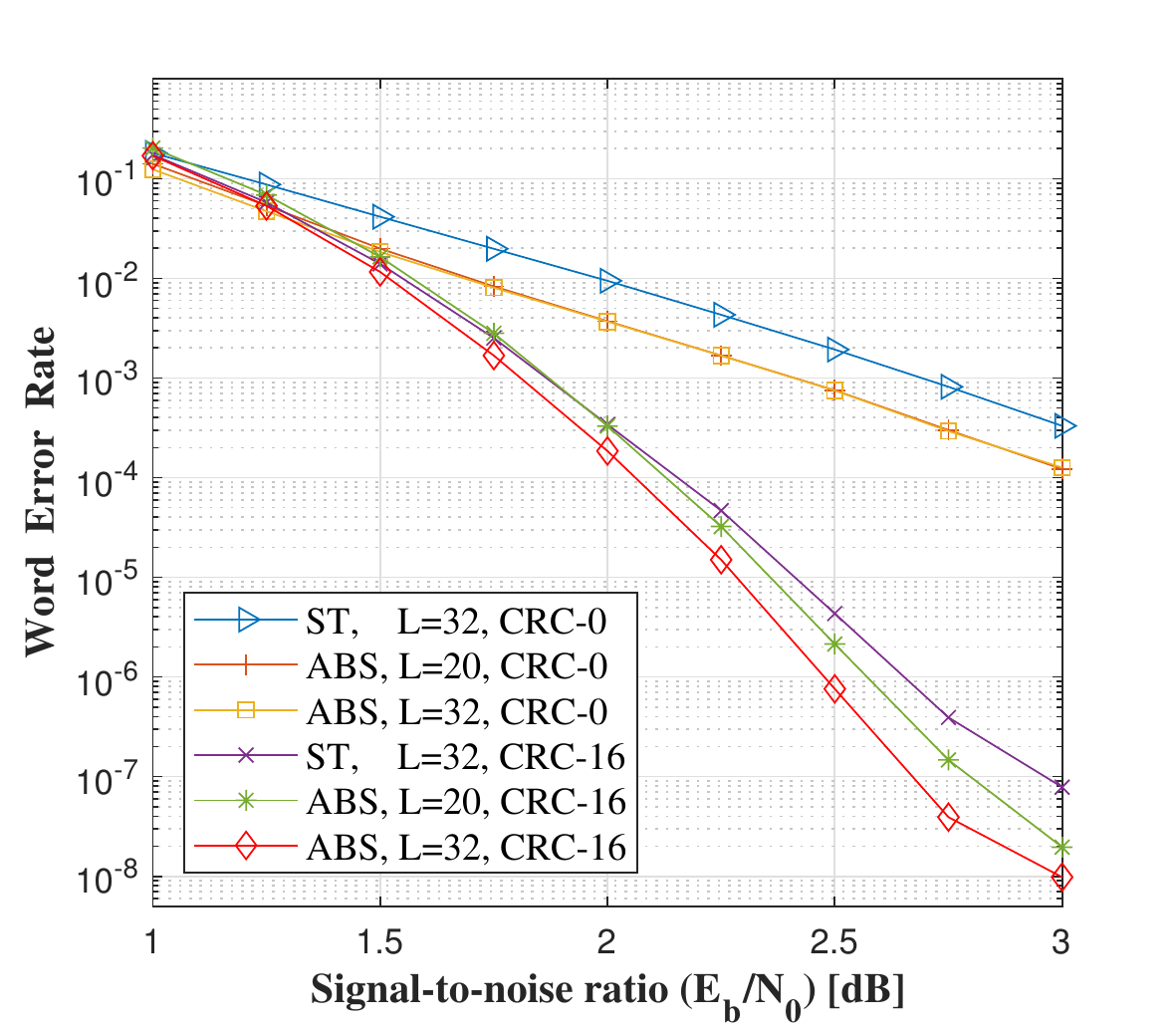}
			\caption{length 1024, dimension 512}
		\end{subfigure}

		\vspace*{0.1in}

		\begin{subfigure}{0.41\linewidth}
			\centering
			\includegraphics[width=\linewidth]{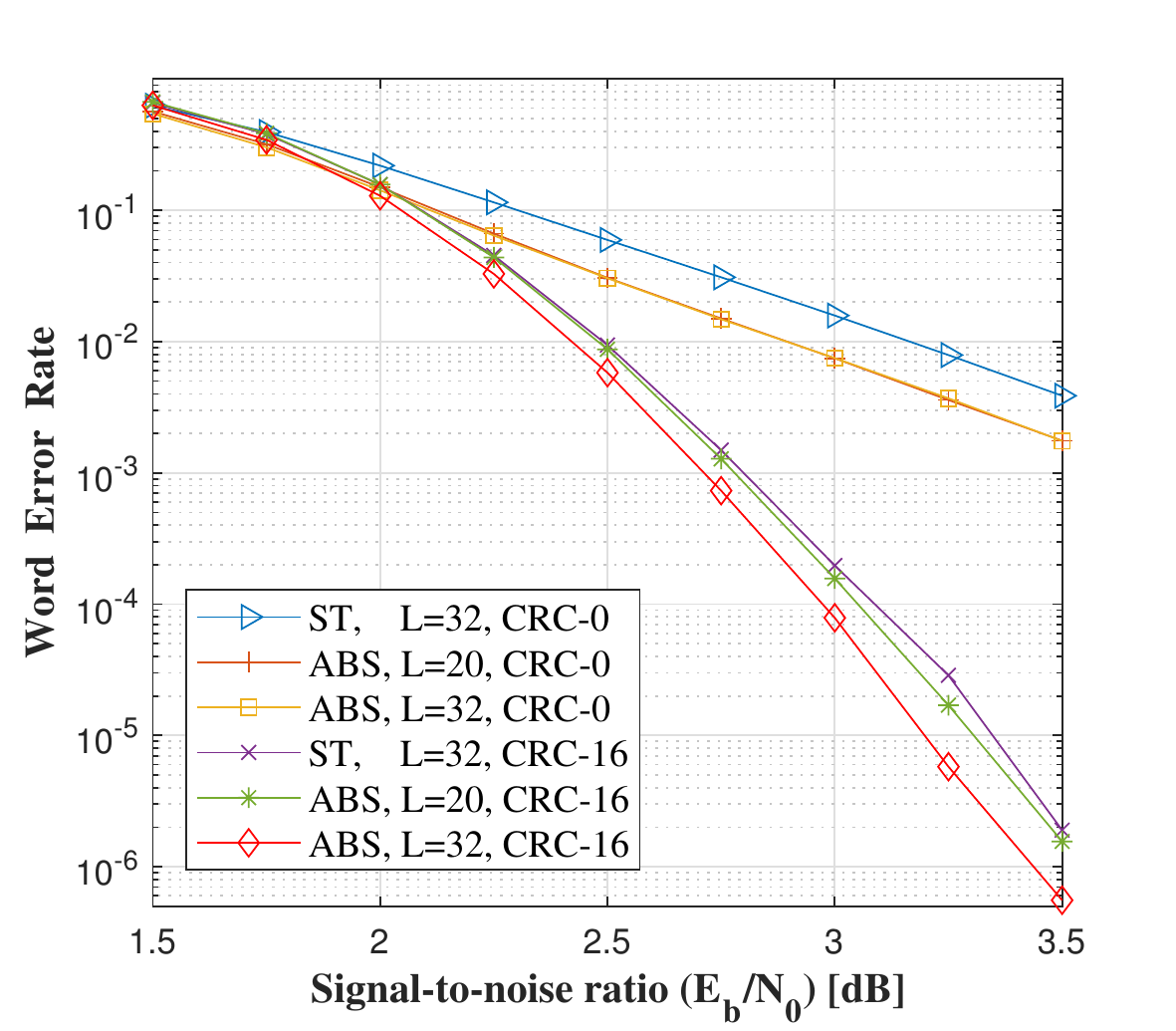}
			\caption{length 1024, dimension 717}
		\end{subfigure}
		~\hspace*{0.2in}
		\begin{subfigure}{0.41\linewidth}
			\centering
			\includegraphics[width=\linewidth]{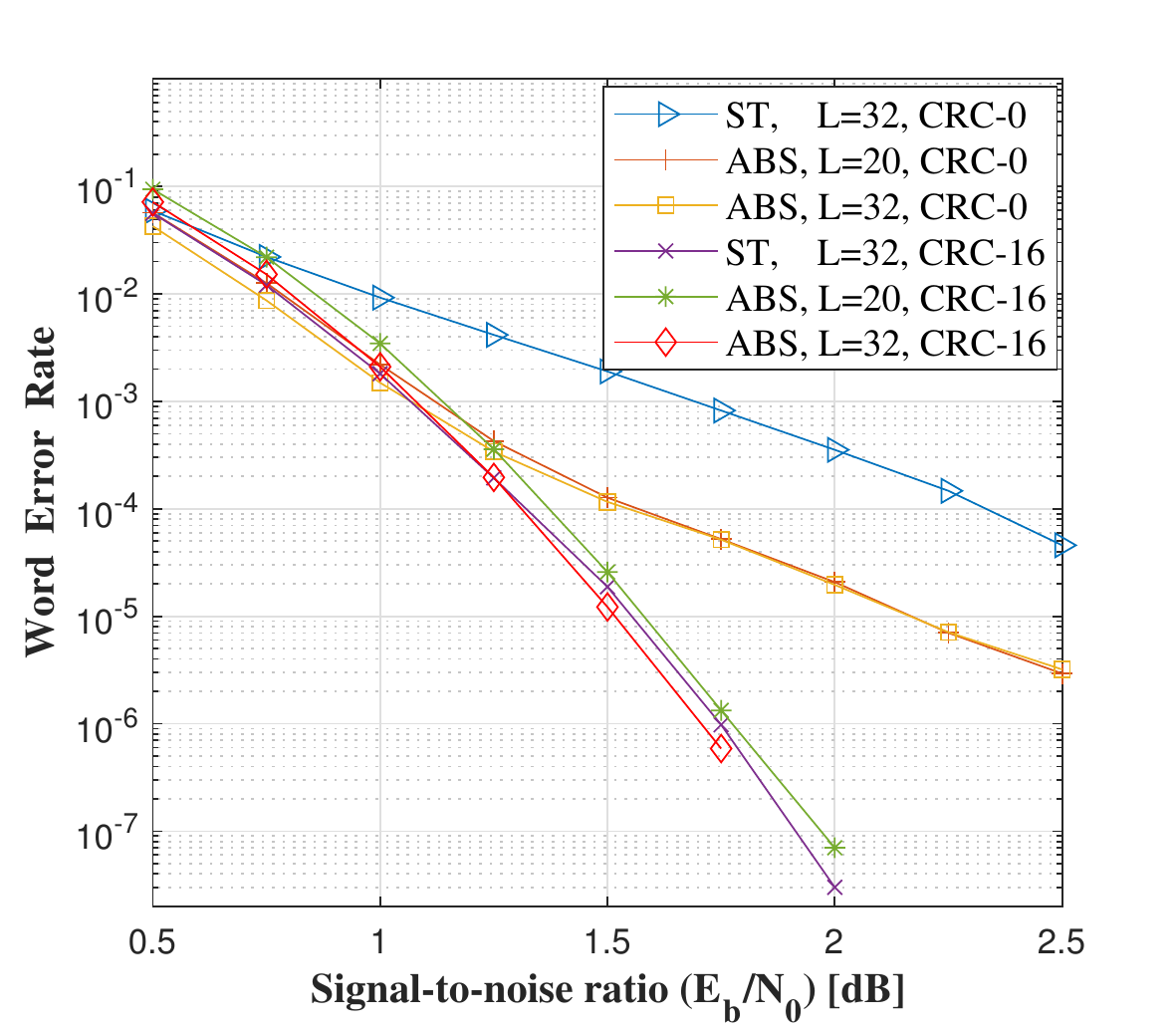}
			\caption{length 2048, dimension 614}
		\end{subfigure}

		\vspace*{0.1in}

		\begin{subfigure}{0.41\linewidth}
			\centering
			\includegraphics[width=\linewidth]{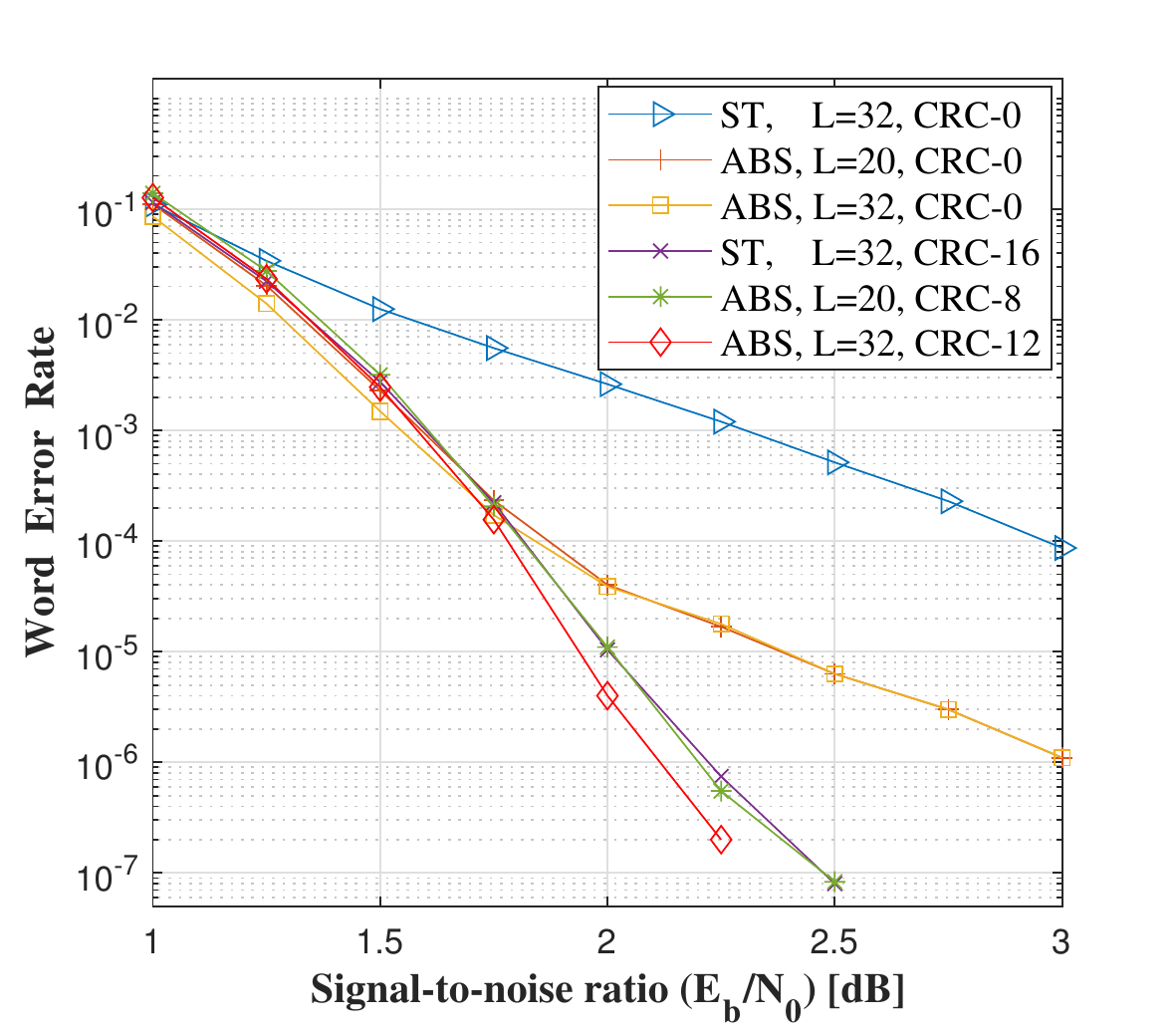}
			\caption{length 2048, dimension 1024}
		\end{subfigure}
		~\hspace*{0.2in}
		\begin{subfigure}{0.41\linewidth}
			\centering
			\includegraphics[width=\linewidth]{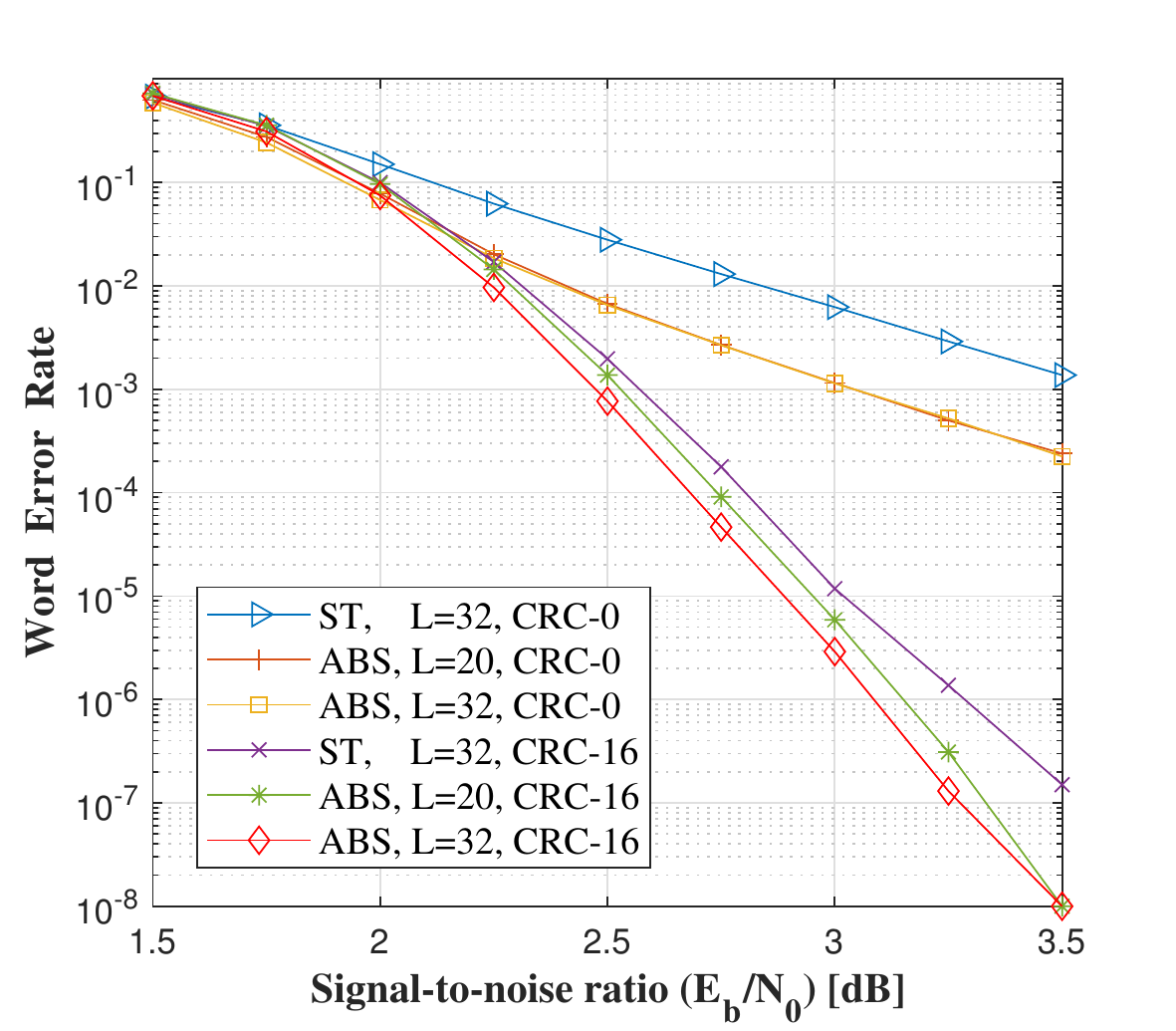}
			\caption{length 2048, dimension 1434}
		\end{subfigure}

		\caption{Comparison between ABS polar codes and standard polar codes over the binary-input AWGN channel.
			The legend ``ABS" refers to ABS polar codes, and ``ST" refers to standard polar codes.
			``CRC-0" means that we do not use CRC.
			The nonzero CRC length is chosen from the set $\{4,8,12,16,20,24\}$ to minimize the decoding error probability.
			The parameter $L$ is the list size.
			For standard polar codes, we always choose $L=32$.
			For ABS polar codes, we test two different list sizes $L=20$ and $L=32$.}
		\label{fig:cp2}
	\end{figure}

	\begin{table}
		\centering
		\begin{tabular}{c|cccccc}
			\hline
			$(n,k)$     & $(256,77)$   & $(256,128)$  & $(256,179)$  & $(512,154)$  & $(512,256)$   & $(512,358)$   \\
			\hline
			ST, $L=32$  & $0.963$ms    & 1.41ms       & 1.73ms       & 1.94ms       & 2.80ms        & 3.54ms        \\

			ABS, $L=20$ & 0.816ms      & 1.24ms       & 1.47ms       & 1.86ms       & 2.66ms        & 3.10ms        \\

			ABS, $L=32$ & 1.29ms       & 1.99ms       & 2.37ms       & 2.93ms       & 4.36ms        & 5.13ms        \\

			\hline
			$(n,k)$     & $(1024,307)$ & $(1024,512)$ & $(1024,717)$ & $(2048,614)$ & $(2048,1024)$ & $(2048,1434)$ \\
			\hline
			ST, $L=32$  & 4.21ms       & 5.75ms       & 7.15ms       & 9.05ms       & 11.7ms        & 14.6ms        \\

			ABS, $L=20$ & 4.32ms       & 5.90ms       & 6.67ms       & 10.6ms       & 12.6ms        & 14.0ms        \\

			ABS, $L=32$ & 6.63ms       & 9.41ms       & 10.8ms       & 16.7ms       & 20.1ms        & 23.2ms        \\
			\hline
		\end{tabular}
		\caption{Comparison of the decoding time over the binary-input AWGN channel with $E_b/N_0=2\dB$.
			The row starting with $(n,k)$ lists the code length and code dimension we have tested.
			The row starting with ``ST, $L=32$" lists the decoding time of the CRC-aided SCL decoder for standard polar codes with list size $32$.
			The row starting with ``ABS, $L=20$" lists the decoding time of the CRC-aided SCL decoder for ABS polar codes with list size $20$.
			The row starting with ``ABS, $L=32$" lists the decoding time of the CRC-aided SCL decoder for ABS polar codes with list size $32$.
			The time unit ``ms" is $10^{-3}$s.}
		\label{tb:time}
	\end{table}

	We conduct extensive simulations to compare the performance of the ABS polar codes and the standard polar codes over the binary-input AWGN channel with various choices of parameters.
	We have tested the performance for $4$ different choices of code length $256,512,1024,2048$.
	For each choice of code length, we test $3$ different code rates $0.3, 0.5$ and $0.7$.
	The comparison of decoding error probability is given in Fig.~\ref{fig:cp1} and Fig.~\ref{fig:cp2}.
	Specifically, Fig.~\ref{fig:cp1} contains the plots for code length $256$ and $512$; Fig.~\ref{fig:cp2} contains the plots for code length $1024$ and $2048$.
	The comparison of decoding time is given in Table~\ref{tb:time}.

	In Fig.~\ref{fig:cp1} and Fig.~\ref{fig:cp2}, for each choice of code length and code dimension, we compare the decoding error probability of the following $6$ decoders: (1) SCL decoder for standard polar codes with list size $32$ and no CRC; (2) SCL decoder for ABS polar codes with list size $20$ and no CRC; (3) SCL decoder for ABS polar codes with list size $32$ and no CRC; (4) SCL decoder for standard polar codes with list size $32$ and optimal CRC length; (5) SCL decoder for ABS polar codes with list size $20$ and optimal CRC length; (6) SCL decoder for ABS polar codes with list size $32$ and optimal CRC length.
	The optimal CRC length is chosen from the set $\{4,8,12,16,20,24\}$ to minimize the decoding error probability.
	For standard polar codes, we use the classic SCL decoder presented in Section~\ref{sect:polar_decoding}, {\bf not} the new SCL decoder presented in Section~\ref{sect:ST_decoder_DB}.
	For ABS polar codes, we use the SCL decoder presented in Section~\ref{sect:ABS_decoder}.

	Note that in a previous arXiv version and the ISIT version \cite{Li2022ISIT} of this paper, we used a different choice of CRC length. More specifically, for cases (4)--(6) in the above paragraph, we used CRC length $8$ for all choices of code length and code dimension in the previous versions. In contrast, we use the optimal CRC length in this version, and the optimal CRC length varies with the code length and the code dimension.

	From Fig.~\ref{fig:cp1} and Fig.~\ref{fig:cp2} we can see that the performance of ABS polar codes is consistently better than standard polar codes if we set the list size to be $32$ for the CRC-aided SCL decoders of both codes.
	More specifically, for all $12$ choices of $(n,k)$, the improvement of ABS polar codes over standard polar codes ranges from $0.15\dB$ to $0.3\dB$.
	Even if we reduce the list size of ABS polar codes to be $20$ and maintain the list size of standard polar codes to be $32$, ABS polar codes still demonstrate better performance for most choices of parameters, and the improvement over standard polar codes is up to $0.15\dB$ in this case.
	Next let us compare the performance of ABS polar codes and standard polar codes when neither of them uses CRC.
	When there is no CRC, the performance of ABS polar codes with list size $20$ is more or less the same as that of ABS polar codes with list size $32$.
	Again, ABS polar codes consistently outperform standard polar codes for all $12$ choices of $(n,k)$.
	This time the improvement over standard polar codes is up to $1.1\dB$.

	In Table~\ref{tb:time}, we only compare the decoding time of the SCL decoders with CRC length $8$.
	From Table~\ref{tb:time}, we can see that the decoding time of the SCL decoder for ABS polar codes with list size $20$ is more or less the same as the decoding time of the SCL decoder for standard polar codes with list size $32$.
	More precisely, for $8$ out of $12$ choices of $(n,k)$, the SCL decoder for ABS polar codes with list size $20$ runs faster.
	For the other $4$ choices of $(n,k)$, the SCL decoder for standard polar codes with list size $32$ runs faster.
	If we set the list size to be $32$ for both the standard polar codes and the ABS polar codes, then Table~\ref{tb:time} tells us that the decoding time of ABS polar codes is longer than that of standard polar codes by roughly $60\%$.

	In conclusion, when we use list size $32$ for the CRC-aided SCL decoders of both codes, ABS polar codes consistently outperform standard polar codes by $0.15\dB$---$0.3\dB$, but the decoding time of ABS polar decoder is longer than that of standard polar codes by roughly $60\%$.
	If we use list size $20$ for ABS polar codes and maintain the list size to be $32$ for standard polar codes, then the decoding time is more or less the same for these two codes, and ABS polar codes still outperform standard polar codes for most choices of parameters.
	In this case, the improvement over standard polar codes is up to $0.15\dB$.

	As a final remark, the implementations of all the algorithms in this paper are available at the website
	\texttt{https://github.com/PlumJelly/ABS-Polar}

	\section*{Acknowledgement}
	In the implementation of our decoding algorithm, we have learned a lot from the GitHub project \texttt{https://github.com/kshabunov/ecclab} maintained by Kirill Shabunov.
	Shabunov's GitHub project mainly presents the implementation of the Reed-Muller decoder proposed in \cite{Dumer06}.
	Due to the similarity between (ABS) polar codes and Reed-Muller codes, some of the accelerating techniques for Reed-Muller decoders can also be used to speed up (ABS) polar decoders.

	\appendices

	\section{The proof of Lemma~\ref{lemma:recur_ST_DB}} \label{ap:lm1}

	Let $(U_1,\dots,U_n),(X_1,\dots,X_n)$ and $(Y_1,\dots,Y_n)$ be the random vectors defined in Fig.~\ref{fig:bit_channels_polar}.
	Define a new vector $(\widetilde{U}_1,\dots,\widetilde{U}_n)$ as follows:
$$
\widetilde{U}_{2i-1} = U_{2i-1}+U_{2i} \text{~and~} \widetilde{U}_{2i}=U_{2i} \text{~for all~} 1\le i\le n/2.
$$
	Since $\mathbf{G}_n^{\polar}= \mathbf{G}_{n/2}^{\polar} \otimes \mathbf{G}_2^{\polar}$, we have
	\begin{align*}
		 & (X_1,X_3,X_5,\dots,X_{n-1})=(\widetilde{U}_1,\widetilde{U}_3,\widetilde{U}_5,\dots,\widetilde{U}_{n-1}) \mathbf{G}_{n/2}^{\polar}, \\
		 & (X_2,X_4,X_6,\dots,X_{n})=(\widetilde{U}_2,\widetilde{U}_4,\widetilde{U}_6,\dots,\widetilde{U}_{n}) \mathbf{G}_{n/2}^{\polar}.
	\end{align*}
	Therefore, the mapping from $\widetilde{U}_{2i-1},\widetilde{U}_{2i+1}$ to $\widetilde{U}_1,\widetilde{U}_3,\dots,\widetilde{U}_{2i-3}, Y_1,Y_3,\dots,Y_{n-1}$ is $V_i^{(n/2)}$, and the channel mapping from $\widetilde{U}_{2i},\widetilde{U}_{2i+2}$ to $\widetilde{U}_2,\widetilde{U}_4,\dots,\widetilde{U}_{2i-2}, Y_2,Y_4,\dots,Y_{n}$ is also $V_i^{(n/2)}$.
	Moreover, the two random vectors $(\widetilde{U}_1,\widetilde{U}_3,\dots,\widetilde{U}_{n-1},Y_1,Y_3,\dots,Y_{n-1})$ and $(\widetilde{U}_2,\widetilde{U}_4,\dots,\widetilde{U}_{n},Y_2,Y_4,\dots,Y_{n})$ are independent.
	As a consequence,
	\begin{align*}
		                 & V_{2i-1}^{(n)}(y_1,y_2,\dots,y_n,u_1,u_2,\dots,u_{2i-2}|u_{2i-1},u_{2i})                                                                                                                                                               \\
		=                & \bP_{Y_1,Y_2,\dots,Y_n,U_1,U_2,\dots,U_{2i-2}|U_{2i-1},U_{2i}}(y_1,y_2,\dots,y_n,u_1,u_2,\dots,u_{2i-2}|u_{2i-1},u_{2i})                                                                                                               \\
		=                & \frac{1}{4} \sum_{u_{2i+1},u_{2i+2}\in\{0,1\}} \bP_{Y_1,Y_2,\dots,Y_n,U_1,U_2,\dots,U_{2i-2}|U_{2i-1},U_{2i},U_{2i+1},U_{2i+2}}(y_1,y_2,\dots,y_n,                                                                                     \\
		                 & \hspace*{2.8in} u_1,u_2,\dots,u_{2i-2}|u_{2i-1},u_{2i},u_{2i+1},u_{2i+2})                                                                                                                                                              \\
		\overset{(a)}{=} & \frac{1}{4} \sum_{u_{2i+1},u_{2i+2}\in\{0,1\}} \bP_{Y_1,Y_2,\dots,Y_n,\widetilde{U}_1,\widetilde{U}_2,\dots,\widetilde{U}_{2i-2}|\widetilde{U}_{2i-1},\widetilde{U}_{2i},\widetilde{U}_{2i+1},\widetilde{U}_{2i+2}}(y_1,y_2,\dots,y_n, \\
		                 & \hspace*{2.8in} \widetilde{u}_1,\widetilde{u}_2,\dots,\widetilde{u}_{2i-2}|\widetilde{u}_{2i-1},\widetilde{u}_{2i},\widetilde{u}_{2i+1},\widetilde{u}_{2i+2})                                                                          \\
		=                & \frac{1}{4} \sum_{u_{2i+1},u_{2i+2}\in\{0,1\}} \Big( \bP_{Y_1,Y_3,\dots,Y_{n-1},\widetilde{U}_1,\widetilde{U}_3,\dots,\widetilde{U}_{2i-3}|\widetilde{U}_{2i-1},\widetilde{U}_{2i+1}}(y_1,y_3,\dots,y_{n-1},                           \\
		                 & \hspace*{3.2in} \widetilde{u}_1,\widetilde{u}_3,\dots,\widetilde{u}_{2i-3}|\widetilde{u}_{2i-1},\widetilde{u}_{2i+1})                                                                                                                  \\
		                 & \hspace*{1.2in} \cdot \bP_{Y_2,Y_4,\dots,Y_n,\widetilde{U}_2,\widetilde{U}_4,\dots,\widetilde{U}_{2i-2}|\widetilde{U}_{2i},\widetilde{U}_{2i+2}}(y_2,y_4,\dots,y_n,                                                                    \\
		                 & \hspace*{3.2in} \widetilde{u}_2,\widetilde{u}_4,\dots,\widetilde{u}_{2i-2}|\widetilde{u}_{2i},\widetilde{u}_{2i+2})  \Big)                                                                                                             \\
		=                & \frac{1}{4} \sum_{u_{2i+1},u_{2i+2}\in\{0,1\}} \Big( V_i^{(n/2)}(y_1,y_3,\dots,y_{n-1}, \widetilde{u}_1,\widetilde{u}_3,\dots,\widetilde{u}_{2i-3}|\widetilde{u}_{2i-1},\widetilde{u}_{2i+1})                                          \\
		                 & \hspace*{1.2in} \cdot V_i^{(n/2)} (y_2,y_4,\dots,y_n, \widetilde{u}_2,\widetilde{u}_4,\dots,\widetilde{u}_{2i-2}|\widetilde{u}_{2i},\widetilde{u}_{2i+2}) \Big)                                                                        \\
		=                & \frac{1}{4} \sum_{u_{2i+1},u_{2i+2}\in\{0,1\}} \Big( V_i^{(n/2)}(y_1,y_3,\dots,y_{n-1}, \widetilde{u}_1,\widetilde{u}_3,\dots,\widetilde{u}_{2i-3}|u_{2i-1}+u_{2i},u_{2i+1}+u_{2i+2})                                                  \\
		                 & \hspace*{1.2in} \cdot V_i^{(n/2)} (y_2,y_4,\dots,y_n, \widetilde{u}_2,\widetilde{u}_4,\dots,\widetilde{u}_{2i-2}|u_{2i},u_{2i+2}) \Big)                                                                                                \\
		=                & (V_i^{(n/2)})^\triangledown (y_1,y_2,\dots,y_n,\widetilde{u}_1,\widetilde{u}_2,\dots,\widetilde{u}_{2i-2}|u_{2i-1},u_{2i}),
	\end{align*}
	where $\widetilde{u}_1,\widetilde{u}_2,\dots,\widetilde{u}_{2i+2}$ in equality $(a)$ are defined as $\widetilde{u}_{2j-1}=u_{2j-1}+u_{2j}$ and $\widetilde{u}_{2j}=u_{2j}$ for $1\le j\le i+1$.
	Finally, by noting that there is a one-to-one mapping between $y_1,y_2,\dots,y_n,u_1,u_2,\dots,u_{2i-2}$ in the first line and $y_1,y_2,\dots,y_n,\widetilde{u}_1,\widetilde{u}_2,\dots,\widetilde{u}_{2i-2}$ in the last line, we conclude that $V_{2i-1}^{(n)} = (V_i^{(n/2)})^\triangledown$.
	The proofs of $V_{2i}^{(n)} = (V_i^{(n/2)})^\lozenge$ and $V_{2i+1}^{(n)} = (V_i^{(n/2)})^\vartriangle$ are similar.
	We include them here for the sake of completeness.
	\begin{align*}
		                 & V_{2i}^{(n)}(y_1,y_2,\dots,y_n,u_1,u_2,\dots,u_{2i-1}|u_{2i},u_{2i+1})                                                                                                                                                        \\
		=                & \bP_{Y_1,Y_2,\dots,Y_n,U_1,U_2,\dots,U_{2i-1}|U_{2i},U_{2i+1}}(y_1,y_2,\dots,y_n,u_1,u_2,\dots,u_{2i-1}|u_{2i},u_{2i+1})                                                                                                      \\
		=                & \frac{1}{4} \sum_{u_{2i+2}\in\{0,1\}} \bP_{Y_1,Y_2,\dots,Y_n,U_1,U_2,\dots,U_{2i-2}|U_{2i-1},U_{2i},U_{2i+1},U_{2i+2}}(y_1,y_2,\dots,y_n,                                                                                     \\
		                 & \hspace*{2.8in} u_1,u_2,\dots,u_{2i-2}|u_{2i-1},u_{2i},u_{2i+1},u_{2i+2})                                                                                                                                                     \\
		\overset{(a)}{=} & \frac{1}{4} \sum_{u_{2i+2}\in\{0,1\}} \bP_{Y_1,Y_2,\dots,Y_n,\widetilde{U}_1,\widetilde{U}_2,\dots,\widetilde{U}_{2i-2}|\widetilde{U}_{2i-1},\widetilde{U}_{2i},\widetilde{U}_{2i+1},\widetilde{U}_{2i+2}}(y_1,y_2,\dots,y_n, \\
		                 & \hspace*{2.8in} \widetilde{u}_1,\widetilde{u}_2,\dots,\widetilde{u}_{2i-2}|\widetilde{u}_{2i-1},\widetilde{u}_{2i},\widetilde{u}_{2i+1},\widetilde{u}_{2i+2})                                                                 \\
		=                & \frac{1}{4} \sum_{u_{2i+2}\in\{0,1\}} \Big( \bP_{Y_1,Y_3,\dots,Y_{n-1},\widetilde{U}_1,\widetilde{U}_3,\dots,\widetilde{U}_{2i-3}|\widetilde{U}_{2i-1},\widetilde{U}_{2i+1}}(y_1,y_3,\dots,y_{n-1},                           \\
		                 & \hspace*{3.2in} \widetilde{u}_1,\widetilde{u}_3,\dots,\widetilde{u}_{2i-3}|\widetilde{u}_{2i-1},\widetilde{u}_{2i+1})                                                                                                         \\
		                 & \hspace*{1.2in} \cdot \bP_{Y_2,Y_4,\dots,Y_n,\widetilde{U}_2,\widetilde{U}_4,\dots,\widetilde{U}_{2i-2}|\widetilde{U}_{2i},\widetilde{U}_{2i+2}}(y_2,y_4,\dots,y_n,                                                           \\
		                 & \hspace*{3.2in} \widetilde{u}_2,\widetilde{u}_4,\dots,\widetilde{u}_{2i-2}|\widetilde{u}_{2i},\widetilde{u}_{2i+2})  \Big)                                                                                                    \\
		=                & \frac{1}{4} \sum_{u_{2i+2}\in\{0,1\}} \Big( V_i^{(n/2)}(y_1,y_3,\dots,y_{n-1}, \widetilde{u}_1,\widetilde{u}_3,\dots,\widetilde{u}_{2i-3}|\widetilde{u}_{2i-1},\widetilde{u}_{2i+1})                                          \\
		                 & \hspace*{1.2in} \cdot V_i^{(n/2)} (y_2,y_4,\dots,y_n, \widetilde{u}_2,\widetilde{u}_4,\dots,\widetilde{u}_{2i-2}|\widetilde{u}_{2i},\widetilde{u}_{2i+2}) \Big)                                                               \\
		=                & \frac{1}{4} \sum_{u_{2i+2}\in\{0,1\}} \Big( V_i^{(n/2)}(y_1,y_3,\dots,y_{n-1}, \widetilde{u}_1,\widetilde{u}_3,\dots,\widetilde{u}_{2i-3}|u_{2i-1}+u_{2i},u_{2i+1}+u_{2i+2})                                                  \\
		                 & \hspace*{1.2in} \cdot V_i^{(n/2)} (y_2,y_4,\dots,y_n, \widetilde{u}_2,\widetilde{u}_4,\dots,\widetilde{u}_{2i-2}|u_{2i},u_{2i+2}) \Big)                                                                                       \\
		=                & (V_i^{(n/2)})^\lozenge (y_1,y_2,\dots,y_n,\widetilde{u}_1,\widetilde{u}_2,\dots,\widetilde{u}_{2i-2},u_{2i-1}|u_{2i},u_{2i+1}),
	\end{align*}
	where $\widetilde{u}_1,\widetilde{u}_2,\dots,\widetilde{u}_{2i+2}$ in equality $(a)$ are defined the same way as above.
	This proves $V_{2i}^{(n)} = (V_i^{(n/2)})^\lozenge$.
	\begin{align*}
		                 & V_{2i+1}^{(n)}(y_1,y_2,\dots,y_n,u_1,u_2,\dots,u_{2i}|u_{2i+1},u_{2i+2})                                                                                                                             \\
		=                & \bP_{Y_1,Y_2,\dots,Y_n,U_1,U_2,\dots,U_{2i}|U_{2i+1},U_{2i+2}}(y_1,y_2,\dots,y_n,u_1,u_2,\dots,u_{2i}|u_{2i+1},u_{2i+2})                                                                             \\
		=                & \frac{1}{4}  \bP_{Y_1,Y_2,\dots,Y_n,U_1,U_2,\dots,U_{2i-2}|U_{2i-1},U_{2i},U_{2i+1},U_{2i+2}}(y_1,y_2,\dots,y_n,                                                                                     \\
		                 & \hspace*{2.8in} u_1,u_2,\dots,u_{2i-2}|u_{2i-1},u_{2i},u_{2i+1},u_{2i+2})                                                                                                                            \\
		\overset{(a)}{=} & \frac{1}{4}  \bP_{Y_1,Y_2,\dots,Y_n,\widetilde{U}_1,\widetilde{U}_2,\dots,\widetilde{U}_{2i-2}|\widetilde{U}_{2i-1},\widetilde{U}_{2i},\widetilde{U}_{2i+1},\widetilde{U}_{2i+2}}(y_1,y_2,\dots,y_n, \\
		                 & \hspace*{2.8in} \widetilde{u}_1,\widetilde{u}_2,\dots,\widetilde{u}_{2i-2}|\widetilde{u}_{2i-1},\widetilde{u}_{2i},\widetilde{u}_{2i+1},\widetilde{u}_{2i+2})                                        \\
		=                & \frac{1}{4}   \bP_{Y_1,Y_3,\dots,Y_{n-1},\widetilde{U}_1,\widetilde{U}_3,\dots,\widetilde{U}_{2i-3}|\widetilde{U}_{2i-1},\widetilde{U}_{2i+1}}(y_1,y_3,\dots,y_{n-1},                                \\
		                 & \hspace*{3.2in} \widetilde{u}_1,\widetilde{u}_3,\dots,\widetilde{u}_{2i-3}|\widetilde{u}_{2i-1},\widetilde{u}_{2i+1})                                                                                \\
		                 & \hspace*{1.2in} \cdot \bP_{Y_2,Y_4,\dots,Y_n,\widetilde{U}_2,\widetilde{U}_4,\dots,\widetilde{U}_{2i-2}|\widetilde{U}_{2i},\widetilde{U}_{2i+2}}(y_2,y_4,\dots,y_n,                                  \\
		                 & \hspace*{3.2in} \widetilde{u}_2,\widetilde{u}_4,\dots,\widetilde{u}_{2i-2}|\widetilde{u}_{2i},\widetilde{u}_{2i+2})                                                                                  \\
		=                & \frac{1}{4}   V_i^{(n/2)}(y_1,y_3,\dots,y_{n-1}, \widetilde{u}_1,\widetilde{u}_3,\dots,\widetilde{u}_{2i-3}|\widetilde{u}_{2i-1},\widetilde{u}_{2i+1})                                               \\
		                 & \hspace*{1.2in} \cdot V_i^{(n/2)} (y_2,y_4,\dots,y_n, \widetilde{u}_2,\widetilde{u}_4,\dots,\widetilde{u}_{2i-2}|\widetilde{u}_{2i},\widetilde{u}_{2i+2})                                            \\
		=                & \frac{1}{4}   V_i^{(n/2)}(y_1,y_3,\dots,y_{n-1}, \widetilde{u}_1,\widetilde{u}_3,\dots,\widetilde{u}_{2i-3}|u_{2i-1}+u_{2i},u_{2i+1}+u_{2i+2})                                                       \\
		                 & \hspace*{1.2in} \cdot V_i^{(n/2)} (y_2,y_4,\dots,y_n, \widetilde{u}_2,\widetilde{u}_4,\dots,\widetilde{u}_{2i-2}|u_{2i},u_{2i+2})                                                                    \\
		=                & (V_i^{(n/2)})^\vartriangle (y_1,y_2,\dots,y_n,\widetilde{u}_1,\widetilde{u}_2,\dots,\widetilde{u}_{2i-2},u_{2i-1},u_{2i}|u_{2i+1},u_{2i+2}),
	\end{align*}
	where $\widetilde{u}_1,\widetilde{u}_2,\dots,\widetilde{u}_{2i+2}$ in equality $(a)$ are defined the same way as above.
	This proves $V_{2i+1}^{(n)} = (V_i^{(n/2)})^\vartriangle$ and completes the proof of Lemma~\ref{lemma:recur_ST_DB}.
	\qed

	\bibliographystyle{IEEEtran}
	\bibliography{ABS}
\end{document}